\crefname{subsection}{subsection}{subsections}
\newtheorem{theorem}{Theorem}[section]
\newtheorem{proposition}[theorem]{Proposition}%
\newtheorem{lemma}[theorem]{Lemma}%
\newtheorem{corollary}[theorem]{Corollary}%
\theoremstyle{thmstyletwo}%
\newtheorem{example}[theorem]{Example}%
\newtheorem{remark}[theorem]{Remark}%
\theoremstyle{thmstylethree}%
\newtheorem{definition}[theorem]{Definition}%
\newcommand\pare[1]{\left(#1\right)}
\newcommand\curl[1]{\left\{#1\right\}}
\newcommand\mcal{\mathcal}
\newcommand\mbf{\mathbf}
\newcommand\mbb{\mathbb}
\newcommand\msf{\mathsf}
\newcommand\mrm{\mathrm}
\newcommand\bx{\bm{x}}
\newcommand\bc{\mbf{c}}
\DeclareMathAlphabet{\mbfsf}{\encodingdefault}{\sfdefault}{bx}{n}
\newcommand\bi{\bm{i}}
\newcommand\bj{\bm{j}}
\newcommand\bs{\bm{s}}
\newcommand\bmf{\bm{f}}
\newcommand\bw{\bm{w}}
\newcommand\bphi{\bm{\vphi}}
\newcommand\bpsi{\bm{\psi}}
\newcommand\bmr{\bm{r}}
\newcommand\tiM{\tilde{M}}
\newcommand\tim{\tilde{m}}
\newcommand\seqw{w}
\newcommand\bseqw{\bw}
\newcommand\bone{\mbf{1}}
\newcommand\N{\mbb{N}}
\newcommand\Z{\mbb{Z}}
\newcommand\Q{\mbb{Q}}
\newcommand\Kbar{\overline{\mbb{K}}}
\newcommand\K{\mbb{K}}
\newcommand\bbQ{\mbb{Q}}
\newcommand\cG{\mcal{G}}
\newcommand\cH{\mcal{H}}
\newcommand\cI{\mcal{I}}
\newcommand\cJ{\mcal{J}}
\newcommand\cT[1][]{
  {\ifthenelse{\equal{#1}{}}{\mcal{T}}{\mcal{T}\!\!\pare{#1}}}}
\newcommand\rT{\mrm{T}}
\newcommand\h{\mrm{h}}
\newcommand\sM{\msf{M}}
\newcommand\vphi{\varphi}
\newcommand\ideal[1]{\left\langle#1\right\rangle}
\newcommand\gb{Gröbner basis\xspace}
\newcommand\gbs{Gröbner bases\xspace}
\newcommand\Fquatre{\textsc{\texorpdfstring{F\textsubscript{4}}{F4}}\xspace}
\newcommand\FquatreSAT{\textsc{\texorpdfstring{F\textsubscript{4}SAT}{F4SAT}}\xspace}
\newcommand\Fcinq{\textsc{\texorpdfstring{F\textsubscript{5}}{F5}}\xspace}
\newcommand\Ffour{\Fquatre}
\newcommand\FfourSAT{\FquatreSAT}
\newcommand\FGLM{\textsc{FGLM}\xspace}
\newcommand\spFGLM{\textsc{Sparse-FGLM}\xspace}
\newcommand\spFGLMcol{\textsc{Sparse-FGLM-colon}\xspace}
\newcommand\resp{\mbox{resp.}\xspace}
\DeclareMathOperator\DRL{\textsc{drl}}
\DeclareMathOperator\LEX{\textsc{lex}}
\newcommand\ldrl{\mathrel{\prec_{\DRL}}}
\newcommand\llex{\mathrel{\prec_{\LEX}}}
\newcommand\cGdrl{\cG_{\DRL}}
\newcommand\cGlex{\cG_{\LEX}}
\newcommand\cHlex{\cH_{\LEX}}
\newcommand\Sdrl{S_{\DRL}}
\newcommand\Tlex{T_{\LEX}}
\newcommand\adots{\mathinner{%
  \mkern1mu\raise1pt\hbox{.}%
  \mkern2mu\raise4pt\hbox{.}%
  \mkern2mu\raise7pt\vbox{\kern7pt\hbox{.}}\mkern1mu}} 
\newcommand\lcm{\textsc{lcm}\xspace}
\DeclareMathOperator\leadmon{\textsc{lm}}
\DeclareMathOperator\leadcoef{\textsc{lc}}
\DeclareMathOperator\leadterm{\textsc{lt}}
\newcommand\LM[1][\prec]{\leadmon_{#1}}
\newcommand\LC[1][\prec]{\leadcoef_{#1}}
\newcommand\LT[1][\prec]{\leadterm_{#1}}
\DeclareMathOperator\Staircase{Staircase}
\DeclareMathOperator\supp{supp}
\DeclareMathOperator\HS{HS}
\DeclareMathOperator\HP{HP}
\DeclareMathOperator\ComputeMaxDegree{\textsf{ComputeMaxDegree}}
\DeclareMathOperator\LinearizeColonIdeal{\textsf{LinearizeColonIdeal}}
\DeclareMathOperator\NF{NF}
\DeclareMathOperator\Berlekamp{Berlekamp--Massey}
\DeclareMathOperator{\spairop}{sp}
\newcommand{\spair}[3][\prec]{\spairop_{#1}\pare{{#2,#3}}}
\newcommand\nf[3]{\NF\pare{#1,#2,#3}}
\newcommand\Card[1]{\#\,#1}
\newcommand\elim{\ensuremath{g_n}\xspace}
\newcommand\param[1]{\ensuremath{g_{#1}}\xspace}
\newcommand\wrt{w.r.t.\xspace}
\newcommand\ie{i.e.\xspace}
\newcommand\msolve{\textsc{msolve}\xspace}
\newcommand\Maple{\textsc{Maple}\xspace}
\newcommand\Singular{\textsc{Singular}\xspace}
\newcommand\colid[2]{#1:\ideal{#2}}
\newcommand\satid[2]{\colid{#1}{#2}^{\infty}}
\newcommand\diff[2]{\frac{\partial #1}{\partial #2}}
\colorlet{diffRed}{red!80!black}
\newcommand\nameSigSigpDeg[4]{
  #1 &$#2$ &$#3$ &$#4$
}
\newcommand\FquatreSatMatrixSPFGLMTotalMapleBMapleFMapleT[8]{
  &$#1$ &$#2$ &$#3$ &$#4$ &$#5$ &$#6$ &$#7$ &$#8$
}
\title{New efficient algorithms for computing Gröbner bases of
  saturation ideals (\textsc{\texorpdfstring{F\textsubscript{4}SAT}{F4SAT}})
  and colon ideals (\textsc{Sparse-FGLM-colon})}
\author[1]{Jérémy Berthomieu}
\author[2]{Christian Eder}
\author[1]{Mohab Safey El Din}
\affil[1]{Sorbonne Université, CNRS,
  LIP6,
  F-75005, Paris, France}
\affil[2]{Rheinland-Pfälzische Technische Universität
  Kaiserslautern Landau,
  Kaiserslautern, Germany}
\begin{document}

\maketitle


\abstract{This paper is concerned with linear algebra based methods for solving exactly
polynomial systems through so-called Gr\"obner bases, which allow one to compute
modulo the polynomial ideal generated by the input equations. This is a topical
issue in non-linear algebra and more broadly in computational mathematics
because of its numerous applications in engineering and computing sciences. Such
applications often require geometric computing features such as representing the
closure of the set difference of two solution sets to given polynomial systems.
Algebraically, this boils down to computing Gr\"obner bases of colon and/or
saturation polynomial ideals. In this paper, we describe and analyze new
Gr\"obner bases algorithms for this task and present implementations which are
more efficient by several orders of magnitude than the state-of-the-art
software.


}




\section{Introduction}\label{s:intro}
Let $\bmf = (f_1, \ldots, f_s)$ and $\varphi$ be polynomials in the polynomial
ring $\K[x_1, \ldots, x_n]$ where $\K$ is a field. Further, we denote by $I =
\langle \bmf \rangle = \langle f_1, \ldots, f_s \rangle$ the polynomial ideal
generated by $f_1, \ldots, f_s$ and by $V(I)\subset \Kbar^n$ the algebraic set
associated to $I$ (where $\Kbar$ is an algebraic closure of $\K$).

We consider the following computational problem: compute a \gb
associated to the colon (\resp saturated) ideal of $I$ by $\varphi$, i.e.
\[
  \colid{I}{\varphi} = \{h \mid h \varphi\in I\}\quad \text{ (\resp } \quad \satid{I}
  {\varphi} = \{h \mid \exists k \in \N \ h \varphi^k\in
  I\}\text{)} .
\]
By e.g.~\cite[Chap. 4]{CoxLOS2015}, the algebraic set
$V(\satid{I}{\varphi})\subset \Kbar^n$ is the Zariski closure of the
set difference 
$V(I) \setminus V(\varphi)$ and there exists $N\in \N$ such that
$\colid{I}{\varphi^N} = \colid{I}{\varphi^{N+1}} = \cdots = \satid{I}{\varphi}$.

Computing algebraic representations of saturated ideals arises in many
applications ranging from experimental mathematics to engineering sciences (see
e.g.~\cite{Steiner, garciafontan:hal-03499974, pascualescudero:hal-03070525})
since some natural algebraic modelings come with parasite solutions which one
excludes through some saturation process. For instance, modeling that some $p
\times q$ matrix with polynomial entries has rank $r$ through the simultaneous
vanishing of its $(r+1)$-minors will include those points at which the matrix
has rank less than $r$.

In the paper, we design new efficient algorithms for computing
\emph{\gbs} of such ideals, given as input $\bmf = (f_1, \ldots,
f_s)$, 
$\varphi$ and some admissible monomial order $\prec$ over the monomials
in $\K[x_1, \ldots, x_n]$. 

Recall that \gbs are finite generating sets of polynomial ideals
capturing their combinatorial and algebraic properties. They allow to compute
\emph{modulo} the ideal they generate, hence to decide the membership of a
polynomial to the ideal under consideration. \gbs algorithms are a
classical and versatile tool for polynomial system solving, non-linear algebra
and geometry, implemented in most of computer algebra systems.

\paragraph{Prior results.}~ The first algorithm for computing \gbs
is introduced by Buchberger in~\cite{bGroebner1965}. It is based on the
so-called Buchberger's criterion which provides an effective way to decide
whether a given polynomial sequence is already a \gb of the ideal it
generates. Modern algorithms such as Faugère's \Fquatre~\cite{Faugere1999} and
\Fcinq~\cite{Faugere2002} (see also~\cite{EF17}) algorithms actually use the
connection of \gb theories with Macaulay's constructions for the
multivariate resultant (see e.g. \cite{Lazard1983}) by considering the
finite-dimensional
vector spaces
\[
  E_d(\bmf) = \{h_1 f_1 + \cdots + h_s f_s \mid \deg(h_i f_i)\leq d \text{ for
    all } 1\leq i \leq s\}
\]
for which a basis with appropriate properties \wrt the given monomial order
is computed through the row echelonization of some Macaulay-like matrix. The way
these linear algebra constructions are generated at each degree $d, d+1, \ldots$
(and so on) plus a termination criterion is done via a connection to the
\gb theory and Buchberger's criterion in \Fquatre. The \Fcinq algorithm
poses a module theoretic view of \gb calculations which allows
one to generate Macaulay-like matrices of maximal rank under some genericity
assumptions as well as a module theoretic transposition of the notion of
critical pairs through the notion of \emph{signature} to handle termination
issues in this context. These two algorithms have been used to solve many
difficult applications and challenges of polynomial system solving (see e.g.
\cite{FaugereJ2003,FaugereJPT2010,Steel2015}). Such algorithms are
usually run with so-called
total degree monomial orders, i.e. those orders which filter monomials
first \wrt their total degrees.

When $I$ has dimension zero (i.e. $V(I)$ is a non-empty finite set) this linear
algebra view of \gb computations is often used in change of order
algorithms since the quotient ring $\K[x_1, \ldots, x_n] / I$ is a
finite-dimensional
vector space. Based on this, the
so-called \FGLM algorithm~\cite{FaugereGLM1993} reduces change of order
algorithms to kernel computations. Under some extra assumptions, the so-called
\spFGLM algorithm~\cite{FaugereM2011,FaugereM2017} makes the connection
with relation reconstructions.

Despite these developments, computing saturations of polynomial ideals is
currently done using the above \gb algorithms \emph{as a black box}.

Using Rabinowitsch trick~\cite{Rabinowitsch1930} and~\cite[Chap.~4, Sec.~4,
Th.~14, (ii)]{CoxLOS2015}, the saturated ideal $\satid{I}{\vphi}$ equals
$\pare{I+\ideal{1-t\vphi}}\cap\K[x_1,\ldots,x_n]$. Thus, computing a \gb of
$I+\ideal{1-t\vphi}$ for a monomial order eliminating $t$ 
and keeping all polynomials not involving $t$ yields a \gb of $\satid{I}{\vphi}$,
see also~\cite[Chap.~3, Sec.~1, Th.~2 and Ex.~6]{CoxLOS2015}.

Moreover, if $I$ is homogeneous, \ie it is spanned by a set of
homogeneous polynomials, Bayer's algorithm~\cite{Bayer1982} allows us to
compute $\satid{I}{x_n}$. If it is not, then one can still recover a
\gb of $\satid{I}{x_n}$ using the algorithm below, still called
Bayer's algorithm:
\begin{enumerate}
\item Homogenize the input polynomials $f_1, \ldots, f_s$ with a new variable
  $x_0$ yielding homogeneous polynomials $\bmf^{\h} = (f^{\h}_1, \ldots, f^{\h}_s)$;
\item compute a \gb $G^{\h}$ for $\bmf^{\h}$ and a total degree monomial
  order (called graded reverse lexicographical order) where
  $x_n$ is smaller
  than the other variables $x_0,x_1, \ldots, x_{n-1}$;
\item factor out from all polynomials in $G^{\h}$ the highest possible power of
  $x_n$;
\item set $x_0$ to $1$ in these obtained polynomials and return the result. 
\end{enumerate}
When $\varphi \neq x_n$, one just introduces a slack variable $x_{n+1}$ and
computes the saturation of $I+\langle  x_{n+1}-\varphi \rangle$ \wrt $x_{n+1}$.

The above two approaches constitute the state-of-the-art algorithms for
computing saturations of ideals. Note that they do not take advantage of
intermediate data obtained during the \gb computations since these
are used as black boxes.

\paragraph{Main results.} In this paper, we propose new algorithms which
actually compute ``on the fly'' \gbs of saturated ideals through the
linear algebra approaches we sketched above. We design two families of efficient
algorithms which are the counterparts of the \Fquatre and the \FGLM algorithms.
We also present (publicly available) implementations of these algorithms which
are more efficient than the state-of-the-art software in computer algebra
systems by several orders of magnitude.

The first algorithm, named
\FquatreSAT, is a modification of
the \Fquatre algorithm to discover on the fly polynomials in
$\satid{I}{\varphi}$. The core idea is as follows. Recall that, on
input $\bmf = 
(f_1, \ldots, f_s)$ and $\varphi$ in $\K[x_1, \ldots, x_n]$ and a given total degree
monomial order $\prec$, the \Fquatre algorithm roughly computes bases $G_d$ of the
finite-dimensional vector spaces $E_d$, we introduced above, using $G_d$ to
generate a generating family for $E_{d+1}$ (using the notion of critical pairs,
see~\cite{Faugere1999}) and so on. Termination is ensured using Buchberger's
criterion.

We show that, during this process, one can search for polynomials $h$ of maximum
prescribed degree $\delta$ in the colon ideal $\colid{I}{\vphi}$ such
that $h \varphi\in E_d$ using \emph{(i)} the computation of \emph{normal forms} of
$m . \varphi$ where $m$ lies in a set of well-chosen monomials; and \emph{(ii)}
the computation of the kernel of some matrix which is built from the above
normal forms.

This algorithmic strategy allows us to discover on the fly new polynomials in
the colon ideal $\colid{I}{\vphi}$ which are then taken into account early in the whole
computation. Repeating this, with (maybe incomplete) generating sets
of $\colid{I}{\varphi}$ allows us to discover polynomials in
$\colid{\pare{\colid{I}{\vphi}}}{\vphi}=\colid{I}{\vphi^2}$ and so on.

We prove how to complete such a computation and how the above prescribed degree
$\delta$ can be chosen to ensure correctness of the algorithm. 

When the ideal $\satid{I}{\varphi}$ is known to be
zero-dimensional in advance, one can adapt \FGLM-like algorithms, assuming
we have precomputed a \gb for $I$ \wrt some monomial order, to
compute a \gb for $\satid{I}{\varphi}$ \wrt a
so-called lexicographical order (yielding a triangular basis). Here the main
difficulty to overcome is that since \emph{we do not assume that $I$ is
  zero-dimensional}, the vector space $\K[x_1, \ldots, x_n] / I$ is of infinite
dimension. We demonstrate how the change of order can still be realized through
linear algebra techniques which borrow from \FGLM the construction of matrices
representing multiplication operators in the quotient ring from which one can
extract the lexicographical \gb. We show how to use algebraic
properties to reduce the size of such matrices and state the complexity of
our approach when only the matrix representing the multiplication by
the last variable is needed. All in all, this new algorithm
reduces the change of order in this context to the computation of minimal
relations satisfied by sequences of scalars computed from the aforementioned
matrices as well as Hankel linear system solving.

Next, we present our implementation, which we wrote using the \texttt{C}
programming language and which is available in the \msolve
library~\cite{msolve,msolveweb}. We compare it against implementations based on the
Rabinowitsch trick using \gb engine in \msolve (which is one of
the fastest open source implementations), the one in \Maple (which is one
of the fastest in commercial computer algebra systems), and the leading software
for algebra and geometry \Singular. We carefully
analyze the practical behaviors of the new algorithms in this paper. Our
experiments show that on many examples the new algorithms are faster, often by
several orders of magnitude, than the state-of-the-art software alternatives.

\paragraph{Structure of the paper.} \Cref{s:prelim} is devoted to fix
some notation we use about \gbs and recall the basics of \Fquatre and
\FGLM algorithms needed to describe our new algorithms. \Cref{s:F4SAT}
describes the \FquatreSAT algorithm, and its correctness and termination proofs.
\Cref{s:SpFGLMcol} focuses on the \FGLM variant for saturation. Finally,
\Cref{s:implem} presents our implementations and compares it with the
state-of-the-art software. 


\section{Preliminaries}\label{s:prelim}
\subsection{\gbs}
We recall some basic definitions and properties of \gbs. We refer
to~\cite[Chap.~2, 3 and~5]{CoxLOS2015} for more details.

Throughout this paper, let $\K$ be a field and  $0\in\N$.
We denote by $\K[\bx]\coloneqq\K[x_1,\ldots,x_n]$ the polynomial ring in $n$
variables $x_1, \ldots, x_n$ with coefficients in $\K$. A polynomial $f \in
\K[\bx]$ is defined as $f=\sum_{\bs\in\N^n}f_{\bs}\bx^{\bs}$ such that $f_{\bs}
= 0$ for all but finitely many $\bs \in \N^n$. For $f \neq 0$ we define its
support $\supp f=\curl{\bs\in\N^n\,\middle\vert\,f_{\bs}\neq 0}$. Otherwise, by
convention, $\supp 0 =\curl{\mathbf{0}}$.

A monomial order $\prec$ on $\K[\bx]$ is a total order on the set of
monomials such that for all monomials $m, m'$ and
$s$, if $m\preceq m'$, then $m s\preceq m' s$. Furthermore, the monomial orders
in this paper are assumed to be well-orders, i.e. for all monomials $m$ we
have that $1 \preceq m$.

Fix a monomial order $\prec$.
Given a polynomial
$f\in\K[\bx]$, we  define its \emph{leading
  monomial}, denoted by $\LM(f)$, the largest monomial in
$f$ for $\prec$. The \emph{leading coefficient} of $f$, $\LC(f)$, is
the coefficient of $\LM(f)$ and the \emph{leading term} of $f$,
$\LT(f)$ is $\LC(f)\LM(f)$.
For a set $G\subseteq\K[\bx]$, we let
$\LM(G)=\{\LM(f)\,\vert\,\ f\in G\}$.
For an ideal $I \subset \K[\bx]$ we define $\LM(I)$ as the ideal generated by
leading monomials of all elements of $I$.
We recall briefly the definition of a \gb and of its associated staircase.

\begin{definition}
  A set of monomials $S$ is a \emph{staircase} if
  for two monomials $\mu_1$ and $\mu_2$ such that $\mu_1\mu_2\in S$,
  we have
  $\mu_1\in S$ and $\mu_2\in S$.
\end{definition}

\begin{definition}[{\cite[Chap.~2, Sec.~5, Def.~5 and Sec.~7,
    Def.~4]{CoxLOS2015}}]\label{def:staircase}
  Let $I$ be a nonzero ideal of
  $\K[\bx]$
  and let $\prec$ be
  a monomial order.
  A set $\cG\subseteq I$ is a \emph{\gb} of $I$ for $\prec$ if for all $f\in I$,
  there exists $g\in\cG$ such that $\LM(g)\mid\LM(f)$ or, equivalently,
  if $\ideal{\LM(\cG)}=\LM(I)$.
  It is \emph{reduced} if for any $g\in\cG$, $g$ is monic, \ie
  $\LC(g)=1$, and for any $g'\in\cG\setminus\curl{g}$ and any monomial
  $m\in\supp g'$,
  $\LM(g)\nmid m$.

  The \emph{staircase} associated to $\cG$
  is the set of monomials $\Staircase(\cG)$ which are not divisible by
  any $\LM(g)$ for $g\in\cG$, i.e. the complement of $\LM(I)$
  in the set of monomials.
\end{definition}
Once a monomial order $\prec$ is chosen, a monomial basis of the
quotient algebra $\K[\bx]/I$ can be canonically set: it is the set of
monomials that are not leading monomials of polynomials in $I$ \wrt $\prec$. In
other words, this is $\Staircase(\cG)$, where $\cG$ is a \gb of $I$
for $\prec$, see~\cite[Chap.~5, Sec.~3,
Prop.~1]{CoxLOS2015}. Furthermore, if $\K[\bx]/I$ is a
finite-dimensional $\K$-vector space, then 
$I$ is said to be \emph{zero-dimensional} of \emph{degree}
$\dim_{\K}\K[\bx]/I$, otherwise it is \emph{positive-dimensional}.

In this paper, we mainly deal with the lexicographic (LEX, $\llex$) and
degree reverse lexicographic (DRL, $\ldrl$) orders with the convention that
$x_n\prec\cdots\prec x_2\prec x_1$. They are
defined as below:

\begin{description}
\item[LEX:]
  $\bx^{\bi}\llex\bx^{\bj}$ if, and only if, there
  exists $1\leq p\leq n$ such that for all $q<p$,
  $i_q=j_q$ and $i_p<j_p$,
  see~\cite[Chap.~2, Sec.~2, Def.~3]{CoxLOS2015};
\item[DRL:]
  $\bx^{\bi}\ldrl\bx^{\bj}$ if, and only if,
  $i_1+\cdots+i_n<j_1+\cdots+j_n$ or both $i_1+\cdots+i_n=j_1+\cdots+j_n$
  and there exists
  $2\leq p\leq n$ such that for all $q>p$, $i_q=j_q$
  and $i_p>j_p$,
  see~\cite[Chap.~2, Sec.~2, Def.~6]{CoxLOS2015}. Observe that it is a
  total degree monomial order.
\end{description}

An important property of \gbs is that given a polynomial
$f\in\K[\bx]$ and $\cG=\curl{g_1,\ldots,g_t}$ a \gb of an ideal of
$\K[\bx]$ for $\prec$, there exist polynomials $h_0,h_1,\ldots,h_r$, with $h_0$
unique, such that $f=g_1 h_1+\cdots+g_t h_r+h_0$ and $\LM(h_0)$ is not
divisible by $\LM(g_1),\ldots,\LM(g_t)$. This polynomial $h_0$ is
called the \emph{normal form of $f$ with respect to $\cG$ for $\prec$}
and will be denoted by $\nf{f}{\cG}{\prec}$.

\begin{definition}[{\cite[Chap.~10, Sec.~2]{CoxLOS2015}}]
  Let $I$ be an ideal of $\K[\bx]$ spanned by homogeneous
  polynomials. Let $\K[\bx]_d$ (\resp $I_d$) be the subset of
  homogeneous polynomials of degree $d$, together with the zero
  polynomial, of
  $\K[\bx]$ (\resp $I$).
  
  The \emph{Hilbert series $\HS_{\K[\bx]/I}(t)$ of
    $\K[\bx]/I$} is the generating series in $t$ of 
  the sequence $\pare{\dim_{\K} \K[\bx]_d/I_d}_{d\in\N}$, \ie
  \[\HS_{\K[\bx]/I}(t)
    =\sum_{d\geq 0}\dim_{\K}\K[\bx]_d/I_d\,t^d
    =\frac{N(t)}{(1-t)^{\delta}},\]
  where $N(t)$ is coprime with $1-t$. The integer $\delta\geq
  0$ is called the \emph{dimension of $I$}.
\end{definition}

\begin{lemma}[see also~{\cite[Th.~15.26]{Eisenbud1995}}]\label{lm:dimId}
  Let $I$ be an ideal of $\K[\bx]$ spanned by homogeneous
  polynomials. Let $\prec$ be any total degree monomial order. Then, for any
  $d\in\N$, $\dim_{\K}\ideal{\LM(I)}_d=\dim_{\K} I_d$ and
  $\dim_{\K}\K[\bx]_d/\ideal{\LM(I)}_d=\dim_{\K}\K[\bx]_d/I_d$.
\end{lemma}

  
  
  

\begin{corollary}\label{cor:sameHS}
  Let $I$ be an ideal of $\K[\bx]$ spanned by homogeneous
  polynomials. Let $\prec_1$ and $\prec_2$ be two total degree
  monomial orders. Then,
  \[\HS_{\K[\bx]/\ideal{\LM[\prec_1](I)}}(t)
    =\HS_{\K[\bx]/I}(t)
    =\HS_{\K[\bx]/\ideal{\LM[\prec_2](I)}}(t).\]
\end{corollary}

\begin{proof}
  For any $d\in\N$, by \Cref{lm:dimId},
  $\dim_{\K}\K[\bx]_d/\ideal{\LM[\prec_1](I)}_d=\dim_{\K}\K[\bx]_d/I_d
  =\dim_{\K}\K[\bx]_d/\ideal{\LM[\prec_2](I)}_d$. Hence, the Hilbert
  series are the same.
\end{proof}

\begin{proposition}[{\cite[Chap.~9, Sec.~3, Prop.~3]{CoxLOS2015}}
  and {\cite[Cor.~3.3]{Hartshorne1984}}]\label{prop:regindex}
  Let $I$ be an ideal of $\K[\bx]$ spanned by homogeneous
  polynomials of dimension $\delta$.
  Then, there exists a unique polynomial $\HP_I\in\Z[s]$, called the
  \emph{Hilbert polynomial of $I$}, and an integer
  $r\geq 0$, called the \emph{regularity index of $I$}, such that
  for all $d\geq r$, $\dim_{\K}\K[\bx]_d/I_d=\HP_I(d)$.
  Furthermore, there exist integers $m_0\geq\cdots\geq
  m_{\delta-1} > 0$ such
  that
  \[\HP_I(s)=\sum_{i=0}^{\delta-1}\pare{\binom{s+i}{i+1}-\binom{s+i-m_i}{i+1}},\]
  where the empty sum for $\delta=0$ is by convention $0$.
\end{proposition}



  

\begin{theorem}[{\cite[Sec.~4.5, Cor.]{MollerM1984}}]\label{th:maxdeggb}
  Let $I$ be an ideal of $\K[\bx]$
  spanned by homogeneous
  polynomials. Let $\prec$ be a total degree
  monomial order and $\cG=\curl{g_1,\ldots,g_t}$ be the reduced \gb
  of $I$ \wrt $\prec$. Let $r$ be the regularity index of $I$ and $m_0$
  be defined as in \Cref{prop:regindex}. Then, for all
  $1\leq i\leq t$, $\deg g_i\leq \max(r,m_0)$.
\end{theorem}

\begin{proposition}\label{prop:maxdeggbdim1}
  Let $I$ be an ideal of $\K[\bx]$
  of dimension $\delta\leq 1$
  spanned by homogeneous
  polynomials. Let $\prec$ be a total degree
  monomial order and $\cG=\curl{g_1,\ldots,g_t}$ be the reduced \gb
  of $I$ \wrt $\prec$. Let $r$ be the regularity index of $I$. Then, for all
  $1\leq i\leq t$, $\deg g_i\leq r+\delta$.
\end{proposition}
\begin{proof}
  Let us first consider the case $\delta=0$. By definition of $r$, all
  monomials of degree $r$ are leading monomials of $I$. Since a
  reduced \gb is minimal, its leading monomials cannot have degree
  greater than $r$.

  In the case $\delta=1$, by \Cref{prop:regindex},
  \[\HP_I(s)=\binom{s}{1}-\binom{s-m_0}{1}=m_0.\]
  That is, the number of monomials of degree $s$ in the staircase is
  constant equal to $m_0$ for $s\geq r$. Now, let us consider the sets
  of monomials of degree respectively $r$ and $r+1$ in the
  staircase. Since both sets have size $m_0$, there exists a
  bijection between them. We can build such a bijection as follows:
  enumerate the monomials $\mu$ of degree $r+1$ in increasing order for $\ldrl$
  and choose their image as follows: pick the largest index $k$
  such that $x_k|\mu$ and $\mu/x_k$, of degree $r$, has not been
  already chosen, then send $\mu$ onto $\mu/x_k$. We can build a
  similar bijection from the subset of monomials of degree
  $r+2$ of the staircase to the one of monomials of degree
  $r+1$. Now, by enumerating the possibilities, we see that if $\mu$
  of degree $r+2$ is sent onto $\mu/x_k$, then $\mu/x_k$ is sent
  onto $\mu/x_k^2$. Hence, the set of monomials of degree $r+1$
  outside the staircase, \ie in $\ideal{\LM(I)}$, completely spans the sets of all
  monomials of degree $d\geq r+1$ in $\ideal{\LM(I)}$, that is no
  polynomial of degree greater than $r+1$ lies in the reduced \gb of
  $I$ for a total degree ordering.
\end{proof}

\begin{theorem}[see~{\cite[Sec.~3.9, Th.]{Giusti1984}}
  and~{\cite[Th.~2]{Lazard1983}}]\label{th:maxdeggbgeneric}
  \added{
    Let $I=\ideal{f_1,\ldots,f_s}$ be an ideal of
    $\K[\bx]$ with
    $\deg f_i=d_i$ for all $1\leq i\leq s$. Assume $I$ is Cohen-Macaulay
    or has dimension at most $1$.
  }

  \added{
    Let $M$ be a generic invertible matrix of size $n$ and let
    $\tilde{f}_i=f_i(M\bx)$ for all $1\leq i\leq s$. Let $\tilde{\cG}$
    be the reduced \gb of
    $\tilde{I}=\ideal{\tilde{f}_1,\ldots,\tilde{f}_s}$ for $\ldrl$.
  }

  \added{
    Then, the index of regularity $r_{\tilde{I}}$ of $\tilde{I}$
    satisfies 
    \[r_{\tilde{I}}\leq\sum_{k=1}^s (d_i-1) + 1.\]
    Furthermore, for all $g\in\tilde{\cG}$,
    \[\deg g\leq r_{\tilde{I}}\leq\sum_{k=1}^s (d_i-1) + 1.\]
  }
\end{theorem}

\begin{example}

  \added{
    As an illustration to \Cref{th:maxdeggb} and \Cref{prop:maxdeggbdim1}, consider the
    one-dimensional homogeneous ideal
    $I\subset\K[x,y]$ whose \gb for $\ldrl$ with $y\ldrl x$ is
    $\cG=\curl{x y-y^2,y^3}$.
  }
  
  \added{
    Notice that the associated staircase $S$ is
    $\curl{1,y,x,y^2,x^2,x^3,x^4,\ldots}$ so that
    $\HS_{\K[x,y]/I}(t)=1+2 t + 2 t^2+t^3+t^4+\cdots$. Hence
    $\HP_I(s) = 1 = m_0$, $r=3$ and their maximum bounds the degree of
    the polynomials in $\cG$.
  }

  \added{
    Taking a generic matrix $M=\pare{
      \begin{smallmatrix}
        a	&b\\c	&d
      \end{smallmatrix}
    }$ as in \Cref{th:maxdeggbgeneric} yields the ideal $\tilde{I}$ whose
    \gb for $\ldrl$ is $\tilde{\cG}=
    \curl{(a c - c^2) x^2 + (a d + b c - 2 c d) x y + (b d - d^2) y^2,
      c x^2 y + d y^3}$, under the generic assumption
    $(a - c) c\neq 0$. Thus, we have $\deg g\leq r=3\leq
    (2-1)+(3-1)+1=4$, for all $g\in\tilde{\cG}$.
  }
\end{example}

\subsection{\gb algorithms}
Buchberger developed the theory of \gbs and designed a first algorithm
to compute them in~\cite{bGroebner1965}. Since then, many efficient
\gb algorithms were developed. Here, we focus on Faugère's
\Fquatre algorithm~\cite{Faugere1999}.

\subsubsection{The \Fquatre algorithm}
\label{ss:F4}
In~\cite{bGroebner1965}, Buchberger's algorithm introduced the
concept of \emph{critical pairs}
for computing \gbs. For
two polynomials $f_1$ and $f_2$ in a set of generators of an ideal, the
critical pair $(f_1,f_2)$ leads to a normal form computation of the
\emph{S-polynomial}
\[\spair{f_1}{f_2}=\frac{\lcm(\LM(f_1),\LM(f_2))}{\LT(f_1)}f_1
  -\frac{\lcm(\LM(f_1),\LM(f_2))}{\LT(f_2)}f_2\]
\wrt the current intermediate basis.
The \emph{degree} of such a critical pair is defined as
$\deg\lcm (\LM(f_1),\allowbreak \LM(f_2))$. Notice that this bounds from above
$\deg\spair{f_1}{f_2}$.

In
\Cref{algo:F4} we state the pseudocode of Faugère's \Fquatre
algorithm, highlighting (in
red) the main differences to Buchberger's algorithm.

\begin{algorithm2e}[htbp!]
  \small
  \DontPrintSemicolon
  \caption{Faug\`ere's \Ffour\label{algo:F4}}
  \KwIn{A list of polynomials $f_1, \ldots, f_s$ spanning an ideal
    $I\subseteq\K[\bx]$ and a total degree monomial order $\prec$.}
  \KwOut {A \gb $\cG$ of $I$ for $\prec$.}
  $\cG\coloneqq\curl{f_1,\ldots,f_s}$.\;
  $P\coloneqq\curl{(f_i,f_j)\,\middle\vert\,1\leq i<j\leq s}$.\;
  \While{$P\neq\emptyset$}{
    \textcolor{diffRed}{Choose a subset $L$ of $P$}\nllabel{ln:F4while}.\;
    \textcolor{diffRed}{$P\coloneqq P \setminus L$}.\;
    \textcolor{diffRed}{$L \coloneqq \operatorname{SymbolicPreprocessing}(L,\cG)$}.\;
    \textcolor{diffRed}{$L \coloneqq \operatorname{LinearAlgebra}(L)$}.\;
    \For{\textcolor{diffRed}{$h \in L$ with $\LM[\prec](h)\notin
        \ideal{\LM[\prec](\cG)}$}}{
      $P\coloneqq P \cup \curl{(g,h)\,\middle\vert\,g\in\cG}$.\;
      $\cG\coloneqq \cG\cup\curl{h}$\nllabel{ln:F4for}.
    }
  }
  \KwRet $\cG$.
\end{algorithm2e}
Observe that the termination of the \Fquatre algorithm only relies on
Buchberger's first criterion: $\cG=\curl{g_1,\ldots,g_t}$ is a \gb of
an ideal $I$ for $\prec$ if for all $1\leq i,j\leq t$,
$\nf{\spair{g_i}{g_j}}{\cG}{\prec}=0$, see~\cite[Chap.~2, Sec.~6, Th.~6]{CoxLOS2015}.

We detail the differences to Buchberger's algorithm.
\begin{enumerate}
\item One can choose several critical pairs at a time, stored in a 
  subset $L \subseteq P$. The so-called \emph{degree strategy}
  chooses $L$ to be the set of \emph{all} critical pairs of minimal degree
  for a total degree monomial order, typically $\ldrl$.

\item For all terms of all the generators of the S-polynomials, one
  searches in the current intermediate \gb $\cG$ for possible reducers.
  One adds those to $L$ and again search for all of their terms for
  reducers in $\cG$. This is the
  $\operatorname{SymbolicPreprocessing}$ function.

\item Once all reduction data is collected from the last step,
  one generates a Macaulay-like matrix with columns corresponding to the 
  monomials appearing in $L$ (sorted by $\prec$) and rows corresponding to
  the coefficients of each polynomial in $L$. Gaussian Elimination is then 
  applied to reduce now all chosen S-polynomials at once. This is the
  $\operatorname{LinearAlgebra}$ function.
  
\item Finally, one adds those polynomials associated to rows in the updated 
    matrix to $\cG$ 
    whose leading monomials  are not already in $\LM(\cG)$.
\end{enumerate}
In order to optimize the algorithm one can now apply Buchberger's product and
chain criteria, see~\cite{bGroebnerCriterion1979,KollreiderB1978}.
Thus many useless critical pairs are removed before even being added
to $P$ and fewer zero
rows are computed during the linear algebra part of \Fquatre. Still, in general,
there are many zero reductions left.

Different selection strategies yield different behavior of the
algorithm. The degree strategy allows one to compute
\emph{truncated \gbs} of ideals in case of early terminations.
\begin{definition}\label{def:truncatedgb}
  Let $f_1,\ldots,f_s$ be polynomials in $\K[\bx]$ and $\prec$ be a
  monomial order. Let $\mu$ be a monomial and $F_{\mu}$ be the
  $\K$-vector subspace of $\ideal{f_1,\ldots,f_s}$ defined as
  \[F_{\mu}=\curl{\sum_{i=1}^s h_i f_i\,\middle\vert\,\forall\,1\leq i\leq s,
      \LT(h_i f_i)\preceq\mu}.\]

  Then, $\cG\subset F_{\mu}$ is a $\mu$-truncated \gb of
  $\ideal{f_1,\ldots,f_s}$ for $\prec$ if for all $p\in F_{\mu}$,
  there exists $g\in\cG$ such that $\LM(g)\mid\LM(p)$ and
  $p-\frac{\LT(p)}{\LT(g)}g$ is in $F_{\mu}$.
\end{definition}
Observe that taking a triangular basis of $F_{\mu}$ ordered
increasingly \wrt $\prec$ naturally yields a $\mu$-truncated \gb thereof.
\begin{proposition}\label{prop:truncatedgb}
  Let $f_1,\ldots,f_s$ be polynomials in $\K[\bx]$ and $\prec$ be a
  monomial order. Let $\mu$ be a monomial and $F_{\mu}$ be the
  $\K$-vector subspace of $\ideal{f_1,\ldots,f_s}$
  \[F_{\mu}=\curl{
      \sum_{i=1}^s h_i f_i\,
      \middle\vert
      \,\forall\,1\leq i\leq s, \LM(h_i f_i)\preceq\mu
    }.\]
  A subset $\cG=\curl{g_1,\ldots,g_t}\subset F_{\mu}$ is a $\mu$-truncated \gb of
  $\ideal{f_1,\ldots,f_s}$ for $\prec$ if, and only if,
  \[F_{\mu}\subseteq G_{\mu}=\curl{
      \sum_{j=1}^t h_j g_j\,
      \middle\vert
      \,\forall\,1\leq j\leq t, \LM(h_j g_j)\preceq\mu
    }.\]
  and for all $(g_i,g_j)\in\cG^2$ with $i\neq j$, if
  $\lcm(\LM(g_i),\LM(g_j))\preceq\mu$, then
  \[\nf{\spair{g_i}{g_j}}{\cG}{\prec}=0.\]
\end{proposition}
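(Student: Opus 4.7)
Plan: This is the truncated analogue of Buchberger's classical criterion, already invoked above in the paper via~\cite[Chap.~2, Sec.~6, Th.~6]{CoxLOS2015}. I would prove the two implications along the same lines as the classical proof, with the extra bookkeeping needed to keep every reduction inside $F_\mu$ rather than merely inside $\ideal{f_1,\ldots,f_s}$.

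For ($\Rightarrow$), assume $\cG$ is a $\mu$-truncated \gb. Given $p\in F_\mu$, I would iterate Definition~\ref{def:truncatedgb}: each step produces a reducer $g\in\cG$ with $\LM(g)\mid\LM(p)$ and $p-\frac{\LT(p)}{\LT(g)}g\in F_\mu$, and the leading monomial strictly decreases. Since $\prec$ is a well-ordering the process terminates at $0$; grouping the successive reducers by element of $\cG$ yields $p=\sum_j h_j g_j$ with $\LM(h_j g_j)\preceq\LM(p)\preceq\mu$, which proves $F_\mu\subseteq G_\mu$. For the S-polynomial criterion, if $\lcm(\LM(g_i),\LM(g_j))\preceq\mu$ then $\spair{g_i}{g_j}$ lies in $F_\mu$, because it is a combination of $g_i,g_j$ (both in $F_\mu$) with leading monomials equal to this $\lcm$; applying the same iterative reduction then shows $\nf{\spair{g_i}{g_j}}{\cG}{\prec}=0$.

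For ($\Leftarrow$), I would follow the classical proof of Buchberger's theorem adapted to the truncated setting. Given a nonzero $p\in F_\mu$, use $F_\mu\subseteq G_\mu$ to write $p=\sum_j h_j g_j$ with $\LM(h_j g_j)\preceq\mu$, and choose a representation minimising $\nu\coloneqq\max_j\LM(h_j g_j)$. The heart of the argument is to show $\nu=\LM(p)$. If not, the leading terms on the set $J\coloneqq\curl{j:\LM(h_j g_j)=\nu}$ must cancel, so the standard telescoping identity expresses $\sum_{j\in J}\LT(h_j)g_j$ as a $\K$-combination of S-polynomials $\spair{g_i}{g_j}$ (with $i,j\in J$, whose $\lcm$ divides $\nu$ and is therefore $\preceq\mu$) together with terms of leading monomial $\prec\nu$. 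By the S-polynomial criterion each such $\spair{g_i}{g_j}$ admits a representation $\sum_k r_k g_k$ with $\LM(r_k g_k)\prec\lcm(\LM(g_i),\LM(g_j))\preceq\nu$; substituting produces a new representation of $p$ with a strictly smaller $\nu$, contradicting minimality. Hence $\nu=\LM(p)$, so some $g_k$ satisfies $\LM(g_k)\mid\LM(p)$, and the identity
$p-\tfrac{\LT(p)}{\LT(g_k)}g_k=\sum_{j\neq k}h_j g_j+\pare{h_k-\tfrac{\LT(p)}{\LT(g_k)}}g_k$
has all summand leading monomials $\preceq\LM(p)\preceq\mu$, which I use to conclude that the reduction lands back in $F_\mu$.

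The most delicate point is certifying that the rewriting really stays inside $F_\mu$ (defined through the input generators $f_1,\ldots,f_s$) and not merely inside $G_\mu$ (defined through the $g_j$'s). The hypothesis $\cG\subset F_\mu$ provides, for each $g_j$, a bounded $\bmf$-representation, and I would use these representations together with the minimality of $\nu$ to pull the $G_\mu$-level rewriting back to an $F_\mu$-level one, so that the iterative step of the truncated \gb definition is justified at each stage.
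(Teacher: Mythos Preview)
Your plan matches the paper's proof: both directions adapt the classical Buchberger-criterion argument from~\cite[Chap.~2, Sec.~6, Th.~6]{CoxLOS2015}, using iterated reduction for ($\Rightarrow$) and the minimal $G_\mu$-representation together with the S-polynomial rewriting (via~\cite[Chap.~2, Sec.~6, Lemma~5]{CoxLOS2015}) for ($\Leftarrow$). One correction in ($\Rightarrow$): your justification that $\spair{g_i}{g_j}\in F_\mu$ is not valid as written, since $F_\mu$ is only a $\K$-vector space and is not closed under monomial multiples of its elements; the paper instead observes directly that $\spair{g_i}{g_j}\in G_\mu$ (both summands have leading monomial $\lcm(\LM(g_i),\LM(g_j))\preceq\mu$). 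As for your final paragraph, the paper does not perform the $G_\mu$-to-$F_\mu$ pullback you describe: once the minimality argument yields $\LM(p)=\nu=\LM(h_k g_k)$ for some $k$, it records $\LM(p)\in\ideal{\LM(\cG)}$ and invokes Definition~\ref{def:truncatedgb} without a separate $F_\mu$-level bookkeeping step.
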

\begin{proof}
  This proof follows the proof of~\cite[Chap.~2, Sec.~6, Th.~6]{CoxLOS2015}.
  
  If $\cG$ is a $\mu$-truncated \gb of $\ideal{f_1,\ldots,f_s}$ for $\prec$, then
  observe that both $F_{\mu}$ and $G_{\mu}$ only contain polynomials
  with leading monomial less or equal to $\mu$ for $\prec$. 
  Let
  $p\in F_{\mu}$, then there exists $g\in\cG$ such that
  $\LM(g)\mid\LM(p)$, $\LM(g)$ is maximal and
  $p-\frac{\LT(p)}{\LT(g)}g$ is in $F_{\mu}$. Thus, $p = h g + q$ with
  $\LM(q)\prec\LM(p)\preceq\mu$, $\LM(h g)=\LM(p)\preceq\mu$ and
  $q\in F_{\mu}$. Repeating this division process on $q$ shows that
  $p\in G_{\mu}$. As a consequence $p$ reduces
  to $0$ \wrt $G$ and $\prec$.

  Now, let $g_i$ and $g_j$ be in $G$ and $i\neq j$. Let
  $m=\lcm(\LM(g_i),\LM(g_j))$, then $\spair{g_i}{g_j}=
  \frac{m}{\LT(g_i)}g_i-\frac{m}{\LT(g_j)}g_j$. Furthermore, if 
  $m\preceq\mu$, then $\LM\pare{\frac{m}{\LT(g_i)}g_i}=m\preceq\mu$,
  and likewise for $g_j$. Hence, $\spair{g_i}{g_j}\in G_{\mu}$
  and its normal form \wrt $G$ and $\prec$ is $0$.

  For the converse, assume that $G_{\mu}$ contains $F_{\mu}$ and that
  for all $(g_i,g_j)\in\cG^2$ with $i\neq j$, if
  $\lcm(\LM(g_i),\LM(g_j))\preceq\mu$, then
  $\nf{\spair{g_i}{g_j}}{\cG}{\prec}=0$.

  Let $p\in F_{\mu}$, since $F_{\mu}\subseteq G_{\mu}$, there exist
  $h_1,\ldots,h_t$ such that $\LM(h_i g_i)\preceq\mu$ for all $1\leq
  i\leq t$ and $p=h_1 g_1 +\cdots+ h_t g_t$. Let
  $m=\max_{1\leq i\leq t}\LM(h_i g_i)\preceq\mu$, then observe
  that $\LM(p)\preceq m \preceq\mu$. Assume
  that among all the possible such writings of $p$, the polynomials
  $h_1,\ldots,h_t$
  are chosen so that $m$ is minimal for $\prec$. Such a minimal
  monomial exists by the well-order property of $\prec$.

  Now, if $\LM(p)=m=\LM(h_i g_i)=\LM(h_i)\LM(g_i)$ for some
  $1\leq i\leq t$,
  then $\LM(g_i)$ divides $\LM(f)$, hence $\LM(p)\in\LM(\cG)$.

  Otherwise, $\LM(p)\prec m$. We will use the fact that if the
  critical pair $(g_i,g_j)$ satisfies $\lcm(\LM(g_i),\LM(g_j))\preceq\mu$
  implies
  $\nf{\spair{g_i}{g_j}}{\cG}{\prec}=0$ to build a new expression of
  $p$ that decreases $m$.

  Let us write
  \[p=\sum_{\substack{1\leq i\leq t\\\LM(h_i g_i)=m}}h_i g_i
    +\sum_{\substack{1\leq i\leq t\\\LM(h_i g_i)\prec m}}h_i g_i,\]
  Then,
  \[p=\sum_{\substack{1\leq i\leq t\\\LM(h_i g_i)=m}} \LM(h_i) g_i
    +\sum_{\substack{1\leq i\leq t\\\LM(h_i g_i)=m}}(h_i-\LM(h_i))g_i
    +\sum_{\substack{1\leq i\leq t\\\LM(h_i g_i)\prec m}}h_i g_i,\]
  Since the second and third sums only involve monomials smaller than
  $m$ for $\prec$, then the leading monomial of the first one is also
  smaller than $m$ for $\prec$. Observe, on the one hand, that
  \[s_{i,j}=\spair{g_i\LM(h_i)}{g_j\LM(h_j)}
    =\spair{g_i}{g_j}\frac{m}{\LM(g_i)\LM(g_j)}.\]
  Now, on the other hand,
  $\LM(g_i\LM(h_i))=\LM(g_j\LM(h_j))=m$, hence their lcm is
  $m$. Therefore,
  $\lcm(\LM(g_i),\LM(g_j))\preceq m\preceq\mu$.
  By~\cite[Chap.~2, Sec.~6, Lemma~5]{CoxLOS2015}, the first sum in the
  latter expression of $p$ is a
  linear combination of the $s_{i,j}$'s and $\LM(s_{i,j})\prec m$ for
  all $1\leq i<j\leq t$.

  Consider, one of these polynomials
  $s_{i,j}$. Since
  $\lcm(\LM(g_i),\LM(g_j))\preceq\mu$,
  then we know that the critical pair $(g_i,g_j)$ is such that
  $\nf{\spair{g_i}{g_j}}{\cG}{\prec}=0$, hence
  $\nf{s_{i,j}}{\cG}{\prec}\allowbreak =0$ and there exist
  $A_1,\ldots,A_t\in\K[\bx]$ such that
  \[s_{i,j}=A_1 g_1+\cdots+A_t g_t,\]
  and for all $1\leq i\leq t$, $\LM(A_i g_i)\preceq\LM(s_{i,j})\prec
  m$.

  Doing this for all the polynomials $s_{i,j}$, we show that the first sum of the latter
  expression of $p$ can be replaced by $B_1 g_1+\cdots+B_t g_t$, where
  for all $1\leq i\leq t$,
  $\LM(B_i g_i)\prec m$. This contradicts the minimality of $m$ for
  this property, hence $\LM(p)=m$ and $\LM(p)\in\ideal{\LM(\cG)}$. By
  \Cref{def:truncatedgb}, $\cG$ is a $\mu$-truncated \gb of
  $\ideal{f_1,\ldots,f_s}$ for $\prec$.
\end{proof}
\begin{remark}
  \begin{enumerate}
  \item If $\prec$ is a total degree monomial order, then for
    $d\in\N$, we can also define a \emph{$d$-truncated \gb} as a $\mu$-truncated
    \gb for $\mu$ the largest monomial of degree $d$ for $\prec$.
  \item If $\cG=\curl{g_1,\ldots,g_t}$ is a $\mu$-truncated \gb of
    $\ideal{f_1,\ldots,f_s}$ for $\prec$
    and
    \[\mu\succeq\max_{1\leq i<j\leq t}\lcm(\LM(g_i),\LM(g_j)),\]
    then $\cG$ is a \gb of $\ideal{f_1,\ldots,f_s}$ for $\prec$. Indeed, it spans
    the ideal and by
    \Cref{prop:truncatedgb}, all the S-polynomials reduce
    to $0$ \wrt $\cG$ and $\prec$. Hence, by Buchberger's first
    criterion~\cite[Chap.~2, Sec.~6, Th.~6]{CoxLOS2015}, it is a \gb of
    $\ideal{f_1,\ldots,f_s}$ for $\prec$.
  \item \Cref{def:truncatedgb} depends greatly on the set of
    generators of the ideal. Consider $f_1=x^n$, $f_2=(y-1)^n$ and
    $f_3=x y - y - 1$ for $n\geq 2$. By
    \Cref{prop:truncatedgb}, $\cG=\curl{f_1,f_2,f_3}$ is a
    $n$-truncated \gb of $\ideal{f_1,f_2,f_3}$ for $\ldrl$. Yet, this
    ideal is $\ideal{1}$ hence $\curl{1}$ is a $n$-truncated \gb of
    $\ideal{1}$.
  \end{enumerate}
\end{remark}
\begin{lemma}\label{lem:F4trunc}
  Let $f_1,\ldots,f_s\in\K[\bx]$ be the input polynomials of the \Fquatre
  algorithm. Let $d\in\N$. Assume that the \Fquatre algorithm uses the degree selection
  strategy and that, on line~\ref{ln:F4while}, $L$
  consists in all the critical pairs of degree $d$.

  If no new polynomial is added to $\cG$ on line~\ref{ln:F4for}, then $\cG$ is a
  $d$-truncated \gb of $\ideal{f_1,\ldots,f_s}$.
\end{lemma}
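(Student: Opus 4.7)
The plan is to verify the two conditions of Proposition~\ref{prop:truncatedgb} with $\mu$ taken as the largest degree-$d$ monomial for $\prec$. Since $\prec$ is a total degree ordering, the condition $\lcm(\LM(g_i),\LM(g_j))\preceq\mu$ appearing in the proposition coincides with the pair $(g_i,g_j)$ having degree at most $d$, matching exactly the class of pairs handled through the degree-$d$ iteration under the degree selection strategy.

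The first condition, $F_\mu\subseteq G_\mu$, is immediate: the algorithm initializes $\cG=\{f_1,\ldots,f_s\}$ and only adds new elements over time, so any representation $\sum_i h_i f_i$ witnessing membership in $F_\mu$ is already a valid representation over $\cG$ (with zero coefficients on the $g_j$ not belonging to $\{f_1,\ldots,f_s\}$). For the second condition, the goal is to show that for every critical pair $(g_i,g_j)\in\cG^2$ of degree at most $d$, $\nf{\spair{g_i}{g_j}}{\cG}{\prec}=0$. By the degree strategy, every such pair was chosen in some iteration of degree $d'\leq d$. At each such iteration, symbolic preprocessing guarantees that for every monomial appearing in the matrix and divisible by some $\LM(g)$ with $g\in\cG$, a reducer with that leading monomial also appears in the matrix. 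Consequently, each row produced by Gaussian elimination either has leading monomial outside $\LM(\cG)$, in which case it is added to $\cG$, or has leading monomial in $\LM(\cG)$, in which case iterating the available reducers drives it to zero modulo $\cG$.

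Under the lemma's hypothesis, no polynomial is added at the degree-$d$ iteration, so every S-polynomial of a degree-$d$ pair already reduces to zero modulo $\cG$. For pairs of degree $d'<d$, the matrix-based reduction at iteration $d'$ expresses the S-polynomial as a linear combination of reducer rows, which are multiples of elements of $\cG$ and hence self-reduce to zero, together with rows added to $\cG$ at that iteration, which lie in the current $\cG$ and thus self-reduce to zero via themselves. Chasing such a reduction sequence therefore yields a normal form of zero for the S-polynomial modulo the current $\cG$. The main obstacle is precisely this last step: one must verify that the reduction carried out at iteration $d'$ remains valid once $\cG$ has been enlarged in subsequent iterations, which rests on the observation that only polynomials belonging to the current $\cG$ are ever added. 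Once both conditions are established, Proposition~\ref{prop:truncatedgb} delivers the claim.
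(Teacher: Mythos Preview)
Your proof follows the same route as the paper's: verify the two hypotheses of Proposition~\ref{prop:truncatedgb} with $\mu$ the largest degree-$d$ monomial. The paper's own proof is terser than yours; it observes that the degree strategy ensures all remaining pairs have degree at least $d$, that the hypothesis forces the degree-$d$ S-polynomials to reduce to zero, and that $\{f_1,\ldots,f_s\}\subseteq\cG$, then invokes Proposition~\ref{prop:truncatedgb} directly. In particular, the paper leaves the treatment of pairs of degree strictly less than $d$ implicit, whereas you spell it out.

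Your explicit handling of the $d'<d$ case is a reasonable addition, but the phrasing is slightly loose: you write that the S-polynomial is a linear combination of ``reducer rows, which are multiples of elements of $\cG$'' and ``rows added to $\cG$ at that iteration''. After Gaussian elimination the echelon rows with leading monomial in $\ideal{\LM(\cG_{d'})}$ are no longer simple monomial multiples of elements of $\cG_{d'}$. The clean way to close this is to reduce $\spair{g_i}{g_j}$ not by the echelon rows but directly by $\cG_{d'}\cup\{h_1,\ldots,h_k\}$, using that symbolic preprocessing guarantees every monomial in the matrix lying in $\ideal{\LM(\cG_{d'})}$ is the leading monomial of some original reducer row, hence such a reduction step stays inside the row space; any step where the leading monomial lies outside $\ideal{\LM(\cG_{d'})}$ must hit some $\LM(h_\ell)$, since those are the only echelon leading monomials with that property. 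This yields a genuine reduction to zero over $\cG_{d'}'$, and since $\cG$ only grows, the same reduction is valid over the final $\cG$. With that adjustment your argument is complete and matches the paper's approach.
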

\begin{proof}
  By the degree selection strategy, only critical pairs of degree at
  least $d$ exist. Since no new polynomial is added at the end of the
  turn, this means that all S-polynomials coming from critical pairs
  of degree $d$ reduce to $0$ \wrt $\cG$ and $\prec$. Furthermore,
  $\cG$ contains $f_1,\ldots,f_s$, thus, by
  \Cref{prop:truncatedgb}, $\cG$ is a $d$-truncated \gb of
  $\ideal{f_1,\ldots,f_s}$ for $\prec$.
\end{proof}

\subsubsection{The \spFGLM algorithm}\label{ss:spfglm}
In this subsection, the input \gb, $\cGdrl$, is the reduced \gb of a
zero-dimensional ideal $I$ of degree $D$ for $\ldrl$. The
output is the reduced \gb, $\cGlex$, of $I$ for $\llex$.
In~\cite{FaugereM2011} and~\cite[Algorithm~3]{FaugereM2017},
using~\cite{MorenoSocias2003}, the authors observe that the map
\begin{align*}
  \K[\bx]/I	&\to	\K[\bx]/I\\
  f		&\mapsto \nf{x_n f}{\cGdrl}{\ldrl}
\end{align*}
given in the basis $S_{\DRL}=\Staircase(\cGdrl)$ is represented by a
matrix, $M_{x_n}$, with a special structure given in the following two lemmas.
\begin{lemma}\label{lem:buildingmatrixzerodim}
  Let $I$ be a zero-dimensional ideal of $\K[\bx]$ of degree $D$, $\cGdrl$ be its
  reduced \gb for $\ldrl$ and $\Sdrl=\{\sigma_0,\ldots,\sigma_{D-1}\}$
  be its associated staircase.
  Let $M_{x_n}$ be the matrix of the linear map $f\in\K[\bx]/I\mapsto
  \nf{x_n f}{\cGdrl}{\ldrl}\in\K[\bx]/I$.
  
  Then,
  one can build the matrix 
  $M_{x_n}=(m_{i,j})_{0\leq i,j<D}$ with the following
  procedure:
  \begin{itemize}
  \item if $x_n\sigma_j=\sigma_k$, then $m_{k,j}=1$ and for all
    $0\leq i<D$, $i\neq k$, $m_{i,j}=0$;
  \item otherwise for all $0\leq i<D$, $m_{i,j}$ is the
    coefficient of $\sigma_i$ in $\nf{x_n\sigma_j}{\cGdrl}{\ldrl}$.
  \end{itemize}
\end{lemma}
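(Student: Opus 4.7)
The plan is to unpack the definitions and split into the two cases prescribed by the statement. Recall that the matrix $M_{x_n}$ of a $\K$-linear endomorphism of $\K[\bx]/I$ expressed in the basis $\Sdrl = \{\sigma_0, \ldots, \sigma_{D-1}\}$ has, as its $j$-th column, the coordinates in $\Sdrl$ of the image of $\sigma_j$, namely of $\nf{x_n \sigma_j}{\cGdrl}{\ldrl}$. Since $\Sdrl$ is the canonical monomial basis of $\K[\bx]/I$ (the complement of $\LM(I)$, by Definition~\ref{def:staircase} and the discussion that follows), it suffices to show that in each of the two cases the coefficients $m_{i,j}$ described by the procedure are precisely the coefficients of $\sigma_i$ in this normal form.

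First I would handle Case~1. If $x_n \sigma_j = \sigma_k$ is itself an element of the staircase $\Sdrl$, then by definition of $\Sdrl$ the monomial $x_n \sigma_j$ is not divisible by any $\LM(g)$ for $g \in \cGdrl$. Hence $x_n \sigma_j$ is already in normal form with respect to $\cGdrl$ and $\ldrl$, so $\nf{x_n \sigma_j}{\cGdrl}{\ldrl} = \sigma_k$. Expressing this in the basis $\Sdrl$ gives $m_{k,j} = 1$ and $m_{i,j} = 0$ for $i \neq k$, which matches the first bullet of the procedure.

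Next I would handle Case~2. If $x_n \sigma_j \notin \Sdrl$, i.e. $x_n \sigma_j \in \LM(I)$, then by uniqueness of the normal form (recalled just before Definition~\ref{def:staircase} in the excerpt), $\nf{x_n \sigma_j}{\cGdrl}{\ldrl}$ is the unique polynomial in $\Span_{\K}(\Sdrl)$ congruent to $x_n \sigma_j$ modulo $I$. Writing this normal form as $\sum_{i=0}^{D-1} m_{i,j}\, \sigma_i$ defines the entries $m_{i,j}$ of the $j$-th column, which coincides with the second bullet of the procedure.

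The main conceptual point — not really an obstacle, but the one to highlight — is that Case~1 is a genuine simplification over Case~2: one does not need to invoke the reduction algorithm at all when $x_n \sigma_j$ already lies in the staircase, which in practice yields many sparse columns of $M_{x_n}$. This observation is what makes the construction computationally cheap and motivates the two-case presentation.
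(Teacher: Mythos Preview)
Your proof is correct and follows essentially the same approach as the paper's own (very terse) proof: identify the $j$th column of $M_{x_n}$ as the coefficient vector of $\nf{x_n\sigma_j}{\cGdrl}{\ldrl}$ in the basis $\Sdrl$, observe that Case~1 is immediate because $x_n\sigma_j=\sigma_k$ is already reduced, and note that Case~2 is then just the definition of these coefficients. Your write-up is simply more explicit than the paper's three-line version.
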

\begin{proof}
  By construction, the matrix $M_{x_n}$ has its $j$th column which is the
  vector of coefficients of
  $\nf{x_n\sigma_j}{\cGdrl}{\ldrl}$ in the basis $\Sdrl$.

  The former case is immediate.
  
  The latter case is obtained by linearity.
\end{proof}
\begin{lemma}[{\cite{FaugereM2011, FaugereM2017}
    using~\cite{MorenoSocias2003}}]
  Let $f_1,\ldots,f_n$ be generic polynomials of $\K[x_1,\ldots,x_n]$
  of degrees at most $d$. Let $\cGdrl$ be the reduced \gb of
  $\ideal{f_1,\ldots,f_n}$ for $\ldrl$. Then, the latter case of
  \Cref{lem:buildingmatrixzerodim} only happens if there exists
  $g\in\cGdrl$ such that $\LM[\ldrl](g)=x_n\sigma_j$. As a
  consequence, one has $\nf{x_n\sigma_j}{\cGdrl}{\ldrl}=x_n\sigma_j-g$.
\end{lemma}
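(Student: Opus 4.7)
The plan is to translate the claim into a combinatorial statement about the initial ideal $\LM[\ldrl](I)$ and then exploit the Borel-fixedness (strong stability) of the generic initial ideal in characteristic zero. By Galligo's theorem, the genericity of $f_1,\ldots,f_n$ guarantees that $\LM[\ldrl](I)$ is strongly stable: whenever $m\in\LM[\ldrl](I)$ and $x_i\mid m$, one has $(m/x_i)\,x_j\in\LM[\ldrl](I)$ for every index $j\leq i$ (that is, for every variable at least as large as $x_i$ under the convention $x_n\prec\cdots\prec x_1$).

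Assume we are in the latter case of Lemma~\ref{lem:buildingmatrixzerodim}, so that $x_n\sigma_j\notin\Sdrl$, equivalently $x_n\sigma_j\in\LM[\ldrl](I)$. I would show that $x_n\sigma_j$ is in fact a \emph{minimal} generator of $\LM[\ldrl](I)$, which immediately produces $g\in\cGdrl$ with $\LM[\ldrl](g)=x_n\sigma_j$. Suppose for contradiction that some proper divisor $m$ of $x_n\sigma_j$ lies in $\LM[\ldrl](I)$. Since $\sigma_j\in\Sdrl$ and $\Sdrl$ is divisor-closed, no divisor of $\sigma_j$ can lie in $\LM[\ldrl](I)$, and so $m$ must be of the form $x_n\nu$ with $\nu$ a proper divisor of $\sigma_j$. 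Set $\tau=\sigma_j/\nu\neq 1$. If $\tau$ is divisible by some $x_k$ with $k<n$, applying the Borel property to $x_n\nu$ yields $\nu x_k\in\LM[\ldrl](I)$; but $\nu x_k\mid\sigma_j$ then forces $\sigma_j\in\LM[\ldrl](I)$, a contradiction. Otherwise $\tau=x_n^c$ with $c\geq 1$, so $x_n\nu=\sigma_j/x_n^{c-1}$ is itself a divisor of $\sigma_j$ lying in $\LM[\ldrl](I)$, again contradicting $\sigma_j\in\Sdrl$.

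Once $x_n\sigma_j=\LM[\ldrl](g)$ for some $g\in\cGdrl$, the identity $\nf{x_n\sigma_j}{\cGdrl}{\ldrl}=x_n\sigma_j-g$ drops out of reducedness of $\cGdrl$: the polynomial $g$ is monic and every monomial of $g$ other than its leading monomial belongs to $\Sdrl$, so $x_n\sigma_j-g$ has support in $\Sdrl$ and represents the class of $x_n\sigma_j$ modulo $I$.

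The main obstacle is the appeal to Borel-fixedness of $\LM[\ldrl](I)$, which in this form is a classical consequence of genericity together with characteristic-zero assumptions. The finer ``almost reverse lexicographic'' structure predicted by Moreno-Socías~\cite{MorenoSocias2003} (cited by the authors for exactly this lemma) encompasses the Borel property needed here. The combinatorial core of the argument above is otherwise elementary and relies only on the divisor-closedness of $\Sdrl$.
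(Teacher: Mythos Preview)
Your argument is correct. The paper's own proof is a two-line citation: it invokes Moreno-Soc\'ias's description of the DRL staircase for a generic complete intersection, which directly gives that for $\sigma\in\Sdrl$ either $x_n\sigma\in\Sdrl$ or $x_n\sigma\in\LM[\ldrl](\cGdrl)$. You instead derive this property from Galligo's theorem on Borel-fixedness of the generic initial ideal, together with an explicit combinatorial minimality argument. What you gain is a more self-contained proof resting on the classical and widely known strong-stability of $\mathrm{gin}$ in characteristic zero, rather than on the finer ``almost reverse lexicographic'' structure of Moreno-Soc\'ias; what the paper gains is brevity, since that structure theorem packages exactly the needed statement. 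Both approaches implicitly rely on a characteristic-zero (or large-characteristic) hypothesis, which you acknowledge.

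One small remark on your case split: Case~2 is in fact vacuous. If $m$ is a proper divisor of $x_n\sigma_j$ not dividing $\sigma_j$, then the $x_n$-exponent of $m$ is forced to equal $a_n+1$ (where $a_n$ is the $x_n$-exponent of $\sigma_j$), so $\nu=m/x_n$ has the same $x_n$-exponent as $\sigma_j$ and $\tau=\sigma_j/\nu$ is free of $x_n$. Your handling of Case~2 is still correct, so this does not affect the validity of the proof, but you could streamline by omitting it.
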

\begin{proof}
  By the genericity assumption on $f_1,\ldots,f_n$, the ideal
  $\ideal{f_1,\ldots,f_n}$ is complete
  intersection and zero-dimensional. Then, in~\cite{MorenoSocias2003}, a description of
  $\Sdrl$ is given in that case: if $\sigma\in\Sdrl$, then either
  $x_n\sigma\in\Sdrl$ or $x_n\sigma\in\LM[\ldrl](\cGdrl)$, \ie there
  exists $g\in\cGdrl$ such that $\LM[\ldrl](g)=x_n\sigma$.
\end{proof}
Following, we can use
Wiedemann algorithm~\cite{Wiedemann1986} on $M_{x_n}$ to recover its minimal
polynomial. Furthermore, whenever the reduced \gb $\cGlex$ for $\llex$ is in
\emph{shape position}, \ie there exist
$g_n,g_{n-1},\ldots,g_1\in\K[x_n]$ such that
\[\cGlex=\curl{\elim(x_n),x_{n-1}-\param{n-1}(x_n),\ldots,x_1-\param{1}(x_n)},\]
and for all $1\leq k\leq n-1$, $\deg\param{k}<\deg\elim$, then
$\param{1},\ldots,\param{n-1}$ can be computed by solving Hankel
systems of size $D$. This can be done using the following two
algorithms, \Cref{algo:seqw,algo:param}.
\begin{algorithm2e}[htbp!]
  \small
  \DontPrintSemicolon
  \caption{Sequences for \spFGLM}\label{algo:seqw}
  \KwIn{A matrix $M\in\K^{D\times D}$, a row-vector $\bmr\in\K^D$ and
    $n$ column-vectors $\bc_0,\bc_1,\ldots,\bc_{n-1}\in\K^D$ 
  }
  \KwOut{$(\bmr M\bc_0)_{0\leq i<2 D},
    (\bmr M\bc_1)_{0\leq i<D},\ldots,(\bmr M\bc_{n-1})_{0\leq i<D}$,
    with $\bone=(1,0,\ldots,0)^{\rT}$.
  }
  $\seqw_0^{(0)}\coloneqq\bmr\bc_0,\seqw_0^{(1)}\coloneqq\bmr\bc_1,\ldots,
  \seqw_0^{(n-1)}\coloneqq\bmr\bc_{n-1}$.\;
  \For{$i$ \KwFrom $1$ \KwTo $D-1$}{
    $\bmr\coloneqq\bmr M$.\;
    $\seqw_i^{(0)}\coloneqq\bmr\bc_0,\seqw_i^{(1)}\coloneqq\bmr\bc_1,\ldots,
    \seqw_i^{(n-1)}\coloneqq\bmr\bc_{n-1}$.\;
  }
  \For{$i$ \KwFrom $D$ \KwTo $2 D-1$}{
    $\bmr\coloneqq\bmr M$.\;
    $\seqw_i^{(0)}\coloneqq\bmr\bc_0$\;
  }
  \KwRet $(\seqw_i^{(0)})_{0\leq i<2 D},(\seqw_i^{(1)})_{0\leq i<D},
  \ldots,(\seqw_i^{(n-1)})_{0\leq i<D}$
\end{algorithm2e}
\begin{proposition}\label{prop:seqw}
  Let $M\in\K^{D\times D}$ be a matrix with $s$ nonzero \mbox{coefficients},
  $\bmr\in\K^D$ be a row-vector and
  $\bc_0,\bc_1,\ldots,\bc_{n-1}\in\K^D$ be $n$ column-vectors. Then,
  \Cref{algo:seqw} is correct and computes the sequences
  $(\bmr M\bc_0)_{0\leq i<2 D}$ and
  $(\bmr M\bc_k)_{0\leq i<D}$ for $1\leq k\leq n-1$
  in $O(s D + n D^2)$ operations in $\K$.

  Furthermore, if the vectors $\bc_0,\bc_1,\ldots,\bc_{n-1}$ are vectors of
  the canonical basis, then this complexity drops to $O((s+n) D)$.
\end{proposition}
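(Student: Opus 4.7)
The plan is to prove correctness by a straightforward induction on the loop index, and then account for the arithmetic cost by separating the matrix-vector products from the scalar products.

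\textbf{Correctness.} Let $\bmr_0$ denote the initial value of the row-vector passed in. I would show by induction that, at the top of iteration $i$ (counting the initialization as $i=0$), the current value of $\bmr$ equals $\bmr_0 M^{i}$. The base case is immediate, and the inductive step follows from the assignment $\bmr\coloneqq\bmr M$ performed at the start of iterations $i\geq 1$. Consequently, when we read off $\seqw_i^{(k)}\coloneqq\bmr\bc_k$ we get exactly $\bmr_0 M^i \bc_k$, which is the desired value. The two-phase loop structure (first $0\leq i<D$, then $D\leq i<2D$) merely ensures that we compute all $n$ sequences up to index $D-1$, and extend only the sequence indexed by $\bc_0$ up to index $2D-1$, as claimed in the output specification.

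\textbf{Complexity, general case.} There are two types of operations. First, the update $\bmr\coloneqq \bmr M$ is executed once per iteration of either loop, for a total of $2D-1$ times. Each such update is a vector-sparse-matrix product and costs $O(s)$ operations in $\K$, since $M$ has exactly $s$ nonzero entries. This contributes $O(sD)$. Second, each scalar product $\bmr\bc_k$ costs $O(D)$. In the first loop we perform $n$ such products per iteration, over $D$ iterations, contributing $O(nD^2)$; in the second loop we perform one product per iteration over $D$ iterations, contributing $O(D^2)$, which is absorbed. Adding these gives the announced bound $O(sD+nD^2)$.

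\textbf{Complexity, canonical-basis case.} If each $\bc_k$ is a standard basis vector $\bm{e}_{j_k}$, then the scalar product $\bmr\bc_k$ reduces to reading the $j_k$-th entry of $\bmr$, which costs $O(1)$. Under this hypothesis, the first loop contributes $O(nD)$ and the second loop contributes $O(D)$ to the scalar-product count. The matrix-vector updates still contribute $O(sD)$ as before. Summing yields $O(sD+nD)=O((s+n)D)$.

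\textbf{Main subtlety.} There is no deep obstacle here; the argument is elementary. The only point that needs care is to track the exponent of $M$ through the in-place update and to separate the two distinct bottlenecks (the sparse matrix-vector products, governed by $s$, and the dense scalar products against the $\bc_k$, governed by $n$ and $D$), so that the second part of the statement follows as an immediate specialization.
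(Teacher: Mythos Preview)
Your proof is correct and follows essentially the same approach as the paper's: the paper simply declares correctness and termination ``immediate'' and then accounts for the cost by separating the $O(sD)$ matrix-vector products from the $O(nD^2)$ (resp.\ $O(nD)$ in the canonical-basis case) scalar products, exactly as you do. Your version is slightly more explicit (spelling out the induction and the $2D-1$ count), but there is no substantive difference.
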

\begin{proof}
  The termination and the correctness of the algorithm are
  immediate. It remains to prove its complexity.
  
  Each vector matrix product $\bmr M$ accounts for $O(s)$ operations in
  $\K$, hence computing them all requires $O(s D)$ operations.

  Then, we need to perform the scalar products $\bmr\bc_k$
  for $0\leq k\leq n-1$ at each step. Each one needs $O(D)$
  operations. Hence a total of 
  $O(n D^2)$ operations.

  Observe that if $\bc_0,\bc_1,\ldots,\bc_{n-1}$ are vectors of the
  canonical basis, then these scalar products need, each, $O(1)$
  operations, hence a total of $O(n D)$ operations.
  This concludes the proof.
\end{proof}
\begin{algorithm2e}[htbp!]
  \small
  \DontPrintSemicolon
  \caption{Hankel system solving for \spFGLM}\label{algo:param}
  \KwIn{Sequences
    $(\seqw_i^{(0)})_{0\leq i<2 D-1}$ and
    $(\seqw_i^{(k)})_{0\leq i<D}$ for $1\leq k\leq n-1$
    with coefficients in $\K$.
  }
  \KwOut{$\gamma_{0,k},\ldots,\gamma_{D-1,k}$ for $1\leq k\leq n-1$
    such that for all $0\leq i<D$,
    $\seqw_i^{(k)}=\gamma_{D-1,k}\seqw_{D-1+i}^{(0)}+\cdots+\gamma_{0,k}\seqw_i^{(0)}$.
  }
  \For{$k$ \KwFrom $1$ \KwTo $n-1$}{
    Solve the Hankel linear system
    \[
      \begin{pmatrix}
        \seqw_0^{(0)}	&\seqw_1^{(1)}	&\cdots	&\seqw_{D-1}^{(0)}\\
        \seqw_1^{(0)}	&\seqw_2^{(1)}	&\cdots	&\seqw_D^{(0)}\\
        \vdots		&\vdots		&	&\vdots\\
        \seqw_{D-1}^{(0)}	&\seqw_D^{(1)}	&\cdots	&\seqw_{2 D-2}^{(0)}\\
      \end{pmatrix}
      \begin{pmatrix}
        \gamma_{0,k}\\\gamma_{1,k}\\\vdots\\\gamma_{D-1,k}
      \end{pmatrix}
      =
      \begin{pmatrix}
        \seqw_0^{(k)}\\\seqw_1^{(k)}\\\vdots\\\seqw_{D-1}^{(k)}
      \end{pmatrix}.
    \]\nllabel{ln:hankel}
  }
  \KwRet $\gamma_{i,k}$ for $0\leq i<D$ and $1\leq k\leq n-1$.
\end{algorithm2e}
\begin{proposition}\label{prop:param}
  Let $(\seqw_i^{(0)})_{0\leq i<2 D-1}
  (\seqw_i^{(1)})_{0\leq i<D},\ldots,(\seqw_i^{(n-1)})_{0\leq i<D}$
  be sequences such that $(\seqw_i^{(0)})_{0\leq i<2 D-1}$ is linear
  recurrent of order $D$, then \Cref{algo:param} is correct
  and computes, for all $1\leq k\leq n-1$,
  $\gamma_{0,k},\ldots,\gamma_{D-1,k}$ such that
  \[\forall\,0\leq i<D,\
    \seqw_i^{(k)}=\gamma_{D-1,k}\seqw_{D-1+i}^{(0)}
    +\cdots+\gamma_{0,k}\seqw_i^{(0)}\]
  in $O(\sM(D)(n+\log D))$ operations, where
  $\sM(D)$ denote a cost function for multiplying univariate polynomials
  of degree $D$ with coefficients in $\K$.
\end{proposition}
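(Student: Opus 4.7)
The plan is to first establish correctness and then bound the arithmetic cost by factoring the work into a one-time preprocessing on the shared Hankel matrix and $n-1$ right-hand side solves.

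For correctness, the key observation is that the Hankel matrix $H = (\seqw_{i+j}^{(0)})_{0 \leq i,j < D}$ is invertible under the hypothesis that $(\seqw_i^{(0)})$ is linear recurrent of order exactly $D$. Indeed, a nonzero vector in $\ker H$ would encode a linear relation among the $D$ consecutive shifts $(\seqw_{j}^{(0)}, \ldots, \seqw_{D-1+j}^{(0)})$ for $j=0, \ldots, D-1$, hence a recurrence of length strictly less than $D$ satisfied on the first $2D-1$ terms, contradicting the minimality assumption. Once invertibility is established, the linear system on line~\ref{ln:hankel} admits a unique solution $(\gamma_{0,k}, \ldots, \gamma_{D-1,k})$, and its rows are precisely the equations $\seqw_i^{(k)} = \gamma_{D-1,k}\seqw_{D-1+i}^{(0)} + \cdots + \gamma_{0,k}\seqw_i^{(0)}$ for $0 \leq i < D$, which is the desired identity. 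Termination is immediate since the outer loop runs $n-1$ times.

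For the complexity bound, I would invoke fast Hankel/Toeplitz solving techniques (based, for instance, on the Berlekamp--Massey algorithm applied to the generating series of $(\seqw_i^{(0)})_{0\leq i < 2D-1}$ combined with the extended Euclidean algorithm, or equivalently the Bitmead--Anderson / Brent--Gustavson--Yun framework). Computing from $(\seqw_i^{(0)})_{0\leq i<2D-1}$ the minimal polynomial of the sequence together with the auxiliary data needed for Hankel inversion costs $O(\sM(D) \log D)$ operations, and this preprocessing depends only on $H$, not on the right-hand sides. Each of the $n-1$ solves then reduces to a constant number of polynomial multiplications and divisions of degree $O(D)$, costing $O(\sM(D))$ operations. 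Summing gives $O(\sM(D)\log D) + (n-1)\,O(\sM(D)) = O(\sM(D)(n+\log D))$.

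The main obstacle is justifying the fast Hankel solving step cleanly: one must cite or sketch the algorithm that (i) realizes Hankel inversion through a Padé-type approximation of the generating series and (ii) allows several right-hand sides to share a single $O(\sM(D)\log D)$ preprocessing. The invertibility argument and the translation from the system to the claimed identity are then routine.
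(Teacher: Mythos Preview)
Your proposal is correct and follows essentially the same route as the paper: invertibility of the Hankel matrix from the minimality of the recurrence order, then the Brent--Gustavson--Yun fast Hankel solver giving an $O(\sM(D)\log D)$ precomputation of (a representation of) the inverse followed by $O(\sM(D))$ per right-hand side. The paper simply cites \cite{BrentGY1980} for both the invertibility and the complexity, whereas you spell out the kernel argument explicitly; otherwise the arguments coincide.
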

\begin{proof}
  The termination of the algorithm is
  immediate. Since $(\seqw_i^{(0)})_{0\leq i<2 D-1}$ is linear
  recurrent of order $D$, then the Hankel matrix on line~\ref{ln:hankel} is
  invertible, see~\cite{BrentGY1980} or for instance the proof
  of~\cite[Th.~3.2]{BerthomieuS2022}, thus the algorithm is correct.

  Concerning the complexity, using~\cite{BrentGY1980}, we
  can compute a representation of this inverse in $O(\sM(D)\log D)$
  operations in $\K$. Then, multiplying this
  representation of this inverse with the right-hand side member of
  the equality requires $O(\sM(D))$ operations in $\K$.
  Hence a total of
  $O(\sM(D)(n+\log D))$ operations in $\K$.
\end{proof}
We are now in a position to present the \spFGLM algorithm in the shape
position case.
\begin{algorithm2e}[htbp!]
  \small
  \DontPrintSemicolon
  \caption{\spFGLM\label{algo:spfglm}}
  \KwIn{The reduced \gb $\cGdrl$ of a zero-dimensional ideal $I$
    for $\ldrl$ and its associated staircase
    $\Sdrl$ of size $D$. 
  }
  \KwOut{The reduced \gb of $I$ for $\llex$, if it is in shape position.
  }
  Build the matrix $M$ as in \Cref{lem:buildingmatrixzerodim}.\;
  Pick $\bmr\in\K^D$ a row-vector at random.\;
  $\bone\coloneqq(1,0,\ldots,0)^{\rT}$.\tcp*{the column-vector of
    coefficients of $\nf{1}{\cGdrl}{\ldrl}$}
  \For{$k$ \KwFrom $1$ \KwTo $n-1$}{
    Build $\bc_k$ the column-vector of coefficients of
    $\nf{x_k}{\cGdrl}{\ldrl}$.
  }
  Compute $(\seqw_i^{(0)})_{0\leq i<2 D},
  (\seqw_i^{(1)})_{0\leq i<D},\ldots,(\seqw_i^{(n-1)})_{0\leq i<D})$
  with \Cref{algo:seqw} called on
  $M,\bmr,\bone,
  \bc_1,\ldots,\bc_{n-1}$.\;
  $g_n\coloneqq\Berlekamp (\seqw_0^{(0)},\ldots,\seqw_{2 D-1}^{(0)})$.\;
  \lIf{$\deg g_n<D$}{\KwRet ``Not in shape position or bad vector''.
  }
  Compute
  $g_1\coloneqq\gamma_{D-1,1}x_n^{D-1}+\cdots+\gamma_{0,1},\ldots,
  g_{n-1}\coloneqq\gamma_{D-1,n-1}x_n^{D-1}+\cdots+\gamma_{0,n-1}$
  with \Cref{algo:param} called on $(\seqw_i^{(0)})_{0\leq i<2 D-1},
  (\seqw_i^{(1)})_{0\leq i<D},\ldots,(\seqw_i^{(n-1)})_{0\leq i<D})$.\nllabel{ln:spparam}\;
  \KwRet $\{g_n(x_n),x_{n-1}-g_{n-1}(x_n),\ldots,x_1-g_1(x_n)\}$.
\end{algorithm2e}
\begin{theorem}\label{th:spfglm}
  Let $I$ be a zero-dimensional ideal of $\K[\bx]$ of degree $D$, $\cGdrl$ be its
  reduced \gb for $\ldrl$ and $\Sdrl$ be its associated staircase. Let
  $M_{x_n}$ be 
  the matrix of the map
  $f\in\K[\bx]/I\mapsto\nf{x_n f}{\cGdrl}{\ldrl}\in\K[\bx]/I$ in the
  monomial basis $\Sdrl$.

  Let us assume that there are $t$ monomials $\sigma$ in
  $\Sdrl$ such that $x_n\sigma\in\LT[\ldrl](I)$ and that
  $x_1,\ldots,x_{n-1}\in\Sdrl$, that $M_{x_n}$ is
  known and that the reduced \gb $\cGlex$ of $I$ for $\llex$ is in
  shape position.
  Then, one can compute
  $\cGlex$ in $O(t D^2+n\sM(D))$
  operations, where
  $\sM(D)$ denote a cost function for multiplying univariate polynomials
  of degree $D$ with coefficients in $\K$.
\end{theorem}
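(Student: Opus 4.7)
The plan is to apply the \spFGLM algorithm (Algorithm~\ref{algo:spfglm}) and to bound the cost of each of its phases using Propositions~\ref{prop:seqw} and~\ref{prop:param}. Correctness of the output follows directly from the shape position assumption together with the specifications of the subroutines (\Berlekamp and the Hankel solver), so the work is concentrated in the complexity analysis.

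First I would observe that under our hypotheses all the column-vectors handed to Algorithm~\ref{algo:seqw} are vectors of the canonical basis of $\K^D$: by convention $1\in\Sdrl$, so $\bone$ is canonical, and since $x_1,\ldots,x_{n-1}\in\Sdrl$, the normal form of each $x_k$ is just $x_k$ itself, hence $\bc_k$ is the unit vector at the position of $x_k$ in $\Sdrl$. This means the faster ``canonical basis'' branch of Proposition~\ref{prop:seqw} applies, with cost $O((s+n)D)$, where $s$ is the number of nonzero entries of $M_{x_n}$.

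Next I would bound $s$ using Lemma~\ref{lem:buildingmatrixzerodim}. Among the $D$ columns of $M_{x_n}$, exactly $D-t$ correspond to monomials $\sigma_j\in\Sdrl$ with $x_n\sigma_j\in\Sdrl$ and each such column contributes a single nonzero entry. The remaining $t$ columns are filled with coefficients of normal forms and contribute at most $D$ nonzeros apiece. Hence $s\leq (D-t)+tD=O(tD)$, and Algorithm~\ref{algo:seqw} runs in $O(tD^2+nD)$ operations.

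To finish, Berlekamp--Massey applied to the linearly recurrent sequence $(\seqw_i^{(0)})_{0\leq i<2D}$ costs $O(\sM(D)\log D)$, and by Proposition~\ref{prop:param} the Hankel systems in Algorithm~\ref{algo:param} cost $O(\sM(D)(n+\log D))$ altogether. Summing everything and using $\sM(D)\geq D$ to absorb $nD$ into $n\sM(D)$ and the $\sM(D)\log D$ contributions into the leading $tD^2+n\sM(D)$ bound yields the announced complexity. The main point to be careful with is the count of nonzeros in $M_{x_n}$: the naive bound on the ``otherwise'' columns of Lemma~\ref{lem:buildingmatrixzerodim} is as bad as $D$, and the hypothesis on $t$ is exactly what allows us to control the global sparsity $s=O(tD)$, which in turn is what produces the $tD^2$ term rather than a full $D^3$.
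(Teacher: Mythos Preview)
Your proposal is correct and follows essentially the same approach as the paper: bound the sparsity of $M_{x_n}$ via Lemma~\ref{lem:buildingmatrixzerodim} to get $s=O(tD)$, invoke the canonical-basis case of Proposition~\ref{prop:seqw}, then add the $O(\sM(D)\log D)$ and $O(\sM(D)(n+\log D))$ costs for Berlekamp--Massey and the Hankel solves. The only cosmetic difference is in the absorption step: the paper notes that the $t$ monomials $\sigma$ with $x_n\sigma\in\LT[\ldrl](I)$ must include a nontrivial pure power of each $x_k$ for $1\le k\le n-1$, hence $t>n$, and uses this to fold $nD$ into $tD^2$; you instead fold $nD$ into $n\sM(D)$ via $\sM(D)\ge D$, which is equally valid.
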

\begin{proof}
  Taking the
  column-vector $\bone=\pare{1,0,\ldots,0}^{\rT}$ so that for all $i\in\N$,
  $M_{x_n}^i\bone$ is the vector of
  coefficients in $S_{\DRL}$ of $\nf {x_n^i}{\cGdrl}{\ldrl}$, we can
  pick at random a row-vector $\bmr$ to compute the sequence
  $\bseqw^{(0)}=(\seqw^{(0)}_i)_{0\leq i<2 D}=\pare{\bmr M_{x_n}^i\bone}_{0\leq i<2 D}$.
  Generically, the linear recurrence
  relation of minimal order
  satisfied by this 
  sequence
  \[\forall\,i\in\N,\ \seqw^{(0)}_{i+d}+c_{d-1}\seqw^{(0)}_{i+d-1}
    +\cdots+c_0\seqw^{(0)}_i=0,\]
  is such that $\elim=x_n^d+c_{d-1}x_n^{d-1}+\cdots+x_0$ is
  the minimal
  polynomial of $M_{x_n}$.

  Let us assume that $\cGlex$ is in shape position, then there exist
  $\gamma_{0,k},\ldots,\gamma_{D-1,k}^{D-1}$ in $\K$ such that
  $x_k-\gamma_{D-1,k}^{D-1}x_n^{D-1}-\cdots-\gamma_{0,k}=0$ in
  $\K[\bx]/I$. Since $M_{x_n}^d\bc_k$ is the vector of coefficients of
  $\nf{x_n^d x_k}{\cGdrl}{\ldrl}$, by
  multiplying on the left $M_{x_n}^d\bc_k$ by $\bmr M_{x_n}^i$ for all
  $0\leq d\leq D-1$, we obtain
  \[\forall\,0\leq i<D,\
    \seqw_i^{(k)}=\gamma_{D-1,k}\seqw^{(0)}_{D-1+i}+\cdots
    +\gamma_{0,k}\seqw^{(0)}_i.\]
  Hence the algorithm is correct and terminates. Observe that if $\deg g_n<D$, then
  $\cGlex$ is not in shape position and the algorithm is still correct
  to return the error message.
  
  It remains to prove the complexity statement. By assumption,
  $x_1,\ldots,x_{n-1}\in\Sdrl$, hence $1\in\Sdrl$ and
  $\bc_0,\bc_1,\ldots,\bc_{n-1}$ are vectors of the canonical basis.
  Moreover,
  there are $t$ monomials $\sigma$ in $\Sdrl$ such that
  $x_n\sigma\in\LT[\ldrl](I)$, hence $M_{x_n}$ has at most $t D +
  (D-t) = O(t D)$ nonzero coefficients. Observe that among these $t$
  monomials, there must be a pure power of each $x_k$, for $1\leq
  k\leq n-1$, which
  is not $1$, hence $t>n$.
  Therefore,
  by \Cref{prop:seqw}, the call to
  \Cref{algo:seqw} requires $O(t D^2 + n D)=O(t D^2)$ operations.

  Now, using fast variants~\cite{BrentGY1980} of the Berlekamp--Massey
  algorithm~\cite{Berlekamp1968,Massey1969}, one recover the minimal
  linear recurrence relation in $O(\sM(D) \log D)$
  operations. Finally, by \Cref{prop:param}, we can compute
  $g_1,\ldots,g_n$ in $O(\sM(D)(n+\log D))$ operations.

  All in all, we have a complexity in $O(t D^2 + n\sM(D))$ operations
  in $\K$.
\end{proof}
Note that the Berlekamp--Massey algorithm and its faster variants
return a factor of $\elim$, so if the computed polynomial has degree $D$,
\ie it is the characteristic polynomial of $M_{x_n}$, then it is also
its minimal polynomial. Furthermore,
based on a deterministic
variant of Wiedemann's algorithm, one can also provide a
deterministic variant of this algorithm to recover
$\elim$~\cite[Algorithm~4]{FaugereM2017}.

\begin{remark}
  In~\cite{BostanSS2003,Hyun2020163}, the authors consider the case
  where an ideal $J$ is not in shape position but its radical
  $\sqrt{J}$ is. Let us recall that
  $\sqrt{J}=\curl{f\in\K[\bx]\,\middle\vert\,\exists k\in\N, f^k\in J}$,
  see~\cite[Chap.~4, Sec.~2, Def.~4]{CoxLOS2015}.  In that case, the
  lexicographic \gb of $\sqrt{J}$ can be computed in a similar
  fashion, it suffices to replace the call to
  \Cref{algo:param} on line~\ref{ln:spparam} by a call
  to~\cite[Algorithm~2]{Hyun2020163}. 
\end{remark}


\section{The \FquatreSAT algorithm for saturated ideals}
\label{s:F4SAT}
This section is devoted to the design of an algorithm which on input $f_1,
\ldots, f_s$ and $\varphi$ in $\K[\bx]$ computes a \gb of $\satid{I}{\vphi}$ for
a total degree monomial order $\prec$, typically $\ldrl$, where $I = \langle
f_1, \ldots, f_s \rangle$. As explained earlier, this algorithm modifies the
\Fquatre algorithm~\cite{Faugere1999}, with the degree
selection strategy, to discover on the fly polynomials in
$\satid{I}{\vphi}$ as early as possible during the computation. The use of the
$\ldrl$ order allows us to obtain these polynomials of lowest possible degree
early in the computation. 

\subsection{Description of the \FquatreSAT algorithm}
From \Cref{lem:F4trunc}, after the first step of the \Ffour
algorithm in degree $d$, if no new polynomial of degree at most $d$ is
discovered, then the
current \gb $\cG$ is a $d$-truncated \gb of $I$ for $\prec$. Therefore, we have a
partial information on the staircase of $I$, and thus of
$\satid{I}{\vphi}$,
for $\prec$ since we know
monomials that are outside of this staircase.
The \FquatreSAT
algorithm searches for polynomials in $\satid{I}{\vphi}$ whose
supports are entirely included in the given staircase using the fact that
$\colid{\pare{\satid{I}{\vphi}}}{\vphi}=\satid{I}{\vphi}$. If new
polynomials are found, they are added to $\cG$ and the necessary
critical pairs are added to the set of pairs to handle. Then, we resume
the \Fquatre algorithm.

The search of new polynomials is done through linear algebra
computations and is sum up in \Cref{algo:linalgcolon}. From a \gb of an ideal $J$,
$I\subseteq J\subseteq\satid{I}{\vphi}$, we
compute a bound $B$ on the degree of the polynomials in the reduced \gb
of $\satid{J}{\vphi}=\satid{I}{\vphi}$ using the $\ComputeMaxDegree$
routine based on \Cref{th:maxdeggb}.
Then, we compute
$\nf{\sigma\vphi}{\cG}{\prec}$ for all monomials $\sigma$ in the
associated staircase $S$ of degree at most
$B$. Finally, we search for 
vanishing linear combinations thereof. Indeed, if
\[\nf{s\vphi}{\cG}{\prec}
  -\sum_{\substack{\sigma\in S_d\\\sigma\prec s}}
  c_{\sigma}\nf{\sigma\vphi}{\cG}{\prec}
  = 0,\]
then $\pare{s-\sum_{\sigma\in S_d,\sigma\prec s}
  c_{\sigma}\sigma}\vphi\in J$.
This yields \Cref{algo:F4SAT}.
\begin{algorithm2e}[bhtp!]
  \small
  \DontPrintSemicolon
  \caption{\FquatreSAT\label{algo:F4SAT}}
  \KwIn{A list of polynomials $f_1,\ldots,f_s$ spanning an ideal
    $I\subseteq\K[\bx]$, a polynomial $\vphi\in\K[\bx]$ and a total
    degree monomial order $\prec$.}
  \KwOut{
    A \gb $\cG$ of $\satid{I}{\vphi}$ for $\prec$.
  }
  $\cG\coloneqq\curl{f_1,\ldots,f_s}$.\;
  $b\coloneqq\text{\KwTrue}$\tcp*{tracks if $\cG$ has changed}
  $P\coloneqq\curl{(f_i,f_j)\,\middle\vert\, 1\leq i<j\leq s}$\;
  \While{$P\neq\emptyset$\nllabel{ln:F4SATwhile}}{
    Choose a subset $L$ of $P$\nllabel{ln:F4SATwhileL}.\;
    $P\coloneqq P\setminus L$.\;
    $L \coloneqq \operatorname{SymbolicPreprocessing}(L,\cG)$\nllabel{ln:F4linalg}.\;
    $L \coloneqq \operatorname{LinearAlgebra}(L)$.\;
    \For{$h\in L$ with $\LM[\prec](h)\not\in\ideal{\LM[\prec](\cG)}$}{
      $P\coloneqq P\cup\curl{(g,h)\middle\vert g\in \cG}$,
      $\cG\coloneqq \cG\cup\curl{h}$\nllabel{ln:F4SATfor},
      $b\coloneqq\text{\KwTrue}$.\;
    }
    \If(\tcp*[f]{new information on
      $\ideal{\LM(\satid{I}{\vphi})}$}){$b$}{
      $b\coloneqq\text{\KwFalse}$.\;
      $B\coloneqq\max_{\ell\in L}\deg \ell$
      \tcp*{bounds the degrees of the new polynomials}
      $\cH\coloneqq\LinearizeColonIdeal(\cG,B,\prec)$\;
      \Foreach{$h\in\cH$}{
        $P\coloneqq P\cup\curl{(g,h)\,\middle\vert\,g\in\cG}$,
        $\cG\coloneqq \cG\cup\curl{h}$\nllabel{ln:F4SATfor2},
        $b\coloneqq\text{\KwTrue}$.\;
      }
    }
  }
  $B\coloneqq\ComputeMaxDegree (\cG,\vphi,\prec)$.\nllabel{ln:degreebound}
  \tcp*{bounds the degrees in the sought \gb}
  $\cH\coloneqq\LinearizeColonIdeal(\cG,B,\prec)$\;
  \If(\tcp*[f]{The ideal $\ideal{\cG}$ is not saturated}){$\cH\neq\emptyset$}{
    \Foreach{$h\in\cH$}{
      $P\coloneqq P\cup\curl{(g,h)\,\middle\vert\,g\in\cG}$,
      $\cG\coloneqq \cG\cup\curl{h}$\nllabel{ln:F4SATfor3},
      $b\coloneqq\text{\KwTrue}$.\;
    }
    \KwGoto line~\ref{ln:F4SATwhile}\tcp*{resumes the whole \gb computation}
  }
  \KwRet $\cG$.
\end{algorithm2e}
\begin{algorithm2e}[bhtp!]
  \small
  \DontPrintSemicolon
  \caption{$\LinearizeColonIdeal$\label{algo:linalgcolon}}
  \KwIn{A list $\cG$ of polynomials in $\K[\bx]$, a polynomial
    $\vphi$, a degree bound $B$
    and a total degree monomial order $\prec$.}
  \KwOut{
    An auto-reduced family of polynomials $\cH$ such that for all
    $h\in\cH$, $\deg h<B$ and $\nf{h\vphi}{\cG}{\prec}=0$.
  }
  \For{$\sigma\not\in\LM[\prec](\cG)$ \KwAnd
    $\deg\sigma\leq B$}{
    $q_{\sigma}\coloneqq\nf{\sigma\vphi}{\cG}{\prec}$.\nllabel{ln:nfmultipliers}
    \nllabel{ln:F4SATnf}\;
  }
  Build the matrix $M$ whose rows are given by polynomials $q_{\sigma}$ and columns by
  each monomials in their support in decreasing order.\;
  Compute a lower triangular basis $K$ of the left-kernel of
  $M$.\nllabel{ln:F4SATnew}\;
  $\cH\coloneqq\emptyset$\;
  \Foreach{$k\in K$}{
    $\cH\coloneqq\cH\cup\curl{\sum_{\sigma\not\in\ideal{\LM[\prec](\cG)}}
      k_{\sigma}\sigma}$.\tcp*{the polynomial whose vector of coefficients is $k$}
  }
  \KwRet $\cH$.
\end{algorithm2e}

\begin{theorem}\label{th:F4SAT}
  Let $f_1,\ldots,f_s$ be a generating family of an ideal
  $I\subseteq\K[\bx]$, $\vphi\in\K[\bx]$ be a polynomial and $\prec$
  be a total degree monomial order. Then,
  \Cref{algo:F4SAT} terminates and returns a \gb of
  $\satid{I}{\vphi}$ for $\prec$.
\end{theorem}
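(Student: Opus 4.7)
My plan is to establish, in order, an invariant asserting that $\cG$ stays inside $\satid{I}{\vphi}$, then termination via Noetherianity, and finally the equality $\ideal{\cG}=\satid{I}{\vphi}$ upon termination.

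For the invariant $\cG\subseteq\satid{I}{\vphi}$, I would argue by induction on iterations. Initially, $\{f_1,\ldots,f_s\}\subseteq I\subseteq\satid{I}{\vphi}$. The F4-style step appends polynomials that are $\K$-linear combinations of monomial multiples of current elements of $\cG$, hence lie in $\ideal{\cG}\subseteq\satid{I}{\vphi}$. For the saturation block, any vector $b=(b_\sigma)$ in the left-kernel of $M$ satisfies $\sum_\sigma b_\sigma\nf{\sigma\vphi}{\cG}{\prec}=0$, so the appended polynomial $h=\sum_\sigma b_\sigma\sigma$ obeys $h\vphi\in\ideal{\cG}\subseteq\satid{I}{\vphi}$; the identity $\colid{\pare{\satid{I}{\vphi}}}{\vphi}=\satid{I}{\vphi}$ then forces $h\in\satid{I}{\vphi}$.

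Termination follows from Noetherianity of $\K[\bx]$: the ascending chain of monomial ideals $\ideal{\LM[\prec](\cG)}$ must stabilize after finitely many augmentations. Both insertion mechanisms grow this chain strictly whenever they add a polynomial~--- the F4 step by its guard, and the saturation step because its output has support in $\Staircase(\cG)$, forcing $\LM[\prec](h)\notin\ideal{\LM[\prec](\cG)}$. Hence, once $\ideal{\LM[\prec](\cG)}$ stabilizes, no further augmentation occurs, the saturation guard ceases to fire, and the F4 reductions of the remaining pairs in $P$ all reduce to $0$ modulo $\cG$, eventually emptying $P$. At that moment, Buchberger's first criterion yields that $\cG$ is a Gröbner basis of $\ideal{\cG}$ for $\prec$, and the invariant gives $\ideal{\cG}\subseteq\satid{I}{\vphi}$. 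The reverse inclusion will follow from the stronger claim that $\ideal{\cG}$ is itself $\vphi$-saturated, i.e.\ $\colid{\ideal{\cG}}{\vphi}=\ideal{\cG}$, for then monotonicity of saturation together with $I\subseteq\ideal{\cG}$ gives $\satid{I}{\vphi}\subseteq\satid{\ideal{\cG}}{\vphi}=\ideal{\cG}$.

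The main obstacle is this $\vphi$-saturation claim on the output. Any putative $h\in\colid{\ideal{\cG}}{\vphi}\setminus\ideal{\cG}$ may be replaced, using the Gröbner basis $\cG$, by a non-zero normal form $h'=\sum_{\sigma\in\Staircase(\cG)}b_\sigma\sigma$ still satisfying $h'\vphi\in\ideal{\cG}$, which gives a non-trivial left-kernel vector of the matrix $(\nf{\sigma\vphi}{\cG}{\prec})_\sigma$. If $\deg h'\leq\max_{g\in\cG\cup\{\vphi\}}\deg g$, such a vector would have been detected at the last iteration where $\cG$ was augmented, contradicting the fact that $\cG$ does not change thereafter. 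The crux is therefore a degree-bound argument: one must show that by termination the dynamic bound $\max_{g\in\cG\cup\{\vphi\}}\deg g$ has exceeded the maximum degree of a minimal generator of $\colid{\ideal{\cG}}{\vphi}/\ideal{\cG}$. I expect this can be handled by combining the F4 degree-selection strategy, Lemma~\ref{lem:F4trunc} on truncated Gröbner bases, and a Hilbert-series comparison between $\ideal{\cG}$ and $\satid{I}{\vphi}$, so that the degree bound eventually surpasses the Castelnuovo--Mumford threshold beyond which no new colon relation can appear.
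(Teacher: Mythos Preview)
Your overall structure---invariant $\cG\subseteq\satid{I}{\vphi}$, termination via Noetherianity, and $\vphi$-saturation of the output ideal $\ideal{\cG}$---is correct and matches the paper's proof. The gap is precisely where you say ``I expect this can be handled by\ldots'': you correctly isolate the degree bound as the crux but do not actually establish it, and the Castelnuovo--Mumford route you sketch is not what the paper does.

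The paper closes this gap with a concrete statement (Theorem~\ref{th:F4SATCrit}): if $\cG=\{g_1,\ldots,g_r\}$ is a \gb of an ideal $J$ for a total degree ordering, then no element of the reduced \gb of $\satid{J}{\vphi}$ has degree exceeding $\max\big(\max_i\deg g_i,\ \deg\nf{\vphi}{\cG}{\prec}\big)$. This is exactly the dynamic threshold used on line~\ref{ln:F4SATnf}, so at the last iteration where $\cG$ changes, the kernel search already covers all degrees at which a non-trivial colon element could first appear; if none is found, $\ideal{\cG}$ is $\vphi$-saturated. The proof of this bound is not via regularity but via two reductions: first replace $\vphi$ by $x_{n+1}$ by adjoining $g_{r+1}=x_{n+1}^d-\nf{\vphi}{\cG}{\prec}$, whose leading monomial $x_{n+1}^d$ is coprime to all others so Buchberger's second criterion guarantees $\{g_1,\ldots,g_{r+1}\}$ is still a \gb; second, for $\satid{J}{x_{n+1}}$, homogenize, use that the Hilbert series---and hence the maximal degree occurring in a \gb---is invariant under change of graded ordering, swap to a graded ordering with $x_{n+1}$ smallest, and apply Bayer's observation that factoring out $x_{n+1}$-powers from such a \gb yields a \gb of the saturation without increasing degrees. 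Your Hilbert-series intuition points in the right direction, but it is this slack-variable-plus-Bayer mechanism that makes the bound coincide with the algorithm's cutoff.
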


\subsection{Proof of termination and correctness}
\added{
  We start by proving that the algorithm returns a \gb.
}
\added{
\begin{lemma}\label{lem:gb}
  Let $\cG$ be the set of polynomials in line~\ref{ln:degreebound} of
  \Cref{algo:F4SAT} called for $\prec$. Then, $\cG$ is a \gb for $\prec$.
\end{lemma}
}
\added{
\begin{proof}
  When reaching line~\ref{ln:degreebound}, the set $P$ of critical
  pairs of polynomials in $\cG$ is empty. Thus, by Buchberger's first
  criterion~\cite[Chap.~2, Sec.~6, Th.~6]{CoxLOS2015}, $\cG$ is a \gb.
\end{proof}
}

\added{
\begin{theorem}\label{th:gb}
  Let $\cG$ be the output of \Cref{algo:F4SAT} called for
  $\prec$. Then, $\cG$ is a \gb for $\prec$.
\end{theorem}
}
\begin{proof}
  \added{
    By \Cref{lem:gb}, at line~\ref{ln:degreebound}, $\cG$ is a \gb for $\prec$. If,
    at the following line, $\cH$ is empty, then $\cG$ is not changed and
    the algorithm outputs $\cG$, which is a \gb for $\prec$.
  }

  \added{
    Otherwise, $\cG$ is augmented, $P$ is augmented and thus not empty
    and the algorithm returns to line~\ref{ln:F4SATwhile}.
  }

  \added{
    All in all, the algorithm can only reach the return instruction is
    $P$ is empty, hence $\cG$ is a \gb for $\prec$, and $\cH$ is empty
    as well, hence $\cG$ is not augmented. This concludes the proof that
    the returned $\cG$ is a \gb for $\prec$.
  }
\end{proof}

\begin{lemma}\label{lem:staircase}
  Let $I$ and $J$ be two ideals of $\K[\bx]$ such that $I\subseteq J$ and $\cG$
  and $\cH$ be their respective reduced \gbs for a common monomial order
  $\prec$. Let $S$ and $T$ be the associated staircases to $\cG$ and $\cH$.
  Then, $T\subseteq S$. Furthermore, there exist $h_1,\ldots,h_r$, such that for
  all $i$, $\supp h_i\subseteq S$ and $J=I+\ideal{h_1,\ldots,h_r}$.
\end{lemma}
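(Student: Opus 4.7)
The plan is to prove the two claims in turn, both by direct manipulation of reduced Gröbner bases and normal forms.

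For the inclusion $T \subseteq S$, I would start from the basic observation that $I \subseteq J$ implies $\LM(I) \subseteq \LM(J)$: every leading monomial of an element of $I$ is automatically a leading monomial of an element of $J$. Since by Definition~\ref{def:staircase} the staircase $S$ is the complement of $\LM(I)$ in the set of monomials of $\K[\bx]$, and likewise $T$ is the complement of $\LM(J)$, taking complements flips the inclusion and yields $T \subseteq S$.

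For the second claim, the plan is to construct the $h_i$'s by reducing the elements of $\cH$ modulo $\cG$. More precisely, for each $h \in \cH$ I would set $\tilde h \coloneqq \nf{h}{\cG}{\prec}$. By the characterization of the normal form recalled in Section~\ref{s:prelim}, $\tilde h$ is the unique polynomial such that $h - \tilde h \in I$ and no monomial of $\supp \tilde h$ is divisible by any $\LM(g)$ with $g \in \cG$; the latter condition is exactly the statement that $\supp \tilde h \subseteq S$. Writing $\{h_1, \ldots, h_r\}$ for the (nonzero) elements of $\{\tilde h : h \in \cH\}$ then gives polynomials with the desired support condition.

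To conclude $J = I + \ideal{h_1, \ldots, h_r}$, I would argue both inclusions. For $J \subseteq I + \ideal{h_1, \ldots, h_r}$, each $h \in \cH$ satisfies $h = \tilde h + (h - \tilde h) \in \ideal{h_1,\ldots,h_r} + I$, and since $\cH$ generates $J$, the inclusion follows. For the reverse inclusion, $I \subseteq J$ by hypothesis, and each $h_i = h - (h - \tilde h)$ with $h \in \cH \subseteq J$ and $h - \tilde h \in I \subseteq J$, so $h_i \in J$.

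The argument is essentially mechanical once the definitions are unpacked; the only mild subtlety is to remember that the normal form is the right notion to get support in $S$ and that the difference $h - \tilde h$ automatically lies in $I$ (not merely in $J$), which is what makes the decomposition $J = I + \ideal{h_1,\ldots,h_r}$ work rather than just $J = \ideal{h_1,\ldots,h_r,\cG}$.
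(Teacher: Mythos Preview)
Your proof is correct and follows essentially the same approach as the paper: the inclusion $T\subseteq S$ via complements of $\LM(I)\subseteq\LM(J)$, and the construction of the $h_i$'s as normal forms with respect to $\cG$. The only cosmetic difference is that you reduce the elements of $\cH$ while the paper reduces an arbitrary set $f_1,\ldots,f_r$ satisfying $J=I+\ideal{f_1,\ldots,f_r}$; since $\cH$ is such a set, your choice is a special case and the argument goes through identically.
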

\begin{proof}
  Let $\cG=\curl{g_1,\ldots,g_t}$.
  By definition of a \gb,
  for all $f\in I$, there exists $1\leq i\leq r$ such that 
  $\LM(g_i)\mid\LM(f)$. Since $I\subseteq J$, then $f\in J$
  and there also exists
  $h\in\cH$ such that $\LM(h)\mid\LM(f)$, hence
  $\LM(I)\subseteq\LM(J)$. By definition, $S$ (\resp $T$) is the complement of
  $\LM(I)$ (\resp $\LM(J)$) in the set of monomials, hence $T\subseteq S$.

  Since $I\subseteq J$, there exist $f_1,\ldots,f_r$ such that
  $J=I+\ideal{f_1,\ldots,f_r}$. Thus,
  $J=\ideal{g_1,\ldots,g_t,f_1,\ldots,f_r}$.
  By the definition of
  $\cG$ being a \gb of $I$ for $\prec$, for all $1\leq j\leq r$, we have 
  $f_j=q_{j,1} g_1+\cdots+q_{j,t} g_t+\nf{f_j}{\cG}{\prec}\in
  J$. Since $\nf{f_j}{\cG}{\prec}$ has no monomial divisible by
  $\LM(g)$, for $g\in\cG$, its support is a subset of $S$. Thus,
  taking $h_j=\nf{f_j}{\cG}{\prec}$, we have $\supp h_j\in S$
  and
  $J=I+\ideal{h_1,\ldots,h_r}$.
\end{proof}

We will apply \Cref{lem:staircase} with $J=\colid{I}{\vphi}$, thus, by
definition of $\colid{I}{\vphi}$, we also know that for all $1\leq j\leq r$,
$h_j\vphi$ is in $I$. Moreover, $h_j$ is a polynomial whose support is
in $S$, thus
it can be written as $h_j=s-\sum_{\sigma\in S,\sigma\prec s}c_{\sigma}\sigma$,
with $s \in S$ and $c_{\sigma}$'s in $\K$. Since $h_j\vphi\in I$, we know that
\[\nf{h_j\vphi}{\cG}{\prec}
  =0 \quad\text{and}\quad
    \nf{s\vphi}{\cG}{\prec}
  =\sum_{\substack{\sigma\in S\\\sigma\prec s}}
  c_{\sigma}\nf{\sigma\vphi}{\cG}{\prec}.
\]

From \Cref{lem:staircase}, we deduce a superset of the support of
the polynomials in the reduced \gb of $\satid{I}{\vphi}$ for $\prec$
from the staircase $S$ of $I$ for $\prec$. Yet, if $S$ is not finite,
it is not clear up to which degree we need to search for
polynomials in $\satid{I}{\vphi}$, \ie the bound $B$ on
line~\ref{ln:degreebound}. This is given by the following routine:
$\ComputeMaxDegree$.

\begin{algorithm2e}[bhtp!]
  \small
  \DontPrintSemicolon
  \caption{$\ComputeMaxDegree$\label{algo:maxdeg}}
  \KwIn{A \gb \added{$\cG$} of an ideal
    $I\subseteq\K[\bx]$, a polynomial $\vphi\in\K[\bx]$ and a total
    degree monomial order $\prec$.}
  \KwOut{
    A bound on the degree of the polynomials in the reduced \added{\gb
    of} $\satid{I}{\vphi}$ for $\prec$.
  }
  $\psi\coloneqq\nf{\vphi}{\cG}{\prec}$.\;
  $d\coloneqq\deg\psi$.\;
  \added{$I'\coloneqq\ideal{\LM(\cG)}+\ideal{x_{n+1}^d}\subseteq\K[\bx,x_{n+1}]$}.\;
  Compute $S$ and $P$, the Hilbert series and polynomial of
  \added{$I'$}.\;
  Compute $r$, the index of regularity, and $m_0$ for
  \added{$I'$} using $S$ and $P$.\tcp*{see
    \Cref{prop:regindex}}
  \KwRet $\max (r,m_0)$.
\end{algorithm2e}

\added{
  \begin{remark}
    In generic situations, such as those encountered in \Cref{s:implem},
    the required maximum degree is much smaller than the one returned by
    \Cref{algo:maxdeg}. Furthermore, in \Cref{ss:bettermaxdeg}, we
    provide variants
    thereof,
    which also return better
    degree bounds, under different assumptions on $I$ or $I'$.
  \end{remark}
}

In order to prove the correctness of $\ComputeMaxDegree$, or
\Cref{algo:maxdeg}, we start by proving the following lemma,
corresponding to the special case $\vphi=x_n$.

\begin{lemma}\label{lem:F4SATCrit}
  Let $\prec$ be a total degree monomial order.
  Let $\curl{g_1,\ldots,g_r}$ be a \gb
  of an ideal $I\subseteq\K[\bx]$, for $\prec$. Let $B$ be the
  bound defined in \Cref{th:maxdeggb} for
  $I^{\h}=\ideal{f^{\h}\,\middle\vert\,f\in I}$, the
  homogenization of $I$.

  Then, \added{any polynomial
    in the reduced \gb of 
    $\satid{I}{x_n}$ for $\prec$ has a degree
    which is bounded from above by
    $B$.}
\end{lemma}
\begin{proof}
  By~\cite[Chap.~8, Sec.~4, Proof of Th.~4]{CoxLOS2015}, homogenizing
  $g_1,\ldots,g_r$ with variable $x_0$ yields a 
  homogeneous \gb $\cG_1=\curl{g_1^{\h},\ldots,g_r^{\h}}$ for
  $\prec$ with $x_0\prec x_n\prec\cdots\prec x_1$ of the homogeneous ideal
  $I^{\h}$. From $\cG_1$, we
  compute the Hilbert series $\HS_{\K[x_0,\bx]/I^{\h}}(t)$ of
  $\K[x_0,\bx]/I^{\h}$, using~\cite[Chap.~10, Sec.~2,
  Th.~6]{CoxLOS2015} and thus the bound $B$.

  Let $\prec_2$ be a total degree monomial order such that
  $x_n\prec_2\cdots\prec_2 x_1\prec_2 x_0$ and let
  $\cG_2=\curl{\tilde{g}_1^{\h},\ldots,\tilde{g}_s^{\h}}$ be a \gb of
  $I^{\h}$ for $\prec_2$.
  By \Cref{cor:sameHS},
  the Hilbert
  series of the quotient ring $\K[x_0,\bx]/I^{\h}$ only depends on
  $I^{\h}$ and not on the chosen total degree monomial order. Hence
  the bound $B$ also applies to the degree of the polynomials in $\cG_2$:
  \[\max_{1\leq j\leq s} \deg \tilde{g}_j^{\h}\leq B.\]

  Finally, using Bayer's algorithm~\cite[p.~120]{Bayer1982}, we compute
  a \gb of $\satid{I^{\h}}{x_n}$ for $\prec_2$ from $\cG_2$ as follows: for each
  $g\in\cG_2$, find the largest integer $k$ such that $x_n^k$ divides
  $g$ and
  take $\frac{g}{x_n^k}$. Then, we  obtain a \gb
  of $\satid{I}{x_n}$ for $\prec_2$ by
  dehomogenizing the resulting polynomials, \ie by setting $x_0$ to
  $1$. Thus \added{any polynomial 
    in this \gb has degree at most
    $\max_{1\leq j\leq s}\deg\tilde{g}_j^{\h}\leq B$.
  }
\end{proof}
\begin{theorem}\label{th:F4SATCrit}
  Let $\prec$ be a total degree monomial order.
  Let $\cG=\curl{g_1,\ldots,g_r}$ be a \gb
  of an ideal $I\subseteq\K[\bx]$ for $\prec$.
  Let
  $\vphi\in\K[\bx]$.

  Then,
  \Cref{algo:maxdeg} returns a bound on the degree of the
  polynomials in the reduced \gb of $\satid{I}{\vphi}$ for $\prec$.
\end{theorem}
\begin{proof}
  By the definition of a \gb, there exist polynomials $q_1,\ldots,q_r$
  such that
  $\vphi=q_1 g_1+\cdots+q_r g_r + \psi$ and $\psi=\nf{\vphi}{\cG}{\prec}$. Then,
  a polynomial
  $h$ is in $\colid{I}{\vphi}$ if, and only if, $h \vphi$ is in
  $I$. Thus, this is
  equivalent to requiring that $h\psi$ is in
  $I$.
  In other words,
  $\colid{I}{\vphi}=\colid{I}{\psi}$ and $\satid{I}{\vphi}=\satid{I}{\psi}$.
  
  Now, let us denote
  $g_{r+1}=x_{n+1}^d-\psi$, where $x_{n+1}$ is a new indeterminate
  and $d=\deg\psi$. Then, its
  leading monomial for $\prec$ with $x_n\prec\cdots\prec x_1\prec x_{n+1}$ is
  $x_{n+1}^d$. Since $g_1,\ldots,g_r$ do not involve
  $x_{n+1}$, their leading monomials are exactly the same as those for
  $\prec$ with $x_n\prec\cdots\prec x_1$. By Buchberger's second
  criterion~\cite[Chap.~2, Sec.~9, Prop.~4]{CoxLOS2015}, adding
  $g_{r+1}$ to $\cG$ does not create new critical pairs. Since $\cG$
  is already a \gb of $I$ for $\prec$ with $x_n\prec\cdots\prec x_1$,
  by Buchberger's first criterion~\cite[Chap.~2, Sec.~6, Th.~6]{CoxLOS2015},   
  $\cH=\curl{g_1,\ldots,g_r,g_{r+1}}$ is also a \gb of
  $\cI=I+\ideal{x_{n+1}^d-\psi}$ for $\prec$ with
  $x_n\prec\cdots\prec x_1\prec x_{n+1}$.

  Now, let $I^{\h}$ and $\cI^{\h}$ be the respective
  homogenizations of $I$ and $\cI$ with variable $x_0$. Since
  $\LM(\cI^{\h})=\ideal{\LM(I^{\h})}+\ideal{\LM(x_{n+1}^d-\psi)}
  =\ideal{\LM(\cG)}+\ideal{x_{n+1}^d}=\ideal{\LM(\cG\cup\{x_{n+1}^d\})}$,
  these polynomials allow us to compute the Hilbert series and
  polynomial of $\cI^{\h}$. By \Cref{prop:regindex}, from
  the Hilbert series and the
  Hilbert polynomial, we
  compute $r$, the index of regularity, and $m_0$ for $\cI^{\h}$.

  Furthermore, saturating $I$ by $\psi$ is
  equivalent to saturating $\cI$ by $\psi$ (\resp $x_{n+1}^d$, \resp
  $x_{n+1}$) and then eliminating $x_{n+1}$.
  Thus, using \Cref{lem:F4SATCrit} on the ideal
  $\cI$, the \gb $\cH$, and the
  monomial order
  $\prec$ with $x_n\prec\cdots\prec x_1\prec x_{n+1}$, we
  deduce that \added{any polynomial in the \gb $\cH'$ of
    $\satid{\cI}{\vphi}=\satid{\cI}{x_{n+1}}$ has
    degree at most
    $B$. Now, the bound $B$ also applies to the degree of the polynomials
    in $\cH'$ for a monomial order eliminating
    $x_{n+1}$ on $\satid{\cI}{\vphi}$. Finally, setting $x_0$ to $1$
    also yields polynomials of degree at most $B$.
  }
\end{proof}

\begin{remark}
  Knowing the Hilbert series and polynomial of $I^{\h}$
  allows us to directly compute those of $\cI^{\h}$. Indeed,
  by~\cite[Chap.~10, Sec.~2, Ex.~4.a]{CoxLOS2015}, we have
  \begin{align*}
    \HS_{\K[x_0,\bx,x_{n+1}]/\cI^{\h}}(t)
    &=\HS_{\K[x_0,\bx]/I^{\h}}(t) \HS_{\K[x_{n+1}]/\ideal{x_{n+1}^d}}(t)\\
    &=\HS_{\K[x_0,\bx]/I^{\h}}(t) \frac{1-t^d}{1-t}\\
    &=\HS_{\K[x_0,\bx]/I^{\h}}(t)\pare{1+t+\cdots+t^{d-1}}.
  \end{align*}
  Hence, for all $s\geq d-1$,
  $\HP_{\cI^{\h}}(s)=\HP_{I^{\h}}(s)
  +\cdots+\HP_{I^{\h}}(s-d+1)$. Let $r_{I^{\h}}$ (\resp $r_{\cI^{\h}}$) be the index
  of regularity of $I^{\h}$ (\resp $\cI^{\h}$). Since for all $s\geq r_{I^{\h}}$,
  $\HP_{I^{\h}}(s)=\dim_{\K}\K[x_0,\bx]_s/I^{\h}_s$, we have for all
  $s\geq r_{I^{\h}}+d-1$,
  $\HP_{\cI^{\h}}(s)=\dim_{\K}\K[x_0,\bx,x_{n+1}]_s/\cI^{\h}_s$. In
  other words, $r_{\cI^{\h}}\leq r_{I^{\h}}+d-1$. Assuming $I^{\h}$ has dimension
  $\delta$, then so does
  $\cI^{\h}$ and by \Cref{prop:regindex}, there exist
  integers $m_{0,I^{\h}},\ldots,m_{\delta-1,I^{\h}}$ such that
  \[\HP_{I^{\h}}(s)=\sum_{i=0}^{\delta-1}\pare{\binom{s+i}{i+1}
      -\binom{s+i-m_{i,I^{\h}}}{i+1}}.\]
  
  Therefore,
  \begin{align*}
    \HP_{\cI^{\h}}(s)=\sum_{j=0}^{d-1}
    \sum_{i=0}^{\delta-1}\pare{\binom{s-j+i}{i+1}-\binom{s-j+i-m_{i,I^{\h}}}{i+1}}.
  \end{align*}
  Now, it suffices to compute the decomposition, by algebraic manipulation,
  \[\HP_{\cI^{\h}}(s)=\sum_{i=0}^{\delta-1}\pare{\binom{s+i}{i+1}
      -\binom{s+i-m_{i,\cI^{\h}}}{i+1}},\]
  in order to ensure that any polynomial in a \gb of $\cI^{\h}$ has
  degree at most $B=\max(r_{\cI^{\h}},m_{0,\cI^{\h}})$.
\end{remark}

\begin{lemma}\label{lem:F4SATtrunc}
  Let $f_1,\ldots,f_s\in\K[\bx]$ be the input polynomials of the \FquatreSAT
  algorithm. Let $d\in\N$. Assume that the \FquatreSAT algorithm
  uses the degree selection 
  strategy and that, on line~\ref{ln:F4SATwhileL}, $L$
  consists in all the critical pairs of degree $d$.
  
  If no new polynomial is added to $\cG$ on both lines~\ref{ln:F4SATfor}
  and~\ref{ln:F4SATfor2}, then $\cG$ is a
  $d$-truncated \gb of $\colid{\ideal{f_1,\ldots,f_s}}{\vphi}$.
\end{lemma}
\begin{proof}
  The first part of the proof follows the one of \Cref{lem:F4trunc}.
  By the degree selection strategy, only critical pairs of degree at
  least $d$ exist. Since no new polynomial is added at the end of the
  first for loop, this means that all S-polynomials coming from critical pairs
  of degree $d$ reduce to $0$ \wrt $\cG$ and $\prec$. Furthermore,
  $\cG$ contains $f_1,\ldots,f_s$, thus, by
  \Cref{prop:truncatedgb}, $\cG$ is a $d$-truncated \gb of
  $\ideal{f_1,\ldots,f_s}$ for $\prec$.
  
  Now, for $\mu$ the largest monomial of degree
  $d$, seeing $F_{\mu}$ of \Cref{def:truncatedgb} as a
  vector subspace of $\colid{\ideal{f_1,\ldots,f_s}}{\vphi}$ instead
  of $\ideal{f_1,\ldots,f_s}$, by \Cref{prop:truncatedgb}
  and the fact that no polynomial is added at the end of the second
  for loop, we have that $\cG$ is also a $d$-truncated \gb of
  $\colid{\ideal{f_1,\ldots,f_s}}{\vphi}$.
\end{proof}

We are now in a position to prove \Cref{th:F4SAT}.
\begin{proof}\hspace*{-2mm} \emph{of \Cref{th:F4SAT}}
  At the first round of the while loop, $\cG$ contains a generating
  family of $I\subseteq \satid{I}{\vphi}$.

  Let us assume that at each round of the while loop, $\cG$ starts by
  containing a generating family of an ideal $J\subseteq
  \satid{I}{\vphi}$. Then, at the end of the round, it contains a
  generating family of an ideal $K$ with $J\subseteq K\subseteq
  \colid{J}{\vphi}$. Since $J\subseteq \satid{I}{\vphi}$, then
  $K\subseteq\colid{J}{\vphi}\subseteq
  \colid{(\satid{I}{\vphi})}{\vphi}=\satid{I}{\vphi}$.
  
  Thus, by recurrence and the fact that
  $\K[\bx]$ is Noetherian, this sequence of ideals must stabilize
  to an ideal that is included in $\satid{I}{\vphi}$
  so that the loop terminates. Furthermore, by
  \Cref{th:gb},
  the output
  family is a \gb for $\prec$ of the ideal it spans.
  
  By the correctness of the \Fquatre algorithm, the \FquatreSAT algorithm
  computes a \gb for $\prec$ of an ideal $J$ containing $I$. Moreover,
  by \Cref{th:F4SATCrit}, we know that if $\cG$ is 
  a \gb, then the given bound is enough to retrieve a \gb of
  $\satid{\ideal{\cG}}{\vphi}$ and thus of
  $\colid{\ideal{\cG}}{\vphi}$.
  Furthermore, at each
  round of the loop, if $\cG$ has been enlarged, then the
  algorithm looks for new polynomials in the saturated
  ideal.
  Otherwise, by \Cref{lem:F4SATtrunc}, $\cG$ is a $d$-truncated
  \gb of $J=\colid{\ideal{\cG}}{\vphi}$, which satisfies $I\subseteq
  J\subseteq \colid{J}{\vphi}\subseteq\satid{I}{\vphi}$.

  Therefore, the algorithm can only terminate
  if $J$ is saturated by $\vphi$, that is $\colid{J}{\vphi}=J$.

  Since $\satid{I}{\vphi}$ is the smallest ideal containing $I$ and
  saturated by $\vphi$, we conclude that $J=\satid{I}{\vphi}$.
\end{proof}


\subsection{\added{Better degree bounds}}\label{ss:bettermaxdeg}
\added{
  In this section, we provide variants of \Cref{algo:maxdeg} that return
  better degree bounds, assuming the ideals satisfy some properties.
  We start with the case when $\satid{I}{\vphi}$ is
  $0$-dimensional. Then, 
  relying on \Cref{th:maxdeggbgeneric}, we provide better degree bounds for
  the polynomials that are computed by \Cref{algo:F4SAT} in some generic
  situations.
}

\added{
  \begin{theorem}
    Let $\cG$ be the \gb for $\ldrl$ in line~\ref{ln:degreebound} of
    \Cref{algo:F4SAT}. Assume that the ideal $\ideal{\cG}$ has dimension $0$.
    Then, on line~\ref{ln:degreebound} of \Cref{algo:F4SAT}, one can
    call the variant \Cref{algo:maxdegdim0} of $\ComputeMaxDegree$.
  \end{theorem}
}
\added{
  \begin{proof}
    By the proof of \Cref{th:F4SAT}, the ideal
    $J=\ideal{\cG}$ satisfies $I\subseteq
    J\subseteq\satid{I}{\vphi}$, so that
    $\satid{J}{\vphi}=\satid{I}{\vphi}$.
    Thus,
    applying \Cref{lem:staircase} on $J$ and $\satid{I}{\vphi}$, we
    want to compute some polynomials $h_1,\ldots,h_r$ whose supports
    are in the staircase associated to $\cG$ such that
    $\satid{I}{\vphi} = \ideal{\cG} + \ideal{h_1,\ldots,h_r}$. Since $J$ is
    $0$-dimensional, this staircase is finite and it suffices to
    consider all the monomials in this staircase. This can be done by
    taking as the degree bound, the maximum degree of the monomials in
    this staircase.
  \end{proof}
}

\begin{algorithm2e}[bhtp!]
  \small
  \DontPrintSemicolon
  \caption{$\ComputeMaxDegree$ in dimension $0$\label{algo:maxdegdim0}}
  \KwIn{A \gb $\cG$ of a $0$-dimensional ideal
    $I\subseteq\K[\bx]$, a polynomial $\vphi\in\K[\bx]$ and a total
    degree monomial order $\prec$.}
  \KwOut{
    A bound on the degree of the polynomials in the reduced \added{\gb
      of} $\satid{I}{\vphi}$ for $\prec$.
  }
  Build the staircase $S$ associated to $\cG$ for $\prec$.\;
  \KwRet $\max_{\sigma\in S}\deg \sigma$.
\end{algorithm2e}
\added{
  Recall that testing if $\cG$ spans a $0$-dimensional ideal
  is easy, it suffices to check that for each variable there is a
  polynomial whose leading monomial is a pure power thereof.
}


\added{
  \begin{definition}\label{def:Noetherpos}
    Let $I$ be an ideal of $\K[\bx]$ of dimension $\delta$. $I$ is in
    \emph{Noether position} if $x_{n-\delta+1},\ldots,x_n$ are free in $\K[\bx]/I$.
  \end{definition}
  Observe that this property can be easily tested: it suffices to
  compute a \gb of $I$ for a monomial order $\prec$ such that
  $x_n\prec\cdots\prec x_1$ and to check that no leading monomial is
  purely in $x_{n-\delta+1},\ldots,x_n$.
}

\added{
  \begin{lemma}\label{lem:CMmaxdegree}
    Let $I\subseteq\K[\bx]$ be an ideal and $\vphi$ be a polynomial of
    degree $d$. Assume that $\cI=I+\ideal{x_{n+1}^d-\vphi}$ is
    Cohen-Macaulay or has dimension at most $1$ and that it is in
    Noether position. Let $\cG$ be the reduced \gb of this ideal for $\ldrl$ and
    $r$ be its index of regularity. Then, any polynomial in the
    reduced \gb of $\satid{\cI}{x_{n+1}}$ for $\ldrl$ has degree at most $r$.
  \end{lemma}
}
\begin{proof}
  \added{
    By \Cref{th:maxdeggbgeneric}, the index of regularity $r$ of
    $\cI=I+\ideal{x_{n+1}^d-\vphi}$ bounds the degree of the polynomials
    in $\cG$ for $\ldrl$ and thus of its
    homogenization for $\ldrl$ with variable $x_0$ as the smallest.
  }
  \added{
    By the Noether position assumption, $r$ also bounds the
    degree of the \gb of the homogenization for $\ldrl$ where $x_0$ is
    now the largest variable and $x_{n+1}$ the smallest. Recall that the
    saturation process consists now in dividing the polynomials of
    this \gb by $x_{n+1}$ as much as possible and then in
    dehomogenizing them. Therefore, the degrees of the polynomials can
    only decrease and
    thus remain bounded from above by $r$.
  }
\end{proof}

\added{
  We are now in a position to design another variant of
  \Cref{algo:maxdeg}, namely \Cref{algo:maxdegdim1}, when the input
  \gb spans a $1$-dimensional
  ideal in Noether position. Recall that testing if $\cG$ spans such
  a $1$-dimensional ideal
  in Noether position is easy. It suffices to check that no polynomial
  in $\cG$ has for leading monomial a pure power of $x_n$ and that no
  pair of variables is free, i.e. for each $1\leq i<j\leq n$,
  there exist $k,\ell$ such that $x_i^k x_j^{\ell}$ is divisible by a
  leading monomial of $\cG$.
}
\begin{algorithm2e}[bhtp!]
  \small
  \DontPrintSemicolon
  \caption{$\ComputeMaxDegree$ in dimension $1$ in Noether position
    \label{algo:maxdegdim1}}
  \KwIn{A \gb $\cG$ of a $1$-dimensional ideal
    $I\subseteq\K[\bx]$ in Noether position,
    a polynomial $\vphi\in\K[\bx]$ and a total
    degree monomial order $\prec$.}
  \KwOut{
    A bound on the degree of the polynomials in the reduced \added{\gb
      of} $\satid{I}{\vphi}$ for $\prec$.
  }
  $\psi\coloneqq\nf{\vphi}{\cG}{\prec}$.\;
  $d\coloneqq\deg\psi$.\;
  \added{$I'\coloneqq\ideal{\LM(\cG)}+\ideal{x_{n+1}^d}\subseteq\K[\bx,x_{n+1}]$}.\;
  Compute $S$ and $P$, the Hilbert series and polynomial of
  \added{$I'$}.\;
  Compute $r$, the index of regularity using $S$ and $P$.\tcp*{see
    \Cref{prop:regindex}}
  \KwRet $r$.
\end{algorithm2e}

\added{
  \begin{theorem}\label{th:F3SATdim1}
    Let $\cG$ be the \gb for $\ldrl$ in line~\ref{ln:degreebound} of
    \Cref{algo:F4SAT} and let $\vphi$ be a polynomial of degree
    $d$. Assume that the ideal
    $\cJ=\ideal{\cG}+\ideal{x_{n+1}^d-\vphi}$ is in Noether position
    and furthermore either that it is Cohen-Macaulay or that it has
    dimension $1$.
    Then, on line~\ref{ln:degreebound} of \Cref{algo:F4SAT}, one
    can call the variant \Cref{algo:maxdegdim1} of $\ComputeMaxDegree$.
  \end{theorem}
}
\begin{proof}
  \added{
    By \Cref{lem:CMmaxdegree}, $r$ bounds the degree of any polynomial
    in the reduced \gb of $\satid{\cJ}{x_{n+1}}$ for $\ldrl$. Now,
    mimicking the proof of \Cref{th:F4SATCrit}, with $B=r$ as the
    degree bound on the polynomials in the reduced \gb of
    $\satid{\cJ}{x_{n+1}}=\satid{\cJ}{\vphi}$ for $\ldrl$, we have that any
    polynomial in the reduced \gb of
    $\satid{J}{\vphi}=\satid{I}{\vphi}$ for $\ldrl$ has degree at most $r$.
  }
\end{proof}

\added{
  \begin{remark}
    Empirically, in the experiments of \Cref{s:implem}, the first time
    \Cref{algo:F4SAT} computes a \gb $\cG$ of an ideal $J$ for $\ldrl$, with
    $I\subseteq J\subseteq\satid{I}{\vphi}$, $J$ is actually
    $\satid{I}{\vphi}$. Therefore, the saturation step using
    $\LinearizeColonIdeal$ is useless.
  \end{remark}
}

\subsection{Practical optimization}
\label{ss:F4SAToptim}
\paragraph{\added{Early termination.}}
As we shall see in \Cref{s:implem}, the most expensive step of
\FquatreSAT is the last saturation step, that is checking on
line~\ref{ln:F4SATnew} that no new
polynomial in the saturated ideal can be formed thanks to the
monomials in the discovered staircase of the computed \gb $\cG$. To bypass
this, whenever we detect that $\satid{I}{\vphi}$ is
zero-dimensional, we rely on
the following trick
to determine if we 
have computed a \gb of $\satid{I}{\vphi}$: From the geometric point of
view, the variety defined by $\satid{I}{\vphi}$ is the Zariski closure
of the variety defined by $I$ to which the variety defined
by $\vphi$ is removed. If this resulting variety is a finite set of points, then
none of them can lie on the hypersurface defined by $\vphi$. Thus the
intersection of this set of points and this hypersurface is
empty. Algebraically, this means that $\pare{\satid{I}{\vphi}} +
\ideal{\vphi}=\ideal{1}$. Hence, we add $\vphi$ to $\cG$
and run
the \Ffour algorithm. If the output
is indeed $1$, then the saturation has already been computed.

However, if the current computed \gb spans a positive-dimensional
ideal, this trick does not work. By \Cref{th:maxdeggb,th:F4SATCrit}, we may
derive a bound on the degree of the sought polynomials in the reduced \gb of
$\satid{I}{\vphi}$ which ensures the termination of
\FquatreSAT. Yet, as this bound might be highly pessimistic, for
efficiency issues, it is crucial to provide 
an early termination criterion. To do so, we rely on Rabinowitsch
trick~\cite{Rabinowitsch1930} and~\cite[Chap.~4, Sec.~4, Th.~14,
(ii)]{CoxLOS2015}. Whenever Buchberger's criterion ensures us that
$\cG$ is a \gb, we check that it spans $\satid{I}{\vphi}$ by
computing a \gb of $\ideal{\cG}+\ideal{1-t\vphi}$ for a monomial
order eliminating $t$.

\paragraph{\added{Delayed kernel computation.}}
Furthermore, the search of new polynomials in the saturated ideal need
not be performed as soon as possible. As an optimization, we can
decide to perform it after a given number of steps of the \Fquatre
algorithm, so that the new information on the staircase increases the 
probability to find new polynomials in the saturated ideal.
\added{For instance, in \Cref{s:implem},
  we search for new elements in the saturation only if \Fquatre has
  added new elements to the basis in three distinct linear algebra steps.}

Furthermore, if we target
specifically small degree polynomials, we can require to only compute
the $q_{\sigma}$ for small degree $\sigma$'s compared to the degrees of
the polynomials in $\cG$ on line~\ref{ln:F4SATnf}. Then, when no new
polynomials are found and the 
set of critical pairs is empty, we can compute all the $q_{\sigma}$ to
ensure the correctness of the algorithm and the output.
\added{As such, in \Cref{s:implem}, $q_{\sigma}$ is computed if $\deg\sigma$ is at
  most $2/3$ of the maximal degree in the current basis $\cG$, in order to speed
  intermediate steps of the algorithm up. Moreover, when the
  set of critical pairs is empty, all $q_{\sigma}$ up to the maximal degree of the
  basis are taken into account.
}

\paragraph{\added{Tracer over $\Q$.}}
When the base field is the field of rational numbers, a practical
efficient implementation of the \FquatreSAT 
algorithm requires a multi-modular approach. Like for the \Fquatre algorithm,
we can use a tracing algorithm~\cite{msolve} where we \emph{learn}
from the first modular steps
and \emph{apply} optimal computations in the following modular steps.

In contrast to \Fquatre we cannot learn all information needed for optimal runs 
of \FquatreSAT in the first modular run:
Observe that on line~\ref{ln:nfmultipliers}, the normal form
$q_{\sigma}$ can be
computed iteratively to increase the usage of already pre-reduced data from 
lower degrees: If $q_{\sigma}$ was computed in a
previous turn, we reduce it \wrt to the new $\cG$ and $\prec$. Though,
modulo the first prime $p_1$, we cannot learn how these iterative reductions are
performed, since there might be useless saturation steps. Since these are 
skipped in the following modular steps, we cannot predict, during the 
computation modulo $p_1$, the normal form computations modulo other primes. However, these reductions can be learned from
the computations modulo $p_2$, the second modular step. Here we know exactly when 
we apply useful saturation steps, thus the normal form computations and its 
information stabilizes. This might also have an impact on the overall \Fquatre 
computation, thus we can only learn when to apply useful saturation steps modulo 
$p_1$. Only in the second modular step can we learn all the information for the 
complete computation.

The \emph{tracing} of \FquatreSAT can now be described by three main steps:
\begin{enumerate}
    \item In the \emph{first} modular computation \emph{learn}
    \begin{enumerate}
        \item when to apply a useful saturation step and
        \item which \Fquatre matrices give new information for the basis.
    \end{enumerate}
Since we cannot learn anything further for the \Fquatre computation we can apply 
probabilistic linear algebra to accelerate this step.
\item In the \emph{second} modular computation \emph{learn}
    \begin{enumerate}
        \item all polynomial data that is needed in the \Fquatre matrices to 
            generate the non-zero information \wrt each corresponding matrix 
            and
        \item all polynomial data needed to apply the normal form computations 
            in the useful saturation steps.
    \end{enumerate}
In order to learn this data we have to apply exact linear algebra.
\item For all \emph{successive} modular computations
    we can just \emph{apply} the learned data, no need of handling critical pairs or 
symbolic preprocessing. There will be no reductions to zero, all computational 
steps will be useful from now on.
\end{enumerate}
Note that the last saturation step, which just ensures us that
the ideal is saturated and so does not provide any new polynomial, should
only be run in the first modular 
computation.

In \Cref{s:implem}, the first learning phase, modulo $p_1$, is
denoted by \emph{learn~1}, the second one, modulo $p_2$, is denoted by
\emph{learn~2}. The apply phase, modulo $p_3,\ldots$, is denoted by \emph{apply}.


\section{Change of order algorithm for colon ideals}
\label{s:SpFGLMcol}
In this section, let $I$ be an ideal of $\K[\bx]$, let $\cGdrl$ be its
reduced \gb for $\ldrl$ and $\Sdrl$ be its associated staircase and
let $\varphi$ be
a polynomial. We assume that the colon ideal $\colid{I}{\vphi}$ is zero-dimensional,
thus $\vphi\notin I$,
and that its lexicographic reduced \gb $\cHlex$ is in \emph{shape
  position}:
There exist $h_1,\ldots,h_n\in\K[x_n]$,
with $\deg h_k<\deg h_n$ for $1\leq k\leq n-1$,
such that
\[\cHlex=\curl{h_n(x_n),x_{n-1}-h_{n-1}(x_n),\ldots,x_1-h_1(x_n)}.\]
We design a new algorithm to compute $\cHlex$ from $\cGdrl$, even
when $I$ is positive-dimensional. Our approach is to build a matrix $\tiM_{x_n}$
so that applying Wiedemann's algorithm allows us to recover
$\cHlex$,
similarly to the \spFGLM
algorithm~\cite{FaugereM2011,FaugereM2017}.

Firstly, we discuss the situation if $I$ is
zero-dimensional~(\Cref{ss:zerodim}). Next, we handle the
case when $I$ is
positive-dimensional~(\Cref{ss:posdim}).
\Cref{ss:matrix} focuses on the construction of the matrix
$\tiM_{x_n}$, followed by possible optimizations in \Cref{ss:optim}.
Finally, in
\Cref{ss:nonshape}, we discuss how to handle the situation if the assumption
that $\cHlex$ is in shape position is dropped.

\subsection{The case where $I$ is zero-dimensional ideal}
\label{ss:zerodim}
Let $I$ be zero-dimensional of degree $D$.
Thus, $\colid{I}{\vphi}$ is also zero-dimensional, of
degree $D'\leq D$.  Let $\Sdrl$ be the staircase of $I$
associated to $\cGdrl$ and
let $\bphi$ be the vector of coefficients of $\nf{\vphi}{\cGdrl}{\ldrl}$
in the basis given by $\Sdrl$. Further, let $M_{x_n}$ be the matrix of the map
$f\in\K[\bx]/I\mapsto \nf{x_n f}{\cGdrl}{\ldrl}\in\K[\bx]/I$ in the
basis $\Sdrl$.
\begin{lemma}\label{lem:elimcolon}
  Let $I$ be a zero-dimensional ideal of $\K[\bx]$ of degree $D$, $\cGdrl$ be its
  reduced \gb for $\ldrl$ and $\Sdrl$ be the associated staircase. Let
  $\vphi\in\K[\bx]\setminus I$ be such that $\colid{I}{\vphi}$ is in
  shape position and let
  $\cHlex=\{h_n(x_n),x_{n-1}-h_{n-1}(x_n),\ldots,x_1-h_1(x_n)\}$ be
  the reduced \gb of $\colid{I}{\vphi}$ for $\llex$, with $\deg h_n=D'$.

  Let $M_{x_n}$ be the matrix of the map
  $f\in\K[\bx]/I\mapsto\nf{x_n f}{\cGdrl}{\ldrl}\in\K[\bx]/I$ in the
  basis $\Sdrl$ and $\bphi$ be the vector of coefficients of $\vphi$
  in the basis $\Sdrl$.

  Then, for all $i\in\N$,
  $M_{x_n}^i\bphi$ is the vector of coefficients of
  $\nf{x_n^i\vphi}{\cGdrl}{\ldrl}$.

  Furthermore, let $\bmr\in\Kbar^D$ be a row-vector and
  $\bseqw=(\seqw_i)_{i\in\N} =(\bmr M_{x_n}^i\bphi)_{i\in\N}$. Let $d\in\N$ be
  minimal such that there exist $c_0,\ldots,c_{d-1}\in\K$ such that
  \[\forall i\in\N,\quad
    \seqw_{i+d}+c_{d-1}\seqw_{i+d-1}+\cdots+c_0\seqw_i=0.\]
  Then, the polynomial $x_n^d+c_{d-1} x_n^{d-1}+\cdots+c_0$ divides
  $h_n$. Furthermore, if $\bmr$ is generic in $\Kbar^D$, then $d=D'$ and
  $h_n=x_n^d+c_{d-1}x_n^{d-1}+\cdots+c_0$.
\end{lemma}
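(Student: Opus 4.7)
The first identity I would prove by a quick induction on $i$: for $i=0$ it is the definition of $\bphi$, and since $M_{x_n}$ represents multiplication by $x_n$ in the quotient $\K[\bx]/I$ with respect to the basis $\Sdrl$, the inductive step gives $M_{x_n}^{i+1}\bphi=M_{x_n}(M_{x_n}^i\bphi)$ as the coordinate vector of $\nf{x_n\cdot x_n^i\vphi}{\cGdrl}{\ldrl}=\nf{x_n^{i+1}\vphi}{\cGdrl}{\ldrl}$.

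For the divisibility claim, the plan is to exploit the shape-position hypothesis to identify $\colid{I}{\vphi}\cap\K[x_n]$ with $\ideal{h_n}$: any polynomial in the elimination ideal reduces modulo the reduced basis $\cHlex$ to a polynomial in $\K[x_n]$ of degree strictly less than $D'$ and therefore vanishes. Hence $h_n\vphi\in I$, so $\nf{h_n\vphi}{\cGdrl}{\ldrl}=0$. By the first claim and linearity this translates into the matrix identity $h_n(M_{x_n})\bphi=\bzero$. Left-multiplying by $\bmr M_{x_n}^i$ for each $i\geq 0$ shows that the coefficients of $h_n$ give a valid $\K$-linear recurrence for $\bseqw$. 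Consequently $h_n$ belongs to the ideal of $\K[x_n]$ annihilating $\bseqw$ as a recurrence, whose monic generator is exactly $P(x_n)=x_n^d+c_{d-1}x_n^{d-1}+\cdots+c_0$, so $P\mid h_n$ in $\K[x_n]$.

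The genericity claim is the main obstacle. The crucial additional observation is that $h_n$ is not merely \emph{an} annihilator of $\bphi$ under $M_{x_n}$ but its \emph{minimal} polynomial: any $Q\in\K[x_n]$ with $Q(M_{x_n})\bphi=\bzero$ satisfies $Q\vphi\in I$, hence $Q\in\colid{I}{\vphi}\cap\K[x_n]=\ideal{h_n}$, forcing $h_n\mid Q$. Extending scalars to $\Kbar$, the cyclic subspace $V=\Span_{\Kbar}\curl{M_{x_n}^i\bphi\,\middle\vert\,i\in\N}\subseteq\Kbar^D$ has dimension exactly $D'$, and $M_{x_n}$ restricted to $V$ is conjugate to multiplication by $x_n$ on $\Kbar[x_n]/\ideal{h_n}$. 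For each maximal proper monic divisor $Q$ of $h_n$ over $\Kbar$ (one per $\Kbar$-irreducible factor of $h_n$, hence finitely many), the vector $Q(M_{x_n})\bphi$ is nonzero and generates a positive-dimensional cyclic subspace $V_Q$ of $V$ under $M_{x_n}$. The condition that such a $Q$ annihilates $\bseqw$ as a $\Kbar$-recurrence is equivalent to $\bmr$ vanishing on $V_Q$, and therefore defines a proper $\Kbar$-linear subspace of $\Kbar^D$. The finite union of these subspaces is a proper Zariski-closed subset of $\Kbar^D$; outside it, the minimal $\Kbar$-annihilator of $\bseqw$ is precisely $h_n$. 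Since $P\in\K[x_n]\subseteq\Kbar[x_n]$ is itself a $\Kbar$-annihilator of $\bseqw$, it must be divisible by $h_n$, which combined with $P\mid h_n$ from the previous paragraph yields $P=h_n$ and $d=D'$.
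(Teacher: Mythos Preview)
Your proof is correct and follows essentially the same route as the paper: establish that $M_{x_n}^i\bphi$ represents $\nf{x_n^i\vphi}{\cGdrl}{\ldrl}$ by induction, identify $h_n$ as the minimal polynomial of $\bphi$ under $M_{x_n}$ via $\colid{I}{\vphi}\cap\K[x_n]=\ideal{h_n}$, deduce $P\mid h_n$, and then argue genericity. The only difference is cosmetic: the paper simply cites Wiedemann's algorithm for the genericity step, whereas you spell out the cyclic-subspace argument explicitly, which is a welcome addition rather than a departure.
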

\begin{proof}
  Since $\bphi$ is the vector of coefficients of $\vphi$ in
  $\K[x_n]/I$, then by construction of $M_{x_n}$, $M_{x_n}\bphi$ is
  the vector of coefficients of $\nf{x_n\vphi}{\cGdrl}{\ldrl}$ in
  $\K[x_n]/I$. By recurrence, we obtain the first statement.

  Now, since $\K[\bx]/I$ is finite-dimensional, there exists a smallest
  integer $b$ such that $\bphi, M_{x_n}\bphi,\ldots,M_{x_n}^b\bphi$ are
  not linearly independent. We let $a_0,\ldots,a_{b-1}\in\K$ such that
  \[M_{x_n}^b\bphi+a_{b-1}M_{x_n}^{b-1}\bphi+\cdots+a_0\bphi=0.\]
  Thus,
  $\nf{\pare{x_n^b+a_{b-1}x_n^{b-1}+\cdots+a_0}\vphi}{\cGdrl}{\ldrl}=0$
  and the polynomial $(x_n^b+a_{b-1}x_n^{b-1}+\cdots+a_0)\vphi$
  is in $I$. Hence,
  $\pare{x_n^b+a_{b-1}x_n^{b-1}+\cdots+a_0}\in\colid{I}{\vphi}$.

  By the minimality of $b$, this ensures that
  $h_n=x_n^b+a_{b-1}x_n^{b-1}+\cdots+a_0$.

  Now, multiplying the vector equality above by $\bmr M_{x_n}^i$ on
  the left yields
  \[\forall\,i\in\N,\ \seqw_{i+b}+a_{b-1}\seqw_{i+b-1}+\cdots+a_0\seqw_i=0.\]
  Thus, $\bseqw$ is linearly recurrent of order at most $b$ and $d\leq
  b$. Since linear recurrences are in one-to-one correspondence with
  polynomials, these polynomials define an ideal of $\K[x_n]$ spanned
  by $x_n^d+c_{d-1}x_n^{d-1}+\cdots+c_0$ that contains $h_n$. Hence
  the former divides the latter.

  Following
  the proof of
  Wiedemann's algorithm~\cite{Wiedemann1986}, it suffices to take $\bmr$
  outside finitely many vector subspaces of $\bar{\K}^D$ to recover
  the minimal polynomial of $M_{x_n}$ instead of a proper factor
  thereof. Thus, for $\bmr$ generic, we actually compute $h_n$.
\end{proof}

\begin{lemma}\label{lem:paramcolon}
  Let $I$ be a zero-dimensional ideal of $\K[\bx]$ of degree $D$, $\cGdrl$ be its
  reduced \gb for $\ldrl$ and $\Sdrl$ be the associated staircase.  Let
  $\vphi\in\K[\bx]\setminus I$ be such that $\colid{I}{\vphi}$ is in
  shape position and let
  $\cHlex=\{h_n(x_n),x_{n-1}-h_{n-1}(x_n),\ldots,x_1-h_1(x_n)\}$ be
  the reduced \gb of $\colid{I}{\vphi}$ for $\llex$.

  Let $M_{x_n}$ be the matrix of the map
  $f\in\K[\bx]/I\mapsto\nf{x_n f}{\cGdrl}{\ldrl}\in\K[\bx]/I$ in the
  basis $\Sdrl$ and $\bphi$ be the vector of coefficients of $\vphi$
  in the basis $\Sdrl$.

  Let $\bmr\in\Kbar^D$ be a generic row-vector and
  $\bseqw=(\seqw_i)_{i\in\N} =(\bmr M_{x_n}^i\bphi)_{i\in\N}$. Let $d\in\N$ be
  minimal such that there exist $c_0,\ldots,c_{d-1}\in\K$ such that
  \[\forall i\in\N,\quad
    \seqw_{i+d}+c_{d-1}\seqw_{i+d-1}+\cdots+c_0\seqw_i=0,\]
  then $d=D'$ and $h_n=x_n^d+c_{d-1}x_n^{d-1}+\cdots+c_0$.

  For all $1\leq k\leq n-1$, let $\bpsi_k$
  be the vector of coefficients of
  $\nf{x_k\vphi}{\cGdrl}{\ldrl}$, then $M_{x_n}^i\bpsi_k$ is the
  vector of coefficients of
  $\nf{x_n^ix_k\vphi}{\cGdrl}{\ldrl}$ and
  there exist unique $ \gamma_{0,k},\ldots,\gamma_{D'-1,k}\in\K$ such that
  \[
    \begin{pmatrix}
      \seqw_0	&\seqw_1	&\cdots	&\seqw_{D'-1}\\
      \seqw_1	&\seqw_2	&\cdots	&\seqw_{D'}\\
      \vdots	&\vdots		&	&\vdots\\
      \seqw_{D'-1}	&\seqw_{D'}	&\cdots	&\seqw_{2 D'-2}\\
    \end{pmatrix}
    \begin{pmatrix}
      \gamma_{0,k}\\\gamma_{1,k}\\\vdots\\\gamma_{D'-1,k}
    \end{pmatrix}
    =
    \begin{pmatrix}
      \bmr M_{x_n}^0 \bpsi_k\\\bmr M_{x_n}^1 \bpsi_k\\\vdots\\
      \bmr M_{x_n}^{D'-1} \bpsi_k\\
    \end{pmatrix},
  \]
  and $h_k=\gamma_{D'-1,k}x_n^{D'-1}+\cdots+\gamma_{0,k}$.
\end{lemma}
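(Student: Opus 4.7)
The plan is to establish the three claims in order: first the multiplication formula $M_{x_n}^i \bpsi_k=\nf{x_n^i x_k\vphi}{\cGdrl}{\ldrl}$, then the existence and uniqueness of the $\gamma_{j,k}$ solving the Hankel system, and finally that the resulting polynomial is precisely $h_k$. The first part of the statement about $d=D'$ and $h_n$ is a verbatim restatement of Lemma~\ref{lem:elimcolon} and needs no separate argument.

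For the multiplication formula, I would argue exactly as in the first paragraph of the proof of Lemma~\ref{lem:elimcolon}. The vector $\bpsi_k$ represents $\nf{x_k\vphi}{\cGdrl}{\ldrl}$ in the basis $\Sdrl$, and by definition $M_{x_n}$ is the matrix, in that basis, of the endomorphism $f\mapsto \nf{x_n f}{\cGdrl}{\ldrl}$ of $\K[\bx]/I$. An immediate induction then gives $M_{x_n}^i\bpsi_k$ as the coefficient vector of $\nf{x_n^i x_k\vphi}{\cGdrl}{\ldrl}$.

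The heart of the proof is to exploit the shape position of $\cHlex$. Since $x_k-h_k(x_n)\in\colid{I}{\vphi}$, we have $(x_k-h_k(x_n))\vphi\in I$, hence $\nf{x_k\vphi}{\cGdrl}{\ldrl}=\nf{h_k(x_n)\vphi}{\cGdrl}{\ldrl}$. Writing $h_k=\gamma_{D'-1,k}x_n^{D'-1}+\cdots+\gamma_{0,k}$ and translating this identity through the first part yields the vector equality $\bpsi_k=\gamma_{0,k}\bphi+\gamma_{1,k}M_{x_n}\bphi+\cdots+\gamma_{D'-1,k}M_{x_n}^{D'-1}\bphi$. Multiplying on the left by $\bmr M_{x_n}^i$ for $0\leq i\leq D'-1$ produces exactly the Hankel system of the statement, with the $\gamma_{j,k}$ as solution. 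This shows existence.

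For uniqueness, the key fact is the invertibility of the $D'\times D'$ Hankel matrix $(\seqw_{i+j})_{0\leq i,j<D'}$. This follows from the established fact that $\bseqw$ is linearly recurrent of minimal order exactly $D'$ (from Lemma~\ref{lem:elimcolon}, since $\bmr$ is generic): by the classical result invoked in the proof of Proposition~\ref{prop:param} (see \cite{BrentGY1980}), the Hankel matrix built from the first $2D'-1$ terms of such a sequence is nonsingular. Hence the system admits a unique solution, which is automatically the coefficient sequence of $h_k$. The main obstacle I anticipate is a purely notational one, namely justifying that one can pass freely between ``polynomial identities in $\K[x_n]/h_n$ seen via $\vphi$'' and ``vector identities in $\K^D$ seen via $M_{x_n}$ and $\bphi$''; once the multiplication formula from step one is in hand, this essentially dissolves, and no new ideas beyond those already used in Lemma~\ref{lem:elimcolon} are required.
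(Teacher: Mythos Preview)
Your proposal is correct and follows essentially the same approach as the paper: both use the definition of $M_{x_n}$ and induction for the multiplication formula, both derive the vector identity $\bpsi_k=\sum_j \gamma_{j,k}M_{x_n}^j\bphi$ from $(x_k-h_k(x_n))\vphi\in I$, both left-multiply by $\bmr M_{x_n}^i$ to obtain the Hankel system, and both invoke the minimal recurrence order $D'$ together with \cite{BrentGY1980} for invertibility of the Hankel matrix. The only difference is cosmetic ordering (you argue existence then uniqueness, the paper establishes invertibility first).
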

\begin{proof}
  The proof of the first statement is a direct consequence of the
  definition of $M_{x_n}$.

  Now, since $\bmr$ is generic, then by \Cref{lem:paramcolon},
  $d=D'$ and $\bseqw$ does not satisfy any linear recurrence
  relation of order less than $D'$. Thus, there is no vector in the
  kernel of the above Hankel matrix, see~\cite{BrentGY1980}.

  Let $1\leq k\leq n-1$ and $h_k(x_n)=\alpha_{D'-1,k}x_n^{D'-1}+\cdots+\alpha_{0,k}$, then
  $(x_k-h_k(x_n))\vphi$ is in $I$ and
  $\nf{\pare{x_k-h_k(x_n)}\vphi}{\cGdrl}{\ldrl}=0$,
  hence
  \[\bpsi_k=\alpha_{D'-1,k}M_{x_n}^{D'-1}\bphi+\cdots+\alpha_{0,k}\bphi.\]
  Now, multiplying this equality by $\bmr M_{x_n}^i$ for $0\leq i\leq
  D'-1$ shows that $(\alpha_0,\ldots,\alpha_{D'-1})^{\rT}$ is a solution
  of the above linear system. Since the matrix has full
  rank, the solution is unique and this ends the proof.
\end{proof}

From this, we deduce the following algorithm.
\begin{algorithm2e}[htbp!]
  \small
  \DontPrintSemicolon
  \caption{\spFGLM for colon ideals\label{algo:spfglmcolonzero}}
  \KwIn{The reduced \gb $\cGdrl$ of a zero-dimensional ideal $I$
    for $\ldrl$, its associated staircase $\Sdrl$ of size $D$ and a
    polynomial $\vphi\in\K[\bx]$.
  }
  \KwOut{The reduced \gb of $\colid{I}{\vphi}$ for $\llex$, if it is in shape position.
  }
  Build the matrix $M$ as in \Cref{lem:buildingmatrixzerodim}.\;
  Pick $\bmr\in\K^D$ a row-vector at random.\;
  Build $\bphi$ the column-vector of coefficients of
  $\nf{\vphi}{\cGdrl}{\ldrl}$.\;
  \For{$k$ \KwFrom $1$ \KwTo $n-1$}{
    Build $\bpsi_k$ the column-vector of coefficients of
    $\nf{x_k\vphi}{\cGdrl}{\ldrl}$.
  }
  Compute $(\seqw_i^{(0)})_{0\leq i<2 D},
  (\seqw_i^{(1)})_{0\leq i<D},\ldots,(\seqw_i^{(n-1)})_{0\leq i<D})$
  with \Cref{algo:seqw} called on
  $M,\bmr,\bphi,
  \bpsi_1,\ldots,\bpsi_{n-1}$\nllabel{ln:spcol0seqw}.\;
  $h_n\leftarrow\Berlekamp (\seqw_0^{(0)},\ldots,\seqw_{2
    D-1}^{(0)})$, $D'\coloneqq\deg h_n$\nllabel{ln:spcol0bm}.\;
  \lIf{$\nf{h_n\vphi}{\cGdrl}{\ldrl}\neq 0$}{\KwRet ``Bad vector''.
  }
  Compute
  $h_1\coloneqq\gamma_{D'-1,1}x_n^{N-1}+\cdots+\gamma_{0,1},\ldots,
  h_{n-1}\coloneqq\gamma_{D'-1,n-1}x_n^{N-1}+\cdots+\gamma_{0,n-1}$
  with \Cref{algo:param} called on $(\seqw_i^{(0)})_{0\leq i<2 D'-1},
  (\seqw_i^{(1)})_{0\leq i<D'},\ldots,(\seqw_i^{(n-1)})_{0\leq
    i<D'})$.\nllabel{ln:spcol0param}\;
  \For{$k$ \KwFrom $1$ \KwTo $n-1$}{
    \lIf{$\nf{(x_k-h_k(x_n))\vphi}{\cGdrl}{\ldrl}\neq 0$}{\KwRet ``Not in
      shape position''.
    }
  }
  \KwRet $\{h_n(x_n),x_{n-1}-h_{n-1}(x_n),\ldots,x_1-h_1(x_n)\}$.
\end{algorithm2e}
\begin{remark}\label{rk:spfglmzerocomp}
  Observe that line~\ref{ln:spcol0seqw} of
  \Cref{algo:spfglmcolonzero} can lead to a large computational overhead
  whenever $D$ is much larger than $D'$. This is the bottleneck of the algorithm.

  Mixing lines~\ref{ln:spcol0seqw} and~\ref{ln:spcol0bm} so that the minimal
  linear recurrence relation is computed online during the
  computation of the terms $\seqw^{(0)}_i$ ensures that only $O(D')$
  of them are computed.
  Then, we can also compute only $O(D')$ terms
  $\seqw^{(k)}_i$ for each $1\leq k\leq n-1$.
\end{remark}
\begin{theorem}\label{th:spfglmcolonzero}
  Let $I$ be a zero-dimensional ideal of $\K[\bx]$ of degree $D$, $\cGdrl$ be its
  reduced \gb for $\ldrl$ and $\Sdrl$ be its associated staircase. Let
  $M_{x_n}$ be
  the matrix of the map
  $f\in\K[\bx]/I\mapsto\nf{x_n f}{\cGdrl}{\ldrl}\in\K[\bx]/I$ in the
  monomial basis $\Sdrl$. Let $\vphi$ be a polynomial not in $I$, so
  that $\colid{I}{\vphi}$ is zero-dimensional of degree $D'$.

  Let us assume that there are $t$ monomials $\sigma$ in
  $\Sdrl$ such that $x_n\sigma\in\LT[\ldrl](I)$, that $M_{x_n}$ is
  known and that the reduced \gb $\cHlex$ of $I$ for $\llex$ is in
  shape position.
  Then, one can compute
  $\cHlex$ by computing $n$ normal forms \wrt $\cGdrl$ and $\ldrl$ and
  $O((t+n) D D')$
  operations in $\K$.
\end{theorem}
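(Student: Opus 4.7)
The plan is to invoke Algorithm~\ref{algo:spfglmcolonzero}, using the on-line variant from Remark~\ref{rk:spfglmzerocomp}, and then account for the cost of every step carefully. Correctness itself is already handled by Lemmas~\ref{lem:elimcolon} and~\ref{lem:paramcolon}: for a generic row-vector $\bmr$, the minimal recurrence of the sequence $\seqw^{(0)}_i = \bmr M_{x_n}^i\bphi$ yields $h_n$, and the remaining $h_k$'s for $1\leq k\leq n-1$ are recovered by solving a nonsingular Hankel system built from the same sequence together with the auxiliary sequences $\seqw^{(k)}_i=\bmr M_{x_n}^i\bpsi_k$.

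Setting up the input requires $n$ normal form computations with respect to $\cGdrl$ and $\ldrl$, namely the column-vectors $\bphi$ for $\nf{\vphi}{\cGdrl}{\ldrl}$ and $\bpsi_k$ for $\nf{x_k\vphi}{\cGdrl}{\ldrl}$, $1\leq k\leq n-1$. Next, I would bound the sparsity of $M_{x_n}$ exactly as in the proof of Theorem~\ref{th:spfglm}: the $D-t$ monomials $\sigma\in\Sdrl$ with $x_n\sigma\in\Sdrl$ contribute a column with a single $1$, while the $t$ monomials with $x_n\sigma\in\LT[\ldrl](I)$ contribute a column with at most $D$ nonzero entries. Hence $M_{x_n}$ has at most $tD+(D-t)=O(tD)$ nonzero entries.

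With these ingredients, I would apply the on-line variant of Remark~\ref{rk:spfglmzerocomp}: interleave Berlekamp--Massey with the generation of $\seqw^{(0)}_i$, stopping as soon as the minimal recurrence stabilizes, which happens after $O(D')$ terms. Each iteration performs one sparse product $\bmr\leftarrow\bmr M_{x_n}$ at cost $O(tD)$ and one inner product $\bmr\bphi$ at cost $O(D)$, so the main sequence costs $O((t+1)DD')$. For each $1\leq k\leq n-1$ we reuse the $O(D')$ row-vectors $\bmr M_{x_n}^i$ already produced and perform the inner products $\bmr M_{x_n}^i\bpsi_k$, for a total of $O(nDD')$ additional operations. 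Finally, Berlekamp--Massey on $2D'$ terms costs $O(\sM(D')\log D')$ and Proposition~\ref{prop:param} bounds the Hankel solve for the $n-1$ polynomials $h_k$ by $O(\sM(D')(n+\log D'))$. Both are absorbed by $O((t+n)DD')$, yielding the claimed complexity.

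The main subtlety is the on-line computation emphasized in Remark~\ref{rk:spfglmzerocomp}: if one instead produced $2D$ terms of each sequence, as in the unoptimized \spFGLM of Theorem~\ref{th:spfglm}, the cost would blow up to $O((t+n)D^2)$, losing the dependence on $D'$ that makes the statement sharp when $\colid{I}{\vphi}$ is much smaller than $I$. Writing this down carefully, and justifying that the interleaved Berlekamp--Massey indeed certifies the recurrence after $O(D')$ terms (which is standard for generic $\bmr$), will be the only non-routine point.
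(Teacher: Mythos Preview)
Your proposal is correct and follows essentially the same route as the paper: bound the sparsity of $M_{x_n}$ by $O(tD)$, invoke Lemmas~\ref{lem:elimcolon} and~\ref{lem:paramcolon} for correctness, and use the on-line variant of Remark~\ref{rk:spfglmzerocomp} to cut the number of iterations from $O(D)$ to $O(D')$, yielding $O((t+n)DD')$. The paper presents this in two stages (first deriving $O((t+n)D^2)$ via Proposition~\ref{prop:seqw}, then refining via the remark), whereas you go straight to the refined bound, but the substance is identical.
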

\begin{proof}
  The proof follows the proof of \Cref{th:spfglm}.

  Taking the
  column-vector $\bphi$ so that for all $i\in\N$,
  $M_{x_n}^i\bphi$ is the vector of
  coefficients in $S_{\DRL}$ of $\nf {x_n^i}{\cGdrl}{\ldrl}$, we can
  pick at random a row-vector $\bmr$ to compute the sequence
  $\bseqw=(\seqw^{(0)}_i)_{0\leq i<2 D}=\pare{\bmr M_{x_n}^i\bone}_{0\leq i<2 D}$.
  Generically, the linear recurrence
  relation of minimal order
  satisfied by this
  sequence
  \[\forall\,i\in\N,\ \seqw^{(0)}_{i+d}+c_{d-1}\seqw^{(0)}_{i+d-1}
    +\cdots+c_0\seqw^{(0)}_i=0,\]
  is such that $h_n=x_n^d+c_{d-1}x_n^{d-1}+\cdots+x_0$ is
  the minimal
  polynomial of $x_n$ in the quotient algebra
  $\K[\bx]/(\colid{I}{\vphi})$, hence $d=D'$.

  Let us assume that $\cHlex$ is in shape position, then there exist
  $\gamma_{0,k},\ldots,\gamma_{D'-1,k}$ in $\K$ such that
  $x_k-\gamma_{D'-1,k}x_n^{D'-1}-\cdots-\gamma_{0,k}=0$ in
  $\K[\bx]/(\colid{I}{\vphi})$. Since $M_{x_n}^j\bc_k$ is the vector of coefficients of
  $\nf{x_n^j x_k}{\cGdrl}{\ldrl}$, by
  multiplying on the left $M_{x_n}^j\bc_k$ by $\bmr M_{x_n}^i$ for all
  $0\leq j\leq D'-1$, we obtain
  \[\forall\,0\leq i<D,\
    \seqw_i^{(k)}=\gamma_{D'-1,k}\seqw^{(0)}_{D'-1+i}+\cdots
    +\gamma_{0,k}\seqw^{(0)}_i.\]
  Hence the algorithm is correct and terminates. Observe that if
  $h_n\vphi$ is not in $I$, then $h_n$ is not correctly computed and
  the algorithm correctly
  returns an error message. Likewise, if $\cHlex$ is not in shape
  position, one of the computed polynomial $x_k-h_k(x_n)$ is not in
  $\colid{I}{\vphi}$, hence multiplied by $\vphi$, it is not in $I$
  and the algorithm correctly returns an error message.

  By assumption, there are $t$ monomials $\sigma$ in $\Sdrl$ such that
  $x_n\sigma\in\LT[\ldrl](I)$, hence $M_{x_n}$ has at most $t D +
  (D-t) = O(t D)$ nonzero coefficients.
  Therefore,
  by \Cref{prop:seqw}, the call to
  \Cref{algo:seqw} requires $O((t+n) D^2)$ operations.

  Now, using fast variants~\cite{BrentGY1980} of the Berlekamp--Massey
  algorithm~\cite{Berlekamp1968,Massey1969}, one recover the minimal
  linear recurrence relation in $O(\sM(D) \log D)$
  operations. Finally, by \Cref{prop:param}, we can compute
  $h_1,\ldots,h_n$ in $O(\sM(D)(n+\log D))$ operations.

  All in all, we have a complexity in $O((t+n) D^2)$ operations
  in $\K$.

  Using the modification of \Cref{rk:spfglmzerocomp}, we can
  only compute $O(D')$ sequence terms $\seqw^{(k)}_i$ for
  $0\leq k\leq n-1$ in $O((t+n) D D')$ operations. As a trade-off, the minimal
  recurrence relation is computed in $O(D'^2)$ operations but this is
  not the bottleneck.
\end{proof}

\subsection{The case where $I$ is positive-dimensional ideal}
\label{ss:posdim}
Now, let $I$ be positive-dimensional, i.e.
$\K[\bx]/I$ is an infinite-dimensional vector space. Still, we assume
that $\colid{I}{\vphi}$ is zero-dimensional and in shape position with
$\cHlex=\curl{h_n(x_n),x_{n-1}-h_{n-1}(x_n),\ldots,x_1-h_1(x_n)}$
such that $\deg h_n=D'$.

To compute the polynomials
$h_n,h_{n-1},\ldots,h_1\in\K[x_n]$, we shall show that we
can rely on linear algebra routines in
a finite-dimensional vector subspace of $\K[\bx]/I$. We start by
defining such a vector subspace by giving a monomial basis thereof.

\begin{lemma}\label{lem:sigma}
  Let $\prec$ be a monomial order and $I$ be an ideal of $\K[\bx]$.
  Let $\cG$ be the reduced \gb of $I$ for $\prec$ and $S$ be its associated
  staircase. Let $\vphi\in\K[\bx]\setminus I$ be a polynomial and
  $1\leq k\leq n$ such that
  $J=\pare{\colid{I}{\vphi}}\cap\K[x_k,\ldots,x_n]$ is
  zero-dimensional with staircase $T$ for another monomial order $\prec_2$.

  Then,
  the set
  \[\Sigma=\curl{\sigma\in S\,\middle\vert\,
      \exists s\in\bigcup_{\tau\in T}
      \supp\nf{\tau\vphi}{\cG}{\prec},\sigma\mid s},
  \]
  is a finite subset of $S$, which is a staircase as well, and is such that for all
  $(i_k,\ldots,i_n)\in\N^{n-k+1}$,
  $\supp\nf{x_k^{i_k}\cdots x_n^{i_n}\vphi}{\cG}{\prec}
  \subseteq\Sigma$.
\end{lemma}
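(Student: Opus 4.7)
The statement has three claims: (a) $\Sigma$ is a finite subset of $S$; (b) $\Sigma$ is a staircase; (c) $\supp\nf{x_k^{i_k}\cdots x_n^{i_n}\vphi}{\cG}{\prec} \subseteq \Sigma$ for every $(i_k, \ldots, i_n) \in \N^{n-k+1}$. I would dispatch (a) and (b) first as bookkeeping, and reserve the bulk of the argument for (c), which carries the real content.

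For (a), let $U = \bigcup_{\tau \in T}\supp\nf{\tau\vphi}{\cG}{\prec}$. Since $J$ is zero-dimensional, $T$ is finite; each normal form has finite support; so $U$ is a finite set of monomials. Every element of $\Sigma$ divides some element of $U$, so $\Sigma$ is finite. Each monomial in $U$ appears in a normal form modulo $\cG$, hence lies in $S$, and the inclusion $\Sigma \subseteq S$ is built into the definition. For (b), assume $\mu_1\mu_2 \in \Sigma$. Then $\mu_1\mu_2 \in S$ and there is some $s \in U$ with $\mu_1\mu_2 \mid s$. The staircase property of $S$ gives $\mu_1, \mu_2 \in S$, and transitivity of divisibility yields $\mu_1 \mid s$ and $\mu_2 \mid s$. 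Both $\mu_1$ and $\mu_2$ therefore meet the defining conditions of $\Sigma$.

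For (c), set $m = x_k^{i_k}\cdots x_n^{i_n}$, a monomial of $\K[x_k, \ldots, x_n]$. Since $J$ is zero-dimensional in $\K[x_k, \ldots, x_n]$ with staircase $T$ for $\prec_2$, reducing $m$ modulo the reduced \gb of $J$ for $\prec_2$ produces a decomposition $m = q + r$ with $q \in J$ and $r = \sum_{\tau \in T} c_\tau \tau$. The definition $J = (\colid{I}{\vphi}) \cap \K[x_k, \ldots, x_n]$ gives $q\vphi \in I$, so $m\vphi \equiv r\vphi \pmod{I}$. By $\K$-linearity of the normal form, $\nf{m\vphi}{\cG}{\prec} = \sum_{\tau \in T} c_\tau \nf{\tau\vphi}{\cG}{\prec}$; hence every monomial in its support belongs to $U$ and lies in $S$, and since it divides itself, it belongs to $\Sigma$. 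The conceptual crux of the whole lemma — and the only step that is not pure divisibility bookkeeping — is precisely this bridge between the two polynomial rings $\K[x_k, \ldots, x_n]$ and $\K[\bx]$: we reduce $m$ modulo $J$ in the smaller ring, then multiply by $\vphi$ to lift the relation to a congruence modulo $I$, using that $J\vphi \subseteq I$.
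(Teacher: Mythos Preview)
Your proof is correct and follows essentially the same approach as the paper: finiteness from the finiteness of $T$ and of divisor sets, and the support inclusion via writing $m$ as an element of $J$ plus a $\K$-linear combination of the $\tau\in T$, then using $J\vphi\subseteq I$ and linearity of the normal form. Your argument is in fact slightly more complete than the paper's, which omits the explicit verification that $\Sigma$ is a staircase (your part (b)); your divisibility-and-closure argument for that is exactly what is needed.
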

\begin{proof}
  Since $T$ is finite, then $\bigcup_{\tau\in T}
  \supp\nf{\tau\vphi}{\cG}{\prec}$ is a finite set of monomials. Since a
  monomial admits finitely many divisors, then $\Sigma$ is finite.

  Let $t$ be a monomial not in $T$, then for each $\tau\in T$ such
  that $\tau\prec_2 t$, there exists $c_{\tau}\in\K$ such that
  $h=t-\sum_{\substack{\tau\in T\\\tau\prec_2 t}}c_{\tau}\tau\in
  J$, that is
  $\nf{h\vphi}{\cG}{\prec}=0$. Then, by linearity
  \[\nf{t\vphi}{\cG}{\prec}=
    \sum_{\tau\in T}c_{\tau}\nf{\tau\vphi}{\cG}{\prec},\]
  and $\supp\nf{t\vphi}{\cG}{\prec}\subseteq\Sigma$.
\end{proof}

\begin{remark}
  By our assumptions, we can take $T=\curl{1,x_n,\ldots,x_n^{D'-1}}$
  for $\llex$
  so that
  \[\Sigma=\curl{\sigma\in S\,\middle\vert\,\exists
      s\in\bigcup_{i=0}^{D'-1}\supp\nf{x_n^i\vphi}{\cGdrl}{\ldrl},
      \sigma\mid s}.\]
\end{remark}

\begin{proposition}\label{prop:posdimlincomb}
  Let $I$ be a positive-dimensional ideal of $\K[\bx]$, $\cGdrl$ be its
  reduced \gb for $\ldrl$ and $\Sdrl$ be its associated staircase. Let
  $\vphi\in\K[\bx]\setminus I$ such that $\colid{I}{\vphi}$ is
  zero-dimensional and in shape position. Let
  $\cHlex=\{h_n(x_n),x_{n-1}-h_{n-1}(x_n),\ldots,x_1-h_1(x_n)\}$ be
  the reduced \gb of $\colid{I}{\vphi}$ for $\llex$, with $\deg h_n=D'$.

  Let $\Sigma=\curl{\sigma\in\Sdrl\,\middle\vert\,
    \exists s\in\bigcup_{i=0}^{D'-1}\supp\nf{x_n^i\vphi}{\cGdrl}{\ldrl},
    \sigma\mid s}$.

  Then, there exist unique
  $c_0,\ldots,c_{D'-1}\in\K$ and, for all $1\leq k\leq n-1$,
  unique $\gamma_{0,k},\ldots,\gamma_{D'-1,k}\in\K$ such that
  \begin{align*}
    \nf{x_n^{D'}\vphi}{\cGdrl}{\ldrl}
    &=-c_{D'-1}\nf{x_n^{D'-1}\vphi}{\cGdrl}{\ldrl}\\
    &\qquad-\cdots
      -c_0\nf{\vphi}{\cGdrl}{\ldrl},\\
    \nf{x_k\vphi}{\cGdrl}{\ldrl}
    &=\gamma_{D'-1,k}\nf{x_n^{D'-1}\vphi}{\cGdrl}{\ldrl}\\
    &\qquad+\cdots+
      \gamma_{0,k}\nf{\vphi}{\cGdrl}{\ldrl},
  \end{align*}
  and all these vectors lie in the vector space spanned by $\Sigma$.

  Furthermore, $h_n=x_n^{D'}+c_{D'-1}x_n^{D'-1}+\cdots+c_0$ and for
  all $1\leq k\leq n-1$, we have
  $h_k=\gamma_{D'-1,k}x_n^{D'-1}\allowbreak +\cdots+\gamma_{0,k}$.
\end{proposition}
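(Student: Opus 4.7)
The plan is to produce the coefficients $c_i$ and $\gamma_{i,k}$ directly from the shape position Gröbner basis $\cHlex$, then establish uniqueness via linear independence of the $D'$ vectors $\nf{x_n^i\vphi}{\cGdrl}{\ldrl}$ for $0 \leq i \leq D'-1$, and finally read off the support condition from Lemma~\ref{lem:sigma}.

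For existence, I would set the $c_i$ to be the coefficients of $h_n = x_n^{D'}+c_{D'-1}x_n^{D'-1}+\cdots+c_0$ and, for each $1\leq k\leq n-1$, the $\gamma_{i,k}$ to be the coefficients of $h_k = \gamma_{D'-1,k}x_n^{D'-1}+\cdots+\gamma_{0,k}$ (the degree bound $\deg h_k < D'$ being part of the shape-position hypothesis). Since $h_n \in \cHlex \subseteq \colid{I}{\vphi}$ we have $h_n\vphi \in I$ and hence $\nf{h_n\vphi}{\cGdrl}{\ldrl}=0$; the $\K$-linearity of the normal form then yields the first stated identity. Identically, $(x_k - h_k(x_n)) \in \cHlex$ gives $(x_k-h_k(x_n))\vphi \in I$, and taking normal forms produces the second identity.

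For uniqueness, it suffices to show that the vectors $\nf{\vphi}{\cGdrl}{\ldrl},\nf{x_n\vphi}{\cGdrl}{\ldrl},\ldots,\nf{x_n^{D'-1}\vphi}{\cGdrl}{\ldrl}$ are $\K$-linearly independent. Suppose that $\sum_{i=0}^{D'-1} a_i\,\nf{x_n^i\vphi}{\cGdrl}{\ldrl}=0$; by linearity this means $\pare{\sum_i a_i x_n^i}\vphi \in I$, hence $p(x_n)=\sum_i a_i x_n^i$ belongs to $\colid{I}{\vphi}\cap\K[x_n]$. Because $\cHlex$ is in shape position, this elimination ideal is exactly $\ideal{h_n}$, so $h_n\mid p$; but $\deg p < D' = \deg h_n$ forces $p=0$, and therefore all $a_i$ vanish. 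This linear independence gives the uniqueness of the expansions simultaneously for $c_0,\ldots,c_{D'-1}$ and for each family $\gamma_{0,k},\ldots,\gamma_{D'-1,k}$.

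It remains to argue that every vector in the identities lies in $\Span(\Sigma)$. Since $\colid{I}{\vphi}\cap\K[x_n]=\ideal{h_n}$ is zero-dimensional with staircase $T=\curl{1,x_n,\ldots,x_n^{D'-1}}$, Lemma~\ref{lem:sigma} (applied with $k=n$ and $T$ as just described) yields $\supp\nf{x_n^i\vphi}{\cGdrl}{\ldrl}\subseteq\Sigma$ for every $i\in\N$, and in particular for $0\leq i\leq D'$. The vectors $\nf{x_k\vphi}{\cGdrl}{\ldrl}$ then lie in $\Span(\Sigma)$ too, as $\K$-linear combinations of these. The main obstacle, really the only non-routine step, is recognising that shape position is precisely what forces $\colid{I}{\vphi}\cap\K[x_n]=\ideal{h_n}$ with $h_n$ of minimal degree $D'$, because this minimality is what turns the obvious existence statement into both a uniqueness statement and a full identification of $h_n$ (and the $h_k$) via the extracted scalars.
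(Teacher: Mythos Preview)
Your proof is correct and follows essentially the same line as the paper's: read off the $c_i$ and $\gamma_{i,k}$ from the coefficients of $h_n$ and $h_k$, obtain the normal-form identities from $h_n\vphi\in I$ and $(x_k-h_k)\vphi\in I$, and invoke Lemma~\ref{lem:sigma} for the support claim. You are in fact more careful than the paper on uniqueness: the paper simply notes that $\{1,x_n,\dots,x_n^{D'-1}\}$ is a basis of $\K[\bx]/(\colid{I}{\vphi})$ and declares the coefficients unique, whereas you spell out the linear-independence argument (a relation among the $\nf{x_n^i\vphi}{\cGdrl}{\ldrl}$ would force a nonzero polynomial of degree $<D'$ into $\ideal{h_n}$), which is precisely what is needed to show uniqueness of the coefficients \emph{in the normal-form identities} rather than merely as coefficients of $h_n$.
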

\begin{proof}
  By assumption, $\cHlex$ is in shape position, hence
  $\Tlex=\curl{1,x_n,\ldots,x_n^{D'-1}}$ is a monomial basis
  of $\K[\bx]/(\colid{I}{\vphi})$. Thus, there exist unique
  $c_0,\ldots,c_{D'-1}\in\K$ and for all $1\leq k\leq D'-1$,
  unique $\gamma_{0,k},\ldots,\gamma_{D'-1,k}$ such that
  $h_n=x_n^{D'}+c_{D'-1}x_n^{D'-1}+\cdots+c_0$ and for
  all $1\leq k\leq n-1$, $h_k=\gamma_{D'-1,k}x_n^{D'-1}\allowbreak
  +\cdots+\gamma_{0,k}$.

  Now, multiplying $h_n$ and $x_k-h_k$ by $\vphi$ makes these
  polynomials lie in $I$. Hence the equality on the normal forms. Now,
  they all have their support in $\Sigma$, hence they lie in the
  vector space spanned by $\Sigma$.
\end{proof}
As a consequence, we can compute $h_n,h_{n-1},\ldots,h_1$ by means of
Gaussian elimination in the vector space spanned by $\Sigma$.

\begin{remark}\label{rk:iterative_sigma}
  Observe that whenever $\supp\vphi$ is much larger than
  $\bigcup_{g\in\cGdrl}\supp g$, its support might already be large
  enough to define $\Sigma$, or a subset $\Sigma'$ thereof such that
  the vector space it spans allows us to recover
  $h_n,h_{n-1},\ldots,h_1$ by Gaussian elimination. Furthermore, testing
  effectively the correctness of the computed polynomials
  can be done via multiplying $h_n(x_n)$ (\resp $x_k-h_k(x_n)$
  for $1\leq k\leq n-1$) by $\vphi$ and checking if its normal form \wrt
  $\cGdrl$ and $\ldrl$ is $0$.

  Otherwise, $\Sigma'$ was too small. We can enlarge it by
  adding the missing monomials of $\supp
  \nf{x_n\vphi}{\cGdrl}{\ldrl}$ and their divisors,
  iterating this process further.
\end{remark}

\subsection{Building the multiplication matrix}
\label{ss:matrix}
Let $W$ be the vector subspace of $\K[\bx]/I$ whose monomial basis is
$\Sigma$. The goal is to build a matrix $\tiM_{x_n}$ of a linear map
from $W$ to itself allowing us to compute
$h_n,h_{n-1},\ldots,h_1$. The image of the map $f\in W\mapsto
\nf{x_n f}{\cGdrl}{\ldrl}\in\K[\bx]/I$ need not be in $W$. Thus we compose
it with the projection $\pi_{W}$ onto $W$, which discards any monomial of
$\supp\nf{x_n f}{\cGdrl}{\ldrl}\subseteq\Sdrl$ not in $\Sigma$.

\begin{lemma}\label{lem:buildingmatrix}
  Let $I$ be a positive-dimensional ideal of $\K[\bx]$, $\cGdrl$ be its
  reduced \gb for $\ldrl$ and $\Sdrl$ be its associated staircase. Let
  $\vphi\in\K[\bx]\setminus I$ such that $\colid{I}{\vphi}$ is
  zero-dimensional and in shape position. Let
  $\cHlex=\{h_n(x_n),x_{n-1}-h_{n-1}(x_n),\ldots,\allowbreak x_1-h_1(x_n)\}$ be
  the reduced \gb of $\colid{I}{\vphi}$ for $\llex$, with $\deg h_n=D'$.

  Let $\Sigma=\curl{\sigma\in\Sdrl\,\middle\vert\,
    \exists s\in\bigcup_{i=0}^{D'-1}\supp\nf{x_n^i\vphi}{\cGdrl}{\ldrl},
    \sigma\mid s}\subset \Sdrl$ be a
  finite staircase and $W$ be the vector subspace of $\K[\bx]/I$ it
  spans.

  Let $\tiM_{x_n}$ be the matrix of the linear map $f\in W\mapsto
  \pi_W\big(\nf{x_n f}{\cGdrl}{\ldrl}\big)\in W$.

  Assuming $\Sigma=\{\sigma_0,\ldots,\sigma_{N-1}\}$ with for all
  $0\leq i<N-1$, $\sigma_i\ldrl\sigma_{i+1}$,
  one can build the matrix
  $\tiM_{x_n}=(\tim_{i,j})_{0\leq i,j<N}$ with the following
  procedure:
  \begin{itemize}
  \item if $x_n\sigma_j=\sigma_k$, then $\tim_{k,j}=1$ and for all
    $0\leq i<N$, $i\neq k$, $\tim_{i,j}=0$;
  \item if $x_n\sigma_j\in\Sdrl\setminus\Sigma$, then for all $0\leq
    i<N$, $\tim_{i,j}=0$;
  \item otherwise for all $0\leq i<N$, $\tim_{i,j}$ is the
    coefficient of $\sigma_i$ in $\nf{x_n\sigma_j}{\cGdrl}{\ldrl}$.
  \end{itemize}
\end{lemma}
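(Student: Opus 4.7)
The plan is to unfold the definition of a matrix of a linear map in a given basis and then do a straightforward case analysis on the three possibilities for $x_n\sigma_j$. By construction of $\tiM_{x_n}$, the $j$-th column is the coordinate vector in the basis $\Sigma$ of the image $\pi_W\!\pare{\nf{x_n\sigma_j}{\cGdrl}{\ldrl}}$. Hence showing that the three prescribed column patterns match this image (depending on the location of $x_n\sigma_j$ with respect to $\Sigma$ and $\Sdrl$) is all that is required.

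First I would observe that since $\sigma_j\in\Sigma\subseteq\Sdrl$, the monomial $\sigma_j$ is not divisible by any leading monomial of an element of $\cGdrl$. The monomial $x_n\sigma_j$ therefore falls into exactly one of the following three mutually exclusive cases: (i) $x_n\sigma_j\in\Sigma$, (ii) $x_n\sigma_j\in\Sdrl\setminus\Sigma$, or (iii) $x_n\sigma_j\notin\Sdrl$, i.e.\ there exists $g\in\cGdrl$ with $\LM[\ldrl](g)\mid x_n\sigma_j$.

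Next I would handle each case in turn. In case (i), since $x_n\sigma_j\in\Sdrl$, we have $\nf{x_n\sigma_j}{\cGdrl}{\ldrl}=x_n\sigma_j=\sigma_k$ where $\sigma_k\in\Sigma$; therefore $\pi_W(\sigma_k)=\sigma_k$, whose coordinate vector in $\Sigma$ is the $k$-th vector of the canonical basis, matching the first bullet. In case (ii), we again have $\nf{x_n\sigma_j}{\cGdrl}{\ldrl}=x_n\sigma_j$, but this monomial lies in $\Sdrl\setminus\Sigma$, so by definition of $\pi_W$ (the projection onto $W$ that discards every monomial of $\Sdrl$ not in $\Sigma$), the image is $0$ and the column vanishes, matching the second bullet. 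In case (iii), $\nf{x_n\sigma_j}{\cGdrl}{\ldrl}$ is a $\K$-linear combination of monomials in $\Sdrl$, and applying $\pi_W$ amounts to keeping only the coefficients indexed by monomials in $\Sigma$. Thus the $i$-th entry of column $j$ is exactly the coefficient of $\sigma_i\in\Sigma$ in $\nf{x_n\sigma_j}{\cGdrl}{\ldrl}$, matching the third bullet.

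There is essentially no obstacle here: the lemma is a bookkeeping statement about how the matrix of a projected multiplication operator decomposes according to where its input lands with respect to the staircase. The only point requiring some attention is the clean separation of cases (i)--(iii), which relies on the fact that the normal form of a monomial in $\Sdrl$ is itself, so the only interesting computation occurs in case (iii). This is the exact analogue of Lemma~\ref{lem:buildingmatrixzerodim} enriched with the extra projection step, and the proof follows the same pattern.
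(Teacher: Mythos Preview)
Your proposal is correct and follows essentially the same approach as the paper: both start from the observation that column $j$ is the coordinate vector of $\pi_W\!\pare{\nf{x_n\sigma_j}{\cGdrl}{\ldrl}}$ and then dispatch on whether $x_n\sigma_j$ lies in $\Sigma$, in $\Sdrl\setminus\Sigma$, or outside $\Sdrl$. Your treatment of the third case is a touch more explicit than the paper's terse ``obtained by linear combination of the first two,'' but the underlying argument is identical.
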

\begin{proof}
  By construction, the matrix $\tiM_{x_n}$ has its $j$th column which is the
  vector of coefficients of the projection of
  $\nf{x_n\sigma_j}{\cGdrl}{\ldrl}$ onto $W$ in the basis $\Sigma$.

  The first case is immediate.

  Observe that in the second case, $x_n\sigma_j\in\Sdrl$, hence it is
  its own normal form \wrt $\cGdrl$ and $\ldrl$. Yet, since it is not
  in $\Sigma$, its projection is $0$.

  The last case is obtained by linear combination of the first two.
\end{proof}

\begin{remark}\label{rk:normalform}
  While the first two cases of \Cref{lem:buildingmatrix} require
  no computation whatsoever, a priori, the last one needs a normal form
  computation.

  Since $\cGdrl$ is a reduced \gb, if
  $x_n\sigma_j=\LT[\ldrl](g)$ for some $g\in\cGdrl$,
   then
  $\nf{x_n\sigma_j}{\cGdrl}{\ldrl}=x_n\sigma_j - g$. Therefore, only
  the case $x_n\sigma_j\in\LT[\ldrl](I)\setminus\LT[\ldrl](\cGdrl)$
  requires a nontrivial normal form computation.
\end{remark}

\begin{proposition}\label{prop:spfglmcolon}
  Let $I$ be a positive-dimensional ideal of $\K[\bx]$, $\cGdrl$ be its
  reduced \gb for $\ldrl$ and $\Sdrl$ be the associated staircase. Let
  $\vphi\in\K[\bx]\setminus I$,
  $\cHlex=\curl{h_n(x_n),x_{n-1}-h_{n-1}(x_n),\ldots,x_1-h_1(x_n)}$ be
  the reduced \gb of $\colid{I}{\vphi}$, with $\deg h_n=D'$.

  Let $\Sigma=\curl{\sigma\in\Sdrl\,\middle\vert\,
    \exists s\in\bigcup_{i=0}^{D'-1}\supp\nf{x_n^i\vphi}{\cGdrl}{\ldrl},
    \sigma\mid s}$ be of size $D$ and $W$ be the vector subspace of $\K[\bx]/I$
  spanned by $\Sigma$.

  Let $\tiM_{x_n}$ be the matrix of the map
  $f\in W\mapsto\pi_W\pare{\nf{x_n f}{\cGdrl}{\ldrl}}\in W$ in the
  basis $\Sigma$ and $\bphi$ be the vector of coefficients of $\vphi$
  in the basis $\Sigma$.

  Let $\bmr\in\Kbar^D$ be a row-vector and
  $\bseqw=(\seqw_i)_{i\in\N} =(\bmr \tiM_{x_n}^i\bphi)_{i\in\N}$. Let $d\in\N$ be
  minimal such that there exist $c_0,\ldots,c_{d-1}\in\K$ such that
  \[\forall i\in\N,\quad
    \seqw_{i+d}+c_{d-1}\seqw_{i+d-1}+\cdots+c_0\seqw_i=0.\]
  Then, the polynomial $x_n^d+c_{d-1}x_n^{d-1}+\cdots+c_0$ divides
  $h_n$. Furthermore, if $\bmr$ is generic in $\Kbar^D$, then $d=D'$ and
  $h_n=x_n^d+c_{d-1}x_n^{d-1}+\cdots+c_0$.

  In addition, for all $1\leq
  k\leq n-1$, let $\bpsi_k$
  be the vector of coefficients of\\
  $\pi_W\pare{\nf{x_k\vphi}{\cGdrl}{\ldrl}}$. Then, for all $1\leq
  k\leq n-1$,
  there exist unique $ \gamma_{0,k},\ldots,\allowbreak
  \gamma_{D'-1,k}\in\K$ such that
  \[
    \begin{pmatrix}
      \seqw_0	&\seqw_1	&\cdots	&\seqw_{D'-1}\\
      \seqw_1	&\seqw_2	&\cdots	&\seqw_{D'}\\
      \vdots	&\vdots		&	&\vdots\\
      \seqw_{D'-1}	&\seqw_{D'}	&\cdots	&\seqw_{2 D'-2}\\
    \end{pmatrix}
    \begin{pmatrix}
      \gamma_{0,k}\\\gamma_{1,k}\\\vdots\\\gamma_{D'-1,k}
    \end{pmatrix}
    =
    \begin{pmatrix}
      \bmr \tiM_{x_n}^0 \bpsi_k\\\bmr \tiM_{x_n}^1 \bpsi_k\\\vdots\\
      \bmr \tiM_{x_n}^{D'-1} \bpsi_k\\
    \end{pmatrix},
  \]
  and $h_k=\gamma_{D'-1,k}x_n^{D'-1}+\cdots+\gamma_{0,k}$.
\end{proposition}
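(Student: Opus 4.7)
The plan is to reduce the proposition to its zero-dimensional analog (Lemmas~\ref{lem:elimcolon} and~\ref{lem:paramcolon}) by showing that on the $\tiM_{x_n}$-orbit of $\bphi$, the projection $\pi_W$ acts trivially. First, I would apply Lemma~\ref{lem:sigma} with $k=n$ and $T=\curl{1,x_n,\ldots,x_n^{D'-1}}$, which is the $\llex$-staircase of $\colid{I}{\vphi}\cap\K[x_n]=\ideal{h_n}$, to obtain $\supp\nf{x_n^i\vphi}{\cGdrl}{\ldrl}\subseteq\Sigma$ for every $i\in\N$. Writing $v_i\coloneqq\nf{x_n^i\vphi}{\cGdrl}{\ldrl}$, an induction on $i$ then shows that $\tiM_{x_n}^i\bphi$ equals the coordinate vector of $v_i$ in the basis $\Sigma$: the base case is the definition of $\bphi$, and for the step, Lemma~\ref{lem:buildingmatrix} identifies $\tiM_{x_n}\cdot\tiM_{x_n}^i\bphi$ with $\pi_W(\nf{x_n v_i}{\cGdrl}{\ldrl})=\pi_W(v_{i+1})$, and the support containment makes $\pi_W(v_{i+1})=v_{i+1}$.

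Next, I would recover $h_n$. In $\K[\bx]/I$, $v_i$ represents $x_n^i\vphi$, and by definition $h_n$ is the minimal monic polynomial in $x_n$ such that $h_n(x_n)\vphi\in I$. Hence $h_n$ is the minimal annihilating polynomial of the vector sequence $(\tiM_{x_n}^i\bphi)_{i\in\N}$. Any recurrence for the scalar sequence $\bseqw=(\bmr\tiM_{x_n}^i\bphi)_{i\in\N}$ is inherited from one for the vector sequence, so the minimal recurrence polynomial of $\bseqw$ divides $h_n$, which proves the first claim. For $\bmr$ generic in $\Kbar^D$, the standard Wiedemann argument used in the proof of Lemma~\ref{lem:elimcolon} upgrades the divisibility to equality, giving $d=D'$ and $h_n=x_n^d+c_{d-1}x_n^{d-1}+\cdots+c_0$.

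Finally, for the Hankel system: since $x_k-h_k(x_n)\in\colid{I}{\vphi}$, multiplication by $\vphi$ gives $\nf{x_k\vphi}{\cGdrl}{\ldrl}=\sum_{j=0}^{D'-1}h_{k,j}\,v_j$ where $h_k=\sum_j h_{k,j}\,x_n^j$ (using $\deg h_k<D'$), and both sides are supported in $\Sigma$. Hence $\pi_W$ is transparent here too, yielding $\bpsi_k=\sum_{j=0}^{D'-1}h_{k,j}\,\tiM_{x_n}^j\bphi$. Left-multiplying by $\bmr\tiM_{x_n}^i$ for $0\leq i<D'$ shows that $(h_{k,0},\ldots,h_{k,D'-1})^{\rT}$ is a solution of the stated Hankel system. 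For $\bmr$ generic, the recurrence order of $\bseqw$ equals $D'$, so the Hankel matrix is invertible (as in the proof of Proposition~\ref{prop:param}), and the solution is therefore unique; this forces $\gamma_{j,k}=h_{k,j}$ and the desired formula for $h_k$. The main obstacle is the support bookkeeping of the first paragraph; once $\pi_W$ is shown to act trivially on the $\tiM_{x_n}$-orbit of $\bphi$ and on the $\bpsi_k$, the remainder of the argument parallels the zero-dimensional case.
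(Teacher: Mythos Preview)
Your proof is correct and follows essentially the same approach as the paper: both arguments hinge on the observation (via Lemma~\ref{lem:sigma}/Proposition~\ref{prop:posdimlincomb}) that $\supp\nf{x_n^i\vphi}{\cGdrl}{\ldrl}\subseteq\Sigma$ for all $i$, so that $\pi_W$ is transparent on the orbit of $\bphi$ and the computation reduces to the zero-dimensional case of Lemmas~\ref{lem:elimcolon} and~\ref{lem:paramcolon}. Your induction making $\tiM_{x_n}^i\bphi$ equal to the coordinate vector of $\nf{x_n^i\vphi}{\cGdrl}{\ldrl}$ is a slightly more explicit formulation of what the paper uses implicitly, but the structure of the argument is the same.
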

\begin{proof}
  The proof is similar to the proofs of \Cref{lem:elimcolon,lem:paramcolon}.

  For the first statement, by \Cref{prop:posdimlincomb} and the definition of
  $\tiM_{x_n}$, there exists a smallest integer
  $b$ such that $\bphi, \tiM_{x_n}\bphi,\ldots,\tiM_{x_n}^b\bphi$ are
  not linearly independent. We let $a_0,\ldots,a_{b-1}\in\K$ such that
  \[\tiM_{x_n}^b\bphi+a_{b-1}\tiM_{x_n}^{b-1}+\cdots+a_0\bphi=0.\]
  Thus,
  $\pi_W\pare{\nf{\pare{x_n^b+a_{b-1}x_n^{b-1}+\cdots+a_0}\vphi}{\cGdrl}{\ldrl}}=0$. Since
  the support of this normal form is included in $\Sigma$, the normal
  form actually lies in $W$. Hence, we conclude that
  the polynomial $(x_n^b+a_{b-1}x_n^{b-1}+\cdots+a_0)\vphi$
  is in $I$ and thus that
  $x_n^b+a_{b-1}x_n^{b-1}+\cdots+a_0$ is in $\colid{I}{\vphi}$.

  By minimality of $b$, this ensures that
  $h_n=x_n^b+a_{b-1}x_n^{b-1}+\cdots+a_0$.

  Now, multiplying the vector equality above by $\bmr \tiM_{x_n}^i$ on
  the left yields
  \[\forall\,i\in\N,\ \seqw_{i+b}+a_{b-1}\seqw_{i+b-1}+\cdots+a_0\seqw_i=0.\]
  Thus, $\bseqw$ is linearly recurrent of order at most $b$ and $d\leq
  b$. Since linear recurrences are in one-to-one correspondence with
  polynomials, these polynomials define an ideal of $\K[x_n]$ spanned
  by $x_n^d+c_{d-1}x_n^{d-1}+\cdots+c_0$ that contains $h_n$. Hence
  the former divides the latter.

  For the second statement, since $d=D'$, then $\bseqw$ does not
  satisfy any linear recurrence
  relation of order less than $D'$. Thus, there is no vector in the
  kernel of the above Hankel matrix, see~\cite{BrentGY1980}.

  Let $1\leq k\leq n-1$ and $h_k(x_n)=\alpha_{D'-1,k}x_n^{D'-1}+\cdots+\alpha_{0,k}$, then
  $(x_k-h_k(x_n))\vphi$ is in $I$ and
  $\nf{\pare{x_k-h_k(x_n)}\vphi}{\cGdrl}{\ldrl}=0$. Projecting onto
  $W$, we have
  \[\bpsi_k=\alpha_{D'-1,k}\tiM_{x_n}^{D'-1}\bphi+\cdots+\alpha_{0,k}\bphi.\]
  Now, multiplying this equality by $\bmr \tiM_{x_n}^i$ for $0\leq i\leq
  D'-1$ shows that $(\alpha_0,\ldots,\alpha_{D'-1})^{\rT}$ is a solution
  of the above linear system. Since the matrix has full
  rank, the solution is unique and this ends the proof.
\end{proof}

We obtain the following
\Cref{algo:spfglmcolon}, the so-called \spFGLMcol algorithm.

\begin{algorithm2e}[htbp!]
  \small
  \DontPrintSemicolon
  \caption{\spFGLMcol\label{algo:spfglmcolon}}
  \KwIn{The reduced \gb $\cGdrl$ of a generic ideal
    for $\ldrl$, a polynomial $\vphi\in\K[\bx]$, and a finite staircase
    $\Sigma$ of size $N$ containing
    $\supp\nf{x_n^k\vphi}{\cGdrl}{\ldrl}$ for all $k\in\N$.
  }
  \KwOut{The reduced \gb of $\colid{I}{\vphi}$ for $\llex$, if it is in shape position.
  }
  Build the matrix $\tiM$ as in \Cref{lem:buildingmatrix}.\;
  Pick $\bmr\in\K^N$ a row-vector at random.\;
  Build $\bphi$ the column-vector of coefficients of $\vphi$
  restricted to $\Sigma$.\;
  \For{$k$ \KwFrom $1$ \KwTo $n-1$}{
    Build $\bpsi_k$ the column-vector of coefficients of
    $\nf{x_k\vphi}{\cGdrl}{\ldrl}$ restricted to $\Sigma$.\;
  }
  Compute $(\seqw_i^{(0)})_{0\leq i<2 N},
  (\seqw_i^{(1)})_{0\leq i<N},\ldots,(\seqw_i^{(n-1)})_{0\leq i<N})$
  with \Cref{algo:seqw} called on
  $\tiM,\bmr,\bphi,
  \bpsi_1,\ldots,\bpsi_{n-1}$.\nllabel{ln:spcolseqw}\;
  $h_n\leftarrow\Berlekamp (\seqw_0^{(0)},\ldots,\seqw_{2
    D-1}^{(0)})$, $D'\coloneqq\deg h_n$.\nllabel{ln:spcolbm}\;
    \lIf{$\nf{h_n\vphi}{\cGdrl}{\ldrl}\neq 0$}{\KwRet ``Bad vector''.
  }
  Compute
  $h_1\coloneqq\gamma_{D'-1,1}x_n^{N-1}+\cdots+\gamma_{0,1},\ldots,
  h_{n-1}\coloneqq\gamma_{D'-1,n-1}x_n^{N-1}+\cdots+\gamma_{0,n-1}$
  with \Cref{algo:param} called on $(\seqw_i^{(0)})_{0\leq i<2 D'-1},
  (\seqw_i^{(1)})_{0\leq i<D'},\ldots,(\seqw_i^{(n-1)})_{0\leq
    i<D'})$.\nllabel{ln:spcolparam}\;
  \For{$k$ \KwFrom $1$ \KwTo $n-1$}{
    \lIf{$\nf{(x_k-h_k(x_n))\vphi}{\cGdrl}{\ldrl}\neq 0$}{\KwRet ``Not in
      shape position''.
    }
  }
  \KwRet $\{h_n(x_n),x_{n-1}-h_{n-1}(x_n),\ldots,x_1-h_1(x_n)\}$.
\end{algorithm2e}
Observe that \Cref{rk:spfglmzerocomp} applies also to
\Cref{algo:spfglmcolon}. 
\begin{theorem}\label{th:spfglmcolon}
  Let $I$ be a positive-dimensional ideal of $\K[\bx]$, let $\cGdrl$ be its
  reduced \gb for $\ldrl$ and $\Sdrl$ be the associated staircase. Let
  $\vphi\in\K[\bx]\setminus I$ such that $\colid{I}{\vphi}$ is
  zero-dimensional of degree $D'$ and in
  shape position.

  Let
  $\Sigma$ be a finite staircase of size $N$ containing
  $\supp\nf{x_n^i\vphi}{\cGdrl}{\ldrl}$ for all $i\in\N$.

  Let $t$ be the number of monomials $\sigma$ in $\Sigma$ such that
  $x_n\sigma\in{\LM[\ldrl](\cGdrl)}$ and let $u$ be the number of monomials
  $\sigma$ in $\Sigma$ such that
  $x_n\sigma\in\ideal{\LM[\ldrl](I)}\setminus\LM[\ldrl](\cGdrl)$.

  Then, for a generic choice of vector $\bmr\in\K^N$,
  \Cref{algo:spfglmcolon} terminates and returns the
  reduced \gb of $\colid{I}{\vphi}$ for $\llex$. To do so, it requires at most $u+n$
  normal form computations \wrt $\cGdrl$ and $\ldrl$
  plus $O\pare{(t+u+n) N D'}$ operations in $\K$.
\end{theorem}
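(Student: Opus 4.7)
The plan is to adapt the proof of Theorem~\ref{th:spfglmcolonzero} to the positive-dimensional setting, leveraging Proposition~\ref{prop:spfglmcolon} to legitimize the matrix-based recurrence and parametrization arguments inside the finite-dimensional subspace $W$ of $\K[\bx]/I$ spanned by $\Sigma$.

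For correctness and termination, I would invoke Proposition~\ref{prop:spfglmcolon}: for a generic $\bmr\in\Kbar^N$, the sequence $(\bmr\tiM_{x_n}^i\bphi)_{i\in\N}$ admits $h_n$ as its minimal annihilating polynomial, so the Berlekamp--Massey step on line~\ref{ln:spcolbm} returns $h_n$ with $\deg h_n=D'$. The Hankel system solved by Algorithm~\ref{algo:param} on line~\ref{ln:spcolparam} is invertible because the sequence admits no recurrence of order less than $D'$, and the second part of Proposition~\ref{prop:spfglmcolon} identifies its unique solution with the coefficients of each $h_k$. All intermediate objects being finite, termination is immediate. For the two verification blocks, under genericity they pass; moreover, since $\supp\nf{x_n^i\vphi}{\cGdrl}{\ldrl}\subseteq\Sigma$ by hypothesis, one has $\tiM_{x_n}^i\bphi=M_{x_n}^i\bphi$ as vectors in $W$ for all $i$, so each check can be realized as a linear combination of already-computed data rather than a fresh normal form computation.

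For the complexity count, I would separate normal forms from arithmetic operations. The normal form cost comes (i) from setting up $\bphi$ and $\bpsi_1,\ldots,\bpsi_{n-1}$, which accounts for $n$ computations, and (ii) from constructing $\tiM_{x_n}$ via Lemma~\ref{lem:buildingmatrix}; the first two cases of that Lemma are free, and among the columns of the third case, by Remark~\ref{rk:normalform} the $t$ columns for which $x_n\sigma_j\in\LM[\ldrl](\cGdrl)$ are obtained immediately as $x_n\sigma_j-g$, leaving only $u$ genuine normal form computations, for a total of $u+n$. For the arithmetic cost, I would bound the sparsity of $\tiM_{x_n}$: at most $N$ nonzero coefficients arise from the first case (one per column), none from the second, and at most $(t+u)N$ from the third, giving $O((t+u+1)N)$ nonzero entries. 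Applying Proposition~\ref{prop:seqw} combined with the online Berlekamp--Massey variant of Remark~\ref{rk:spfglmzerocomp}, which shortens the sequence length from $O(N)$ to $O(D')$, the $D'$ matrix-vector products cost $O((t+u+1)ND')$ and the $n$ scalar products per step cost $O(nND')$. The Berlekamp--Massey run and the Hankel solve from Proposition~\ref{prop:param} contribute a lower-order $O(\sM(D')(n+\log D'))$ term, subsumed into the overall $O((t+u+n)ND')$ bound.

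The main obstacle is the bookkeeping: one must carefully combine the three-case sparsity decomposition of Lemma~\ref{lem:buildingmatrix} with the free-of-charge $t$ columns singled out by Remark~\ref{rk:normalform}, and then justify that the online Berlekamp--Massey strategy of Remark~\ref{rk:spfglmzerocomp} correctly detects the minimal polynomial after $O(D')$ terms without needing the full $O(N)$ iterations, so that the cost indeed scales with $D'$ rather than $N$.
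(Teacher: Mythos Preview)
Your proposal is correct and follows essentially the same approach as the paper's proof: both adapt the zero-dimensional argument of Theorem~\ref{th:spfglmcolonzero} via Proposition~\ref{prop:spfglmcolon}, Lemma~\ref{lem:buildingmatrix}, Remark~\ref{rk:normalform}, and the online strategy of Remark~\ref{rk:spfglmzerocomp}. Your sparsity analysis of $\tiM_{x_n}$ is in fact more explicit than the paper's, which leaves the $O((t+u)N)$ nonzero-entry count implicit; one small caveat is that the symbol $M_{x_n}$ is undefined in the positive-dimensional setting, so your identity $\tiM_{x_n}^i\bphi=M_{x_n}^i\bphi$ should be read as ``$\tiM_{x_n}^i\bphi$ is the coefficient vector of $\nf{x_n^i\vphi}{\cGdrl}{\ldrl}$ in $\Sigma$''.
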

\begin{proof}
  The proof follows the proofs of \Cref{th:spfglm,th:spfglmcolonzero}.

  By the terminations of the normal form computations and the
  Berlekamp--Massey algorithm, \Cref{algo:spfglmcolon} terminates.

  We now prove the correctness of the algorithm.
  By \Cref{prop:spfglmcolon}, we know that the polynomial
  returned by the Berlekamp--Massey algorithm, on line~\ref{ln:spcolbm} is
  a divisor of the
  minimal univariate polynomial of the reduced \gb of
  $\colid{I}{\vphi}$ for $\llex$. Thus, it suffices to check that,
  multiplied by $\vphi$, it lies in $I$ to ensure that this is the
  correct polynomial.

  By \Cref{prop:spfglmcolon} also, we know that if a
  polynomial $x_k-h_k(x_n)$ is in $\colid{I}{\vphi}$, then calling
  \Cref{algo:param}
  allows us to compute
  $h_k$. It suffices then to multiply $x_k-h_k(x_n)$ by $\vphi$ and to
  check that it is in $I$ to ensure that $x_k-h_k(x_n)$ is in
  $\colid{I}{\vphi}$ and thus that $h_k$ is correct. If it is not,
  then $\colid{I}{\vphi}$ is actually not in shape position. This proves the
  correctness of the algorithm.

  Finally, let us prove the complexity of the algorithm.
  To build the matrix $\tiM$, we need to compute normal forms of monomials
  $x_n\sigma\not\in\Sdrl$.
  By \Cref{rk:normalform}, the only normal forms which are not
  free to compute are those of
  $x_n\sigma\in\LT[\ldrl](I)\setminus\LT[\ldrl](\cGdrl)$, by
  assumption, there are $u$ of them.
  Then, we also need to compute the $n$ normal forms of
  $\bphi,\bpsi_1,\ldots,\bpsi_{n-1}$ \wrt $\cGdrl$ and $\ldrl$.

  By \Cref{rk:spfglmzerocomp}, it suffices to compute $O(D')$
  terms for each sequence $\seqw^{(k)}_i$ for $0\leq k\leq n-1$ and
  make a call to the Berlekamp--Massey algorithm in
  $O((t+u+n) N D')$ operations.

  All in all, we have a cost of $u+n$ normal forms computations plus
  $O((t+u+n) N D')$ operations in $\K$.
\end{proof}

\subsection{Reduction of the size of $\Sigma$}
\label{ss:optim}
In many applications, see \Cref{s:implem}, the size of the
chosen
$\Sigma$
is much larger than the degree of $\colid{I}{\vphi}$. This contrasts
greatly with the original \spFGLM where, by definition, the size of the
staircase $\Sdrl$ is the degree of the ideal $I$. Therefore, in order to
speed the computation up, one needs to reduce the size of
$\Sigma$ as much as possible. This can be done either before any computation with $\tiM$
or after.

In particular, we shall prove that the zero columns in
$\tiM$ will make us remove many monomials in $\Sigma$
such that after this reduction, $\tiM$ does not have any zero columns
left.

\begin{lemma}
  Let $I$ be an ideal of
  $\K[\bx]$,
  $\cGdrl$ be its reduced \gb for $\ldrl$ and $\Sdrl$ be its associated
  staircase. Let $\vphi\in\K[\bx]$ be a polynomial such that
  $\pare{\colid{I}{\vphi}}$ is
  zero-dimensional with staircase $T=\curl{1,x_n,\ldots,x_n^{D'-1}}$ for $\llex$.

  Let
  \[\Sigma=\curl{\sigma\in\Sdrl\,\middle\vert\,
      \exists s\in\bigcup_{\tau\in T}
      \supp\nf{\tau\vphi}{\cG}{\prec},\sigma\mid s}\subset\Sdrl,
  \]
  and
  \[\Sigma'=\Sigma\setminus\curl{\sigma\in\Sigma\,\middle\vert\,
      \exists\,i\in\N,\ x_n^i\sigma\in\Sdrl\setminus\Sigma}.\]
  Let $W'$ be the vector subspace of $\K[\bx]/I$ spanned by
  $\Sigma'$. Let $\tiM_{x_n}'$ be the matrix of the map $f\in
  W'\mapsto\pi_{W'}\pare{\nf{x_n\sigma}{\cGdrl}{\ldrl}}\in W'$ in the
  basis $\Sigma'=\curl{\sigma_0,\ldots,\sigma_{N'-1}}$. Then,
  $\tiM_{x_n}'$ can be built using the same procedure as in
  \Cref{lem:buildingmatrix}.

  Furthermore, if its $j$th column is zero, then $x_n\sigma_j\in\LT[\ldrl](I)$.
\end{lemma}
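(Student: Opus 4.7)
The plan is to first establish a stability property of $\Sigma'$ under multiplication by $x_n$, then use it to verify that the three-case construction from Lemma~\ref{lem:buildingmatrix} produces the correct matrix $\tiM_{x_n}'$, and finally read off the consequence about zero columns.

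First, I would prove the key stability claim: for every $\sigma\in\Sigma'$, either $x_n\sigma\in\Sigma'$ or $x_n\sigma\in\LT[\ldrl](I)$; in particular $x_n\sigma$ is never an element of $\Sdrl\setminus\Sigma'$. For this, suppose toward contradiction that $x_n\sigma\in\Sigma\setminus\Sigma'$. By the definition of $\Sigma'$ there exists $i\in\N$ with $x_n^{i}(x_n\sigma)=x_n^{i+1}\sigma\in\Sdrl\setminus\Sigma$, but then the very same $i+1$ witnesses $\sigma\notin\Sigma'$, a contradiction. Hence $x_n\sigma\in\Sigma$ forces $x_n\sigma\in\Sigma'$. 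Applying the definition of $\Sigma'$ to $\sigma$ with exponent $1$ also rules out $x_n\sigma\in\Sdrl\setminus\Sigma$. Combining these, the remaining possibility when $x_n\sigma\notin\Sigma'$ is $x_n\sigma\notin\Sdrl$, i.e.\ $x_n\sigma\in\LT[\ldrl](I)$.

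Next, I would verify the construction of $\tiM_{x_n}'$ case by case, mirroring Lemma~\ref{lem:buildingmatrix}. Fix $\sigma_j\in\Sigma'$ and compute $\pi_{W'}(\nf{x_n\sigma_j}{\cGdrl}{\ldrl})$. If $x_n\sigma_j=\sigma_k\in\Sigma'$, then $\nf{x_n\sigma_j}{\cGdrl}{\ldrl}=x_n\sigma_j=\sigma_k$ and its projection is $\sigma_k$, so the $j$th column is $e_k$, matching the first case. If $x_n\sigma_j\in\Sdrl\setminus\Sigma'$, the first part of the plan shows this cannot happen, so the second case of the prescription is vacuous; nevertheless, were it to occur, $\nf{x_n\sigma_j}{\cGdrl}{\ldrl}=x_n\sigma_j$ and its projection onto $\Sigma'$ is zero, so the prescription is still consistent. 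Finally, if $x_n\sigma_j\notin\Sdrl$ (equivalently $x_n\sigma_j\in\LT[\ldrl](I)$), then the $j$th column is the vector of coefficients of $\nf{x_n\sigma_j}{\cGdrl}{\ldrl}$ restricted to the monomials in $\Sigma'$, as prescribed by the third case. All three cases together reproduce exactly the procedure of Lemma~\ref{lem:buildingmatrix}.

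For the final assertion, observe that the first case always produces a nonzero column (a standard basis vector $e_k$) and the second case does not arise for $\sigma_j\in\Sigma'$ thanks to the stability claim. Therefore a zero $j$th column forces us into the third case, which by construction occurs precisely when $x_n\sigma_j\in\LT[\ldrl](I)$, yielding the stated conclusion. The only conceptually delicate point in the whole argument is the stability of $\Sigma'$ under multiplication by $x_n$; once that is in hand the rest is a direct transcription of Lemma~\ref{lem:buildingmatrix}.
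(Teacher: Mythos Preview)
Your proposal is correct and follows essentially the same route as the paper: both arguments reduce the zero-column claim to showing that $x_n\sigma_j$ can never land in $\Sdrl\setminus\Sigma'$ when $\sigma_j\in\Sigma'$, and then conclude that the only remaining way to get a zero column is $x_n\sigma_j\in\LT[\ldrl](I)$. Your stability claim is in fact slightly more explicit than the paper's, since you separately rule out $x_n\sigma_j\in\Sigma\setminus\Sigma'$ (via the $i\mapsto i+1$ shift) whereas the paper only spells out the case $x_n\sigma_j\in\Sdrl\setminus\Sigma$; your version makes the argument airtight.
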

\begin{proof}
  As $\tiM_{x_n}'$ is defined in a similar fashion as $\tiM_{x_n}$,
  the procedure of \Cref{lem:buildingmatrix} still applies.

  By construction of $\tiM_{x_n}'$, the $j$th column is $0$ if, and only if,
  \[\pi_{W'}\pare{\nf{x_n\sigma_j}{\cGdrl}{\ldrl}}=0.\]
  This can only happen in
  two cases.
  Either when $x_n\sigma_j$ is its own normal form and
  $\pi_{W'}(x_n\sigma_j)=0$, that is
  $x_n\sigma_j\in\Sdrl\setminus\Sigma'$. Or when
  $x_n\sigma_j\neq\nf{x_n\sigma_j}{\cGdrl}{\ldrl}$, that is
  $x_n\sigma_j\in\LT[\ldrl](I)$, and then the projection onto $W'$ is
  $0$.

  By assumption on $\Sigma'$, the multiplication of $\sigma_j$ by
  $x_n$ cannot reach a monomial in $\Sdrl$ not in $\Sigma$, hence if
  the projection of its normal form is $0$, this means that
  $x_n\sigma_j$ is not its own normal form, \ie $x_n\sigma_j\in\LT[\ldrl](I)$.
\end{proof}

Observe that $\Sigma'$ need not be a staircase: indeed, $1$ may have
even been removed.
\begin{proposition}\label{prop:reduce_sigma}
  Let $I$ be a positive-dimensional ideal of $\K[\bx]$, let $\cGdrl$ be its
  reduced \gb for $\ldrl$ and $\Sdrl$ be the associated staircase. Let
  $\vphi\in\K[\bx]\setminus I$ such that $\colid{I}{\vphi}$ is
  zero-dimensional and in
  shape position.

  Let
  \[\Sigma=\curl{\sigma\in\Sdrl\,\middle\vert\,
      \exists s\in\bigcup_{\tau\in T}
      \supp\nf{\tau\vphi}{\cG}{\prec},\sigma\mid s}\subset\Sdrl,
  \]
  and
  \[\Sigma'=\Sigma\setminus\curl{\sigma\in\Sigma\,\middle\vert\,
      \exists\,i\in\N,\ x_n^i\sigma\in\Sdrl\setminus\Sigma}.\]

  Let $W$ (\resp $W'$) be the vector subspace of $\K[\bx]/I$ spanned
  by $\Sigma$ (\resp $\Sigma'$). Let $\bphi$ (\resp $\bphi'$) be the
  vector of coefficients of the projection of
  $\nf{\vphi}{\cGdrl}{\ldrl}$ onto $W$ (\resp $W'$).

  Let $d\in\N$ be minimal such that there exist
  $c_0,\ldots,c_{d-1}\in\K$ such that
  \[\forall\,i\in\N,\quad
    \tiM_{x_n}^{i+d}\bphi
    +
    c_{d-1}\tiM_{x_n}^{i+d-1}\bphi+\cdots+c_0\tiM_{x_n}^i\bphi=0.\]
  Let $b\in\N$ be minimal such that there exist
  $a_0,\ldots,a_{b-1}\in\K$ such that
  \[\forall\,i\in\N,\quad
    \tiM_{x_n}^{\prime i+b}\bphi'
    +
    a_{b-1}\tiM^{\prime i+b-1}_{x_n}\bphi+\cdots
    +a_0\tiM_{x_n}^{\prime i}\bphi=0.\]
  Then, $b=d$ and $a_0=c_0,\ldots,a_{b-1}=c_{d-1}$.
\end{proposition}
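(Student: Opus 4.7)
The plan is to construct a projection $\pi\colon W\to W'$ that is the identity on each $\sigma\in\Sigma'$ and sends each $\sigma\in\Sigma\setminus\Sigma'$ to $0$, and to show that $\pi$ intertwines the two multiplication endomorphisms: $\pi\circ\tiM_{x_n}=\tiM'_{x_n}\circ\pi$. Since $\bphi'=\pi(\bphi)$, iterating this commutativity gives $(\tiM'_{x_n})^i\bphi'=\pi(\tiM_{x_n}^i\bphi)$ for every $i\in\N$, from which the proposition follows in two steps.

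To prove the commutativity, I check it on the monomial basis $\Sigma$ of $W$. For $\sigma\in\Sigma'$ both sides reduce to $\pi_{W'}(\nf{x_n\sigma}{\cGdrl}{\ldrl})$, so there is nothing to do. The delicate case is $\sigma\in\Sigma\setminus\Sigma'$: by definition of $\Sigma'$, there exists a smallest $j_0\geq1$ with $x_n^{j_0}\sigma\in\Sdrl\setminus\Sigma$; for every $0\leq k\leq j_0$, the monomial $x_n^k\sigma$ must then lie in $\Sdrl$, otherwise a leading monomial dividing $x_n^k\sigma$ would also divide $x_n^{j_0}\sigma$, contradicting $x_n^{j_0}\sigma\in\Sdrl$. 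Hence $x_n\sigma\in\Sdrl$ is its own normal form, and $x_n\sigma\notin\Sigma'$ since $x_n^{j_0-1}(x_n\sigma)=x_n^{j_0}\sigma\in\Sdrl\setminus\Sigma$; therefore $\pi(\nf{x_n\sigma}{\cGdrl}{\ldrl})=\pi(x_n\sigma)=0$, matching the right-hand side $\tiM'_{x_n}(\pi(\sigma))=0$.

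The inequality $b\leq d$ with matching coefficients is immediate from commutativity: applying $\pi$ to $\sum_{j=0}^d c_j\tiM_{x_n}^{i+j}\bphi=0$ yields the same relation on the primed sequence, showing in particular that $P'$ divides $h_n$. The main obstacle is the reverse inequality $b\geq d$, which together with $b\leq d$ forces $b=d$ and hence the coefficient coincidence by uniqueness of the minimal recurrence. To prove $b\geq d$, I plan to show that the primed minimal polynomial $P'(x_n)=x_n^b+a_{b-1}x_n^{b-1}+\cdots+a_0$ satisfies $P'(x_n)\vphi\in I$, so that $h_n\mid P'$ by Proposition~\ref{prop:posdimlincomb} and thus $P'=h_n$. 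For this, $P'(\tiM_{x_n})\bphi$ equals $\nf{P'(x_n)\vphi}{\cGdrl}{\ldrl}$, whose support lies in $\Sigma$ by Lemma~\ref{lem:sigma}, while its image under $\pi$ being zero restricts this support to $\Sigma\setminus\Sigma'$. Since $\tiM_{x_n}$ restricts to a nilpotent operator on the subspace of $W$ spanned by $\Sigma\setminus\Sigma'$ (each $\sigma\in\Sigma\setminus\Sigma'$ being annihilated after at most $j_0(\sigma)$ applications of $\tiM_{x_n}$), taking $J$ large enough produces $\nf{x_n^JP'(x_n)\vphi}{\cGdrl}{\ldrl}$ whose support sits in $\Sigma\cap(\Sdrl\setminus\Sigma)=\emptyset$, hence it vanishes, yielding $x_n^JP'(x_n)\vphi\in I$ and thereby concluding that $P'=h_n$.
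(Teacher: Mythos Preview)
Your intertwining approach via the projection $\pi:W\to W'$ is cleaner than the paper's iterative column-removal argument, and the verification of $\pi\circ\tiM_{x_n}=\tiM'_{x_n}\circ\pi$ is correct; this immediately gives $P'\mid h_n$, i.e.\ $b\leq d$. The gap is in the last line. From $x_n^J P'(x_n)\vphi\in I$ you only obtain $h_n\mid x_n^J P'$, which together with $P'\mid h_n$ yields $h_n=x_n^m P'$ for some $m\ge 0$. Nothing you have written rules out $m>0$ when $x_n\mid h_n$, so the conclusion ``thereby $P'=h_n$'' is unjustified in that case.

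The good news is that your setup already contains the fix, and you do not even need the nilpotency detour. Set $\psi=\nf{P'(x_n)\vphi}{\cGdrl}{\ldrl}$; you have shown $\supp\psi\subseteq\Sigma\setminus\Sigma'$. For any $\mu\in\supp\psi$, since $\mu\in\Sigma\setminus\Sigma'$ there is $j_0\ge 1$ with $x_n^{j_0}\mu\in\Sdrl$, so $x_n\mu\in\Sdrl$ (staircase property), hence $x_n\psi$ is its own normal form and equals $\nf{x_nP'(x_n)\vphi}{\cGdrl}{\ldrl}$, whose support lies in $\Sigma$ by Lemma~\ref{lem:sigma}. Thus $x_n\mu\in\Sigma$, and your intertwining gives $\pi(x_n\psi)=\tiM'_{x_n}\pi(\psi)=0$, so in fact $x_n\mu\in\Sigma\setminus\Sigma'$. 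Iterating, $x_n^k\mu\in\Sigma\setminus\Sigma'$ for every $k\ge 0$; since $\Sigma\setminus\Sigma'$ is finite and the monomials $x_n^k\mu$ are pairwise distinct, this forces $\supp\psi=\emptyset$. Hence $P'(x_n)\vphi\in I$ directly, so $h_n\mid P'$ and therefore $P'=h_n$, with no case distinction on whether $x_n\mid h_n$. (Incidentally, the phrase ``support sits in $\Sigma\cap(\Sdrl\setminus\Sigma)$'' is not what the nilpotency gives you: the projection $\pi_W$ inside $\tiM_{x_n}$ kills the $\Sdrl\setminus\Sigma$ part rather than leaving the support there.)
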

\begin{proof}
  By assumption, the sequence $\pare{\tiM_{x_n}^{\prime i}\bphi}_{i\in\N}$
  is linear recurrent of order $d$. The linear recurrences it
  satisfies are then in one-to-one correspondence with the ideal
  $\ideal{h_n}\in\K[\bx]$, where $h_n=x_n^d+c_{d-1} x_n^{d-1}+\cdots+c_0$.

  Now, there exist a smallest integer $\beta$ such that there exist unique
  $\alpha_0,\ldots,\alpha_{\beta-1}$ such that
  \[\forall\,i\in\N,\quad
    \tiM_{x_n}^{i+\beta+1}\bphi
    +
    \alpha_{\beta-1}\tiM_{x_n}^{i+\beta-1}\bphi+\cdots+\alpha_0\tiM_{x_n}^{i+1}\bphi=0.\]
  Hence, the sequence $\pare{\tiM_{x_n}^{\prime i+1}\bphi}_{i\in\N}$
  is linear recurrent of order $\beta$. Therefore,
  $x_n^{\beta}+\alpha_{\beta-1}+\cdots+\alpha_0$ divides
  $h_n$. Furthermore, because of the extra multiplication by
  $\tiM_{x_n}$ in the definition of this sequence, we know that the
  ideal of $\K[x_n]$ in one-to-one correspondence with its sets of
  linear recurrence relation is actually
  $\colid{\ideal{h_n}}{x_n}$. Thus it is spanned by $h_n$ if $x_n\nmid
  h_n$ and by $h_n/x_n$ otherwise.

  Let us denote $\sigma_0\ldrl\cdots\ldrl \sigma_{N-1}$ the monomials in $\Sigma$.
  If a monomial $\sigma_j\in\Sigma$ is such that $x_n\sigma_j\in
  \Sdrl\setminus\Sigma$, which implies that the $j$th column of
  $\tiM_{x_n}$ is $0$, then the coefficients of
  $\tiM_{x_n}^{\prime i+1}\bphi,\ldots,\tiM_{x_n}^{\prime i+\beta}\bphi$
  are
  all independent from the $j$th coefficient of
  $\bphi$. Thus, this coefficient does not appear in the second linear
  system and $c_0,\ldots,c_{d-1}$ do not depend on
  it. Hence, we can reduce the linear system by removing
  $\sigma_j$ from $\Sigma$ without changing the linear
  recurrence relation of smallest order that is satisfied.

  Now, if this monomial $\sigma_j$ is divisible by $x_n$, then there exists an index
  $i<j$ such that $\sigma_i=\sigma_j/x_n$. This implies that the $i$th
  column of the new matrix
  is zero. Thus, the previous argument can be repeated to remove $\sigma_i$
  from $\Sigma$ as well.

  By recurrence, at the end of this removal procedure, the set of
  monomials \emph{is} $\Sigma'$ and there was no change whatsoever in
  the linear recurrence relations satisfied by the modified
  sequence. Hence $d$ is the smallest integer such that there exist
  unique $c_0,\ldots,c_{d-1}\in\K$ such that
  \[\forall\,i\in\N,\quad
    \tiM_{x_n}^{\prime i+d}\bphi'
    +
    c_{d-1}\tiM^{\prime i+d-1}_{x_n}\bphi+\cdots
    +c_0\tiM_{x_n}^{\prime i}\bphi=0,
  \]
  in other words, $b=d$ and $a_0=c_0,\ldots,a_{b-1}=c_{d-1}$.
\end{proof}

\subsection{Non shape position case}
\label{ss:nonshape}
Next, we want to lift the assumption that
$\colid{I}{\vphi}$ is in shape position. In the \spFGLM algorithm,
this is easy to test, see \Cref{ss:spfglm}: The
minimal univariate polynomial in $x_n$ has the same degree as the
ideal if, and only if, the ideal is in shape
position. However, now, we do not know the degree of the polynomial
$h_n$ such that $\pare{\colid{I}{\vphi}}\cap\K[x_n]=\ideal{h_n}$.
Since for a generic choice of $\bmr$, we know that the \spFGLMcol
algorithm computes correctly $h_n$ on
line~\ref{ln:spcolbm}, the computation of the normal form at
the following line can be skipped. Now, the goal is to avoid computing the normal
forms of $(x_k-h_k(x_n))\vphi$ to ensure that $\colid{I}{\vphi}$ is in shape
position using the following lemma.
\begin{lemma}\label{lem:verifparam}
  Let $J$ be a zero-dimensional ideal of $\K[\bx]$. Let
  $\lambda\in\bar{\K}$ be generic. Then, for $1\leq k\leq n$,
  $J=\colid{J}{x_k+\lambda}$.
\end{lemma}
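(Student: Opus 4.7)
The containment $J \subseteq \colid{J}{x_k+\lambda}$ is immediate from the definition. For the reverse, note that $h \in \colid{J}{x_k+\lambda}$ says exactly that the image of $h$ in the quotient algebra $A \coloneqq \K[\bx]/J$ is annihilated by the image of $x_k+\lambda$. Hence the equality $\colid{J}{x_k+\lambda} = J$ is equivalent to $x_k+\lambda$ being a non-zero-divisor in $A$, and the plan is to exhibit a finite subset of $\Kbar$ outside of which this property holds.

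Since $J$ is zero-dimensional, $A$ is a finite-dimensional, hence Artinian, $\K$-algebra. By a standard fact from commutative algebra, the set of zero-divisors of a Noetherian ring is the union of its finitely many associated primes. The extension of scalars from $\K$ to $\Kbar$ is faithfully flat, so it preserves the property of being a non-zero-divisor. After this base change, the associated primes of $A\otimes_{\K}\Kbar$ lie among the maximal ideals $\mathfrak{m}_P$ indexed by the finitely many points $P = (p_1,\ldots,p_n) \in V(J) \subseteq \Kbar^n$.

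To conclude, observe that the class of $x_k+\lambda$ lies in $\mathfrak{m}_P$ precisely when $\lambda = -p_k$. The set of bad values of $\lambda$ is therefore contained in the finite set $\{-p_k \mid (p_1,\ldots,p_n) \in V(J)\}\subset\Kbar$, and for any $\lambda$ outside this finite exceptional set, $x_k+\lambda$ is a non-zero-divisor modulo $J$, yielding $\colid{J}{x_k+\lambda} = J$. The only delicate point is invoking the two standard commutative-algebra facts (zero-divisors as the union of associated primes, and preservation of non-zero-divisors under flat base change); the remainder of the argument is purely geometric and poses no real obstacle.
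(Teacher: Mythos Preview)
Your proof is correct and follows essentially the same geometric idea as the paper: the exceptional values of $\lambda$ are the negatives of the $k$-th coordinates of the finitely many points of $V(J)$. The paper argues this directly by taking $f\in(\colid{J}{x_k+\lambda})\setminus J$, noting that $(x_k+\lambda)f$ vanishes on $V(J)$ while $f$ does not vanish at some point (in the radical case), hence $x_k+\lambda$ must vanish there; the non-radical case is then dispatched with the remark that ``the same reasoning applies if one takes the multiplicities into account.''

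Your route via the characterization ``$\colid{J}{x_k+\lambda}=J$ iff $x_k+\lambda$ is a non-zero-divisor in $\K[\bx]/J$'' and the identification of zero-divisors with the union of associated primes is a genuinely cleaner packaging of the same argument. In particular, it handles the non-radical case uniformly and rigorously, where the paper's treatment is somewhat informal. The price you pay is invoking two standard but non-elementary facts (zero-divisors as the union of associated primes, and stability under flat base change), whereas the paper's radical case is entirely elementary. Both arguments arrive at the same finite exceptional set $\{-p_k\mid (p_1,\ldots,p_n)\in V(J)\}$.
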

\begin{proof}
  Clearly
  $J\subseteq\colid{J}{x_k+\lambda}$ so it remains to prove the converse
  inclusion for generic $\lambda$. This is equivalent to proving that
  the converse inclusion
  does not hold for only finitely many possible
  values of $\lambda$.

  Let us assume that $\colid{J}{x_k+\lambda}\neq J$ and let
  $f\in\colid{J}{x_k+\lambda}$ not in $J$. Then
  $g=(x_k+\lambda)f\in J$. Thus, $g$
  vanishes on the finitely many points of the variety defined by
  $J$. If we assume that $J$ is radical, then $f$ does not vanish on
  at least one of these points but $g$ does. Since $x_k+\lambda$ is prime in $\K[\bx]$,
  this means that $x_k+\lambda$ vanishes on this point. Therefore,
  one of the points of the variety defined by $J$ has its $k$th coordinate which is
  $-\lambda$. Thus, this situation can only occur for finitely many choices of
  $\lambda$.

  If now, $J$ is not radical, then
  the same reasoning applies if one takes the multiplicities into
  account as well. Hence, only finitely many $\lambda\in\bar{\K}$ are such
  that $\colid{J}{x_k+\lambda}\neq J$.
\end{proof}

\begin{remark}\label{rk:verifparamcol}
  Thanks to \Cref{lem:verifparam} applied to
  $J=\colid{I}{\vphi}$, we can check if the polynomial $x_k-h_k(x_n)$
  computed by the \spFGLMcol algorithm is correct. We compute
  $x_k-h_k'(x_n)$ for the ideal $\colid{I}{x_k+\lambda}$ by
  \begin{enumerate}
  \item Building $\psi_k'$ the column-vector of
    $\nf{x_k^2\vphi}{\cGdrl}{\ldrl}$ restricted to $\Sigma$.
  \item Computing $\pare{\seqw_i^{(0)}}_{0\leq i<2 N}$ and
    $\pare{\seqw_i^{(k)}}_{0\leq i<N}$ with
    \Cref{algo:seqw} called on $\bmr$, $\bpsi_k+\lambda\bphi$ and
    $\bpsi_k'+\lambda\bpsi_k$.
  \end{enumerate}
  If $h_k=h_k'$  for
  generic $\lambda$, then $x_k-h_k(x_n)$ is in $\colid{I}{\vphi}$.
\end{remark}

Now let us assume that $\colid{I}{\vphi}$ is
not in shape position but its radical $\sqrt{\colid{I}{\vphi}}$ is.

\begin{proposition}
  Let $I$ be a positive-dimensional ideal of $\K[\bx]$, let $\cGdrl$ be its
  reduced \gb for $\ldrl$ and $\Sdrl$ be the associated staircase. Let
  $\vphi\in\K[\bx]\setminus I$ such that $\colid{I}{\vphi}$ is
  zero-dimensional, let $\cHlex$ be its reduced \gb for $\llex$ and
  let $h_n\in\cHlex\cap\K[x_n]$.

  If $\sqrt{\colid{I}{\vphi}}$ is in shape position, then one can
  compute its reduced \gb for $\llex$ calling the \spFGLMcol algorithm
  with the following modifications:
  \begin{enumerate}
  \item On line~\ref{ln:spcolbm}, $h_n$ is the squarefree part of the
    polynomial returned by the Berlekamp--Massey algorithm.
  \item On line~\ref{ln:spcolparam}, $h_k$ is obtained thanks
    to~\cite[Algo.~2]{Hyun2020163}, see also~\cite{BostanSS2003}.
  \end{enumerate}
\end{proposition}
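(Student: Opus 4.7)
The plan is to mimic the proof of Theorem \ref{th:spfglmcolon}, isolating the two steps that relied on $\colid{I}{\vphi}$ being in shape position and replacing them by the appropriate radical substitutes supplied by \cite[Algorithm 2]{Hyun2020163}.

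First I would justify modification (1). The argument in Proposition \ref{prop:spfglmcolon} showing that Berlekamp--Massey returns, for generic $\bmr$, the monic generator of the principal ideal $\colid{I}{\vphi}\cap\K[x_n]$ does not rely on shape position of $\colid{I}{\vphi}$; it only uses that $\colid{I}{\vphi}$ is zero-dimensional together with the Wiedemann-style genericity of $\bmr$. This generator is exactly the polynomial $h_n\in\cHlex\cap\K[x_n]$ from the statement. Now $\sqrt{\colid{I}{\vphi}}\cap\K[x_n]=\sqrt{\ideal{h_n}}$ equals the principal ideal generated by the squarefree part $\bar{h}_n$ of $h_n$, because $\K[x_n]$ is a principal ideal domain. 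Under the shape-position hypothesis on $\sqrt{\colid{I}{\vphi}}$, this $\bar{h}_n$ is precisely the univariate element of the sought reduced lex \gb.

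For modification (2) I would apply \cite[Algorithm 2]{Hyun2020163} directly, as already flagged in the remark following Theorem \ref{th:spfglm}. The input it expects is two sets of scalar sequences arising from a multiplication operator and suitable cyclic vectors; by construction of $\tiM_{x_n}$, $\bphi$ and $\bpsi_k$, the sequences $(\seqw_i^{(0)})=\bmr\tiM_{x_n}^i\bphi$ and $(\seqw_i^{(k)})=\bmr\tiM_{x_n}^i\bpsi_k$ computed on line \ref{ln:spcolseqw} carry exactly this information relative to $\colid{I}{\vphi}$, realized inside the finite-dimensional subspace $W$. Feeding them to Hyun's algorithm yields polynomials $\tilde{h}_k$ with $\deg\tilde{h}_k<\deg\bar{h}_n$ and $x_k-\tilde{h}_k(x_n)\in\sqrt{\colid{I}{\vphi}}$, which together with $\bar{h}_n$ form the reduced lex \gb of $\sqrt{\colid{I}{\vphi}}$ by the shape-position hypothesis.

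The main obstacle is to verify that the generating function $\sum_{i\geq 0}\seqw_i^{(k)}t^i$ satisfies the rational reconstruction hypothesis of \cite[Algorithm 2]{Hyun2020163}: it must be rational with denominator equal to the reciprocal polynomial of $h_n$, and its partial fraction decomposition over the distinct roots of $\bar{h}_n$ must exhibit the $x_k$-coordinates of the geometric points of $V(\sqrt{\colid{I}{\vphi}})$. This reduces, via Proposition \ref{prop:posdimlincomb} together with the argument of Proposition \ref{prop:spfglmcolon}, to a trace identity in the quotient $\K[\bx]/\colid{I}{\vphi}$ which handles multiplicities in a manner analogous to Lemma \ref{lem:verifparam}. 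Correctness of the computed $\tilde{h}_k$ can finally be certified a posteriori along the lines of Remark \ref{rk:verifparamcol}.
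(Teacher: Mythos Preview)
Your proposal is correct and follows essentially the same two-step argument as the paper: (1) Berlekamp--Massey recovers the univariate generator of $\colid{I}{\vphi}\cap\K[x_n]$ without any shape-position assumption, and its squarefree part is the univariate element of the lex basis of the radical; (2) feeding the same sequences $(\seqw_i^{(k)})$ to \cite[Algorithm~2]{Hyun2020163} returns the parametrizations $x_k-\tilde h_k(x_n)$ in $\sqrt{\colid{I}{\vphi}}$.

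The paper's proof is considerably terser than yours: it simply cites the analogue of Lemma~\ref{lem:elimcolon} for step~(1) and then asserts that Hyun's algorithm, applied to ``these sequences'', outputs the radical parametrizations because ``by construction of these sequences, $J=\colid{I}{\vphi}$''. Your third paragraph, about checking the rational-reconstruction hypothesis (denominator equal to the reciprocal of $h_n$, partial-fraction residues encoding the coordinates), is extra rigor that the paper does not spell out; the paper implicitly delegates this verification to \cite{Hyun2020163,BostanSS2003}. Likewise, the a~posteriori certification via Remark~\ref{rk:verifparamcol} is your addition. None of this diverges from the paper's route---you are filling in details the authors left implicit rather than taking a different approach.
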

\begin{proof}
  By \Cref{lem:elimcolon}, we already know that we can recover
  the minimal univariate polynomial in $x_n$ of
  $\colid{I}{\vphi}$. Extracting its squarefree part yields the one of
  $\sqrt{\colid{I}{\vphi}}$.

  Now, Algorithm~2 of~\cite{Hyun2020163} called on these sequences
  yields the polynomials with leading terms $x_1,\ldots,x_{n-1}$ in
  $\sqrt{J}$ for some ideal $J$. By construction of these sequences,
  $J=\colid{I}{\vphi}$.
\end{proof}

\begin{remark}\label{rk:verifparam}
  In practice, when an ideal $J$ is not in shape position, it is not easy to
  check that $\sqrt{J}$ is. Therefore, using \Cref{lem:verifparam} is the
  cornerstone of our probabilistic verification algorithm in
  \msolve~\cite{msolve,msolveweb} when $J$ is not in shape
  position but its radical might be, see~\cite[Sec.~4.4]{msolve}. We
  proceed as in \Cref{rk:verifparamcol}:

  \begin{enumerate}
  \item Compute the polynomials $x_k-g_k(x_n)$ in $\sqrt{J}$ for $1\leq
    k\leq n-1$, with $\deg g_k$ minimal.
  \item Compute the polynomials $x_k-g_k'(x_n)$ in $\sqrt{\colid{J}{x_k+\lambda}}$ for
    $\lambda$ picked at random and $1\leq k\leq n-1$, with $\deg g_k'$
    minimal.
  \item Check whether $g_k=g_k'$ for $1\leq k\leq n-1$.
  \end{enumerate}
  By \Cref{lem:verifparam}, for a generic $\lambda$,
  $J=\colid{J}{x_k+\lambda}$, hence both
  radical ideals are the same. Furthermore, if they are in shape position, then
  $g_k=g_k'$ for $1\leq k\leq n-1$. Therefore, any discrepancy must come from
  the fact that $\sqrt{J}$ is \emph{not} in shape position and the polynomials
  $x_k-g_k(x_n)$ and $x_k-g_k'(x_n)$ are meaningless.
\end{remark}


\section{Implementation and practical experiments}\label{s:implem}
We implemented \Cref{algo:F4SAT,algo:spfglmcolon} in
\msolve~\cite{msolve,msolveweb}, using the {\tt C} programming language. The
saturation examples we use come from classical benchmarks of real algebraic
geometry when it comes to compute {\em limits} of critical points of the
restriction of a polynomial map to some algebraic set depending on a parameter.
This is used, for instance, for computing sample points in singular real algebraic sets
\cite{Sa05} or computing their real dimension \cite{LaSa21} and boils down to the
computation of saturated ideals.

These computations were performed on a computing server with 1.48 TB of memory
and an Intel Xeon Gold 6244 @ 3.60GHz processor.

In \Cref{tab:F4SATpos,tab:F4SATzero}, we report on timings
for computing the \gb of the saturation $\satid{I}{\vphi}$ of an ideal $I$
\wrt $\vphi$ for $\ldrl$. In both tables, $I$ is positive-dimensional. However, in
\Cref{tab:F4SATpos}, $\satid{I}{\vphi}$ is also
positive-dimensional, while in \Cref{tab:F4SATzero} it is
zero-dimensional.

We optimized the \FfourSAT algorithm~(\Cref{algo:F4SAT}),
as discussed in \Cref{ss:F4SAToptim}.
Columns learn~1 and learn~2 are for the two learning
rounds of the tracer while column apply is for the apply
phase.

The columns \msolve correspond to our implementation of Rabinowitsch
trick in \msolve, with column prob.\ using the probabilistic linear
algebra while the column learn is for the learning phase of the tracer
and the column apply is for the apply phase (see \cite{msolve}). These last two
columns are to be compared with the learn 1 and 2, and the apply phases of
\FfourSAT. The one related to the probabilistic linear algebra is to be compared
to the apply phase of \FfourSAT.

\added{We compare the maximum degree of the polynomials that are
  handled by the different algorithms. In column \Fquatre, $D_I$ is the
  maximum degree when calling \Fquatre on the generators of $I$. This
  is useful if one were to perform this computation first
  before using \Cref{lem:staircase} and \Cref{algo:linalgcolon} to determine the
  the polynomials in the \gb of $\colid{I}{\vphi}$. For \FquatreSAT,
  the column $D_{\max}$ corresponds to the maximum degree during the first learning
  phase: i.e.\ either for reducing the S-polynomials or for searching new
  polynomials with support in the current staircase. Note that, for
  these examples, this maximum
  degree is actually always reached by the former computation and only
  sometimes for the latter as well. Finally, for \msolve with
  Rabinowitsch trick,
  the column
  $D_{\max}$ gives also the highest degree of the polynomials in $n+1$
  variables.
}  

We compare our implementations with Maple~\cite{maple2020} using probabilistic
linear algebra and \Singular~\cite{Singular}.

In both tables, examples SOS mean that we consider a polynomial $f$
which the sum of $p$ squares of polynomials of degree $d$ in $n$
variables. In \Cref{tab:F4SATpos},
$I=\ideal{\diff{f}{x_1},\ldots,\diff{f}{x_{n-1}}}$
or $I=\ideal{f,\diff{f}{x_1},\ldots,\diff{f}{x_{p-1}}}$, with $p\leq n-2$,
and
$\vphi=\diff{f}{x_n}$. In the former case, the saturated ideal
has dimension $1$, while, in the latter case, it
has dimension $n-p\geq 2$. The table is divided by increasing
dimension.
In \Cref{tab:F4SATzero},
$I=\ideal{f,\diff{f}{x_1},\ldots,\diff{f}{x_{n-1}}}$ and
$\vphi=\diff{f}{x_n}$.

In general, \FquatreSAT is the most efficient attempt to compute the saturations
(sometimes with a speed-up close to $10$), \msolve's \Fquatre with elimination
order is in general a bit faster than Maple on the probabilistic linear algebra.
When applying the tracing approach we can see that sometimes the search for the
correct steps to search for new elements in the saturation in \FquatreSAT has a
bigger impact (\ie learn 1 is slower than learn 2). In other cases, the exact
linear algebra applied in learn 2 to trace the full computation is the
bottleneck (\ie learn 2 is slower than learn 1). Nevertheless, the application
phase of \FquatreSAT is in general the fastest implementation, often by an order
of magnitude. This suggests that \FquatreSAT provides a very efficient method
for computing saturations of ideals over $\bbQ$ using the multi-modular tracer
approach. The few examples where \FquatreSAT is slower than \msolve's \Fquatre
or Maple are identified by the fact that \FquatreSAT finds saturation elements a
bit later than the Rabinowitsch trick-based implementations. Clearly, one could
test for saturation elements more often in learn 1, but this would have a bigger
impact on the running time. A future plan is to apply a more dynamic and
adaptable strategy of when to search for saturation elements.

{\tiny
  \begin{table}[h]
    \begin{center}
      {\tiny
      \begin{minipage}{\linewidth}
        \begin{tabular*}{\linewidth}{@{\extracolsep{\fill}}
          |l||r||r|r|r|r||r|r||r|r||r||r| @{\extracolsep{\fill}}
          }
          \hline
          \multirow{2}{*}{Sys-SOS}
          &\multicolumn{1}{c||}{\Fquatre}    
          &\multicolumn{4}{c||}{\FquatreSAT}
          &\multicolumn{2}{c||}{\msolve}
          &\multicolumn{2}{c||}{\msolve}
          &Maple
          &\multirow{2}{*}{Singular}
          \\
            &\multicolumn{1}{c||}{\added{$D_I$}}
            &\multicolumn{1}{c|}{\added{$D_{\max}$}}
            &\multicolumn{1}{c|}{(learn1)}
            &\multicolumn{1}{c|}{(learn2)}
            &\multicolumn{1}{c||}{(apply)}
            &\multicolumn{1}{c|}{\added{$D_{\max}$}}
            &\multicolumn{1}{c||}{(prob.)}
            &\multicolumn{1}{c|}{(learn)}
            &\multicolumn{1}{c||}{(apply)}
            &\multicolumn{1}{c||}{(prob.)}
            &
          \\
          \hline
          &\multicolumn{11}{c|}{\added{positive-dimensional to $1$-dimensional}}\\
          d3-n6-p2 & \added{13} & \added{10} & 1.31 & 0.41 & 0.31
            &\added{15} & 0.77 & 2.40 & 0.40 & 1.12  & 52.2
          \\
          d3-n6-p3 & \added{16} & \added{13} & 43.7 & 5.55 & 1.84
            &\added{18} & 25.2 & 142 & 16.6 &  35.4  & 2,902
          \\
          d3-n6-p4 & \added{19} & \added{16} & 533 & 53.1 & 19.7
            &\added{21} &171 & 882 & 126 &  223  & 39,501
          \\
          d3-n6-p5 & \added{22} & \added{19} & 1,863 & 184 & 104
            &\added{24} & 276 & 1,145 & 183 & 394  & 42,854
          \\
          d4-n6-p2 & \added{20} & \added{16} & 972 & 107 & 77
            &\added{23} & 253 & 1,176 & 191 & 394  & 28,043
          \\
          d4-n6-p3 & \added{24} & \added{20} & 31,101 & 1,316 & 596 
            &\added{27}& 7,444 & 43,803 & 6,336 & 8,817 & --
          \\
          d2-n7-p6 & \added{14} & \added{12} & 5.13 & 1.82 & 0.77
            &\added{15} & 3.01 & 15.3 & 1.84 &  4.95  & 443
          \\
          d3-n7-p2 & \added{14} & \added{11} & 13.4 & 3.61 & 2.23
            &\added{16} & 9.59 & 54.1 & 5.29 &  12.5  & 872
          \\
          d3-n7-p3 & \added{17} & \added{14} & 1,263 & 164 & 32.4
            &\added{19} & 533 & 3,647 & 406 & 984  & --
          \\
          d3-n7-p4 & \added{20} & \added{17} & 22,296 & 2,235 & 469
            &\added{22} & 6,605 & 47,286 & 5,348 & 10,001 & --
          \\
          d3-n7-p5 & \added{23} & \added{20} & 126,006 & 137,724 & 2,881
            &\added{25} & 29,740 & 204,718 & 22,925 & 33,635 & -- 
          \\
          d2-n8-p5 & \added{12} & \added{10} & 11.7 & 8.37 & 1.79
            &\added{13} & 15.1 & 99.9 & 7.92 & 20.4 & 3,972
          \\
          d2-n8-p6 & \added{14} & \added{12} & 95.7 & 63.7 & 10.5
            &\added{15} & 54.3 & 387 & 33.8 & 63.1 & 15,950
          \\
          d2-n8-p7 & \added{16} & \added{14} & 265 & 79.6 & 22.2
            &\added{17} & 81.9 & 556 & 47.2 & 122  & 15,125
          \\
          d3-n8-p2 & \added{15} & \added{12} & 228 & 276 & 18.1
            &\added{17} & 98.3 & 787 & 71.7 & 135  & 15,252
          \\
          d3-n8-p3 & \added{18} & \added{15} & 25,593 & 3,716 & 471
            &\added{20} & 11,050 & 107,744 & 8,984 & 13,705 & --
          \\
          \hline
          &\multicolumn{11}{c|}{\added{positive-dimensional to $2$-dimensional}}\\
          d2-n8-p5 & \added{13} & \added{14} & 620    & 157   & 43
            &\added{18}  & 147  & 663 & 87       
            & 166 & 26,746\\
          d2-n9-p4 & \added{11} & \added{11} & 82    & 119   & 28
            &\added{10}  & 35  & 139   & 20       
            & 50 & 4,563\\
          d2-n9-p5 & \added{13} & \added{13} & 2,889  & 1,617   & 448
            &\added{12}  & 924  & 4574   & 726       
            & 735 & --\\
          d2-n9-p6 & \added{15} & \added{17} & 40,352 & 8,290 & 2,155
            &\added{20} & 8,775 & 40,921 & 7,010      
            & 6,969 & --\\
          d2-n10-p4 & \added{11} & \added{11} & 202 & 472  & 69
            &\added{10} & 101 & 530 & 64
            & 142 & 14,237\\
          d2-n10-p5 & \added{13} & \added{13} & 11,347 & 11,512 & 2,801
            &\added{12} & 3,373 & 19,465 & 2,710       
            & 3,148 & --\\
          \hline
          &\multicolumn{11}{c|}{\added{positive-dimensional to $3$-dimensional}}\\
          d2-n9-p4 & \added{11} & \added{11} & 284 & 100 & 22
            &\added{12} & 75 & 325 & 46       
            & 77 & 10,014\\
          d2-n9-p5 & \added{13} & \added{14} & 5,157 & 758 & 201
            &\added{17} & 747 & 3,422 & 333       
            & 845 & --\\
          d2-n10-p4 & \added{11} & \added{11} & 738 & 481 & 108
            &\added{10} & 180 & 981 & 109      
            & 185 & 28,688 \\
          d2-n10-p5 & \added{13} & \added{14} & 66,845 & 12,082 & 2,141
            &\added{15} & 25,707 & 60,054
            & 23,100  
            & 6,885 & --\\
          \hline
        \end{tabular*}
        \caption{Timings in seconds, \gb for $\ldrl$,
          positive-to-positive-dimensional case
          \label{tab:F4SATpos}}
      \end{minipage}
      }
    \end{center}
  \end{table}
}

{\small
  \begin{table}[h]
    \begin{center}
      {\tiny
      \begin{minipage}{\textwidth}
        \begin{tabular*}{\textwidth}{@{\extracolsep{\fill}}|l||r||r|r|r|r||r|r||r|r||r||r|@{\extracolsep{\fill}}}
          \hline
              \multirow{2}{*}{Examples}
          &\multicolumn{1}{c||}{\Fquatre}
          &\multicolumn{4}{c||}{\FquatreSAT}
          &\multicolumn{2}{c||}{\msolve}
          &\multicolumn{2}{c||}{\msolve}
          &Maple
          &\multirow{2}{*}{Singular}
          \\
            &\multicolumn{1}{c||}{\added{$D_I$}}
            &\multicolumn{1}{c|}{\added{$D_{\max}$}}
            &\multicolumn{1}{c|}{(learn 1)}
            &\multicolumn{1}{c|}{(learn 2)}
            &\multicolumn{1}{c||}{(apply)}
            &\multicolumn{1}{l|}{\added{$D_{\max}$}}
            &\multicolumn{1}{c||}{(prob.)}
            &\multicolumn{1}{c|}{(learn)}
            &\multicolumn{1}{c||}{(apply)}
            &\multicolumn{1}{c||}{(prob.)}
            &
          \\
          \hline
          &\multicolumn{11}{c|}{\added{positive-dimensional to $0$-dimensional}}\\
          Steiner &\added{19} & \added{19} & 115 & 134 & 67.2
            &\added{24} & 204 & 614 & 153 & 239 & 3,642
          \\
          d3-n6-p3 & \added{18} & \added{14} & 82.4& 127 & 56.7
            &\added{15} & 51.5 & 191 & 32.6 & 67.4 & 8,226
          \\
          d3-n6-p4 & \added{21} & \added{18} & 1,592 & 1,776 & 810
            &\added{23}  & 2,123 & 5,284 & 1,720 & 3,585  & --
          \\
          d3-n6-p5 & \added{24} & \added{21} & 9,646 & 7,032 & 3,321
            &\added{33} & 7,485 & 16,711  & 6,466 & 7,226  & --
          \\
          d4-n6-p2 & \added{23} & \added{15} & 720 & 1,581 & 536
            &\added{22} & 120 & 520 & 60.6 & 135  & 24,532
          \\
          d4-n6-p3 & \added{27} & \added{20} & 45,749 & 38,657 & 18,123
            &\added{29} & 40,646 & 190,009 & 35,835 & 38,466  & --
          \\
          d2-n7-p6 & \added{15} & \added{6} & 18.9 & 41.85 & 10.8
            &\added{17} &  31.8 & 101 & 19.5 & 41.4 & 1,773
          \\
          d3-n7-p2 & \added{16} & \added{12} & 28.2 & 45.2 & 23.9
            &\added{12} & 5.02 & 11.6 & 2.63 & 8.09 & 961
          \\
          d3-n7-p3 & \added{19} & \added{14} & 1,462 & 2,688 & 937
            &\added{17} & 953 & 5,851 & 875 & 1,108  & --
          \\
          d3-n7-p4 & \added{22} & \added{18} & 48,907 & 65,035 & 22,844
            &\added{22} & 40,383 & 248,889 & 34,670 & 39,729  & --
          \\
          d2-n8-p4 & \added{11} & \added{9} & 2.68 & 5.04 & 1.89
            &\added{10} & 3.55 & 10.1 & 2.02 & 4.12  & 500
          \\
          d2-n8-p5 & \added{13} & \added{11} & 47.7 & 171.9 & 37.1
            &\added{13} & 62.9 & 270 & 45.3 & 48.8  & 8,333
          \\
          d2-n8-p6 & \added{15} & \added{13} & 287 & 820 & 169
            &\added{17} & 420 & 1,599 & 301 & 350  & 54,567
          \\
          d2-n8-p7 & \added{17} & \added{15} & 1,018 & 1,841 & 442
            &\added{25} & 907 & 3,198  & 683 & 871  & --
          \\
          d3-n8-p2 & \added{17} & \added{12} & 300 & 585 & 266
            &\added{16} & 32.4 & 105 & 20.4 & 50.7  & 9,812
          \\
          d3-n8-p3 & \added{20} & \added{14} & 18,152 & 42,436 & 11,285
            &\added{28} & 15,502 & 71,595 & 8,478 & 15,182  & --
          \\
          \hline
        \end{tabular*}
        \caption{Timings in seconds, \gb for $\ldrl$,
          positive-to-zero-dimensional case
          \label{tab:F4SATzero}}
      \end{minipage}
      }
    \end{center}
  \end{table}
}

In \Cref{tab:colon_spfglm}, we compare
\Cref{algo:spfglmcolon} for computing a \gb of the
zero-dimensional colon ideal
$\colid{I}{\vphi}=\satid{I}{\vphi}$ for $\llex$ with
\Maple~\cite{maple2020} using the \texttt{Groebner:-Basis} command
followed by the \texttt{Groebner:-FGLM} command. As in \Cref{tab:F4SATzero},
$I=\ideal{f,\diff{f}{x_1},\ldots,\diff{f}{x_{n-1}}}$ and
$\vphi=\pare{\diff{f}{x_n}}^M$ for $M$ large enough, with $f$ the sum of
$p$ squares of polynomials of degree $d$ in $n$ variables.
The columns $\Card\Sigma$
and $\Card\Sigma'$ correspond to the size of the set $\Sigma$ before
and after reductions as defined in \Cref{lem:sigma},
\Cref{rk:iterative_sigma} and
\Cref{prop:reduce_sigma}, while column $D'$ gives the degree
of the saturated ideal. Whether it is between $\Card\Sigma$ and
$\Card\Sigma'$ or between $\Card\Sigma'$ and $D'$, we can observe
ratios going up to around $5$. Therefore, it is clear that the algorithm would not be as
efficient if one were to work with $\Sigma$ directly. Still, it would
be even more beneficial to reduce further the size of $\Sigma'$ to be
as close as possible to $D'$.

The column \Fquatre gives the proportion of time to compute the \gb $\cGdrl$ of
$I$ for $\ldrl$ using the \Fquatre algorithm in \msolve, the column Sat.\ order
corresponds to the time for computing iteratively
$\nf{\pare{\frac{\partial f}{\partial x_n}}^M}{\cGdrl}{\ldrl}$ with
$M$ large enough. The column Matrix
corresponds to the proportion of time to compute all the normal forms
$\nf{x_n\sigma}{\cGdrl}{\ldrl}$ for $\sigma$ in $\Sigma'$ and
$x_n\sigma\in\ideal{\LM(\cGdrl)}$ as in
\Cref{lem:buildingmatrix}. The \FGLM column gives the proportion of time to
perform the \spFGLM algorithm with this matrix. Finally, the total column gives
the total time to perform all these computations in seconds, resulting in the
computation of the saturation of $I$ \wrt $\vphi$. Likewise the columns \Maple
give the time for \Maple in seconds using Rabinowitsch
trick~\cite{Rabinowitsch1930}. The column \texttt{Basis} computes the \gb for
$\ldrl$ of $I+\ideal{1-t\frac{\partial f}{\partial x_n}}$ while the column
\texttt{FGLM} computes the \gb of the same ideal for $\llex$ with
$x_n\llex\cdots\llex x_1\llex t$.

We can notice that the \spFGLMcol algorithm approach is most efficient when
either the change of order step is the most time-consuming or when
the ratios between $\Card\Sigma$, $\Card\Sigma'$ and $D'$ are the
smallest. In the former case, the algorithm benefits from the
regularity of the computation of the reduced \gb of $I$ for $\ldrl$ compared to
the one of $I+\ideal{1-t\vphi}$ in the Rabinowitsch trick approach. In
the latter case, when $\Sigma$ or $\Sigma'$ are large compared to
$D'$, the overhead in the linear algebra part becomes
overwhelming. Clearly, in a multi-modular approach, one would want to
consider an even smaller subset of $\Sigma'$ to perform the
computations, once $D'$ is known. All in all, we can see speed-ups
that are significant and sometimes higher than $10$.

{\begin{center}
  \begin{table}[h]
    \begin{center}
      {\tiny 
        \begin{minipage}{\linewidth}
          \centering
        \begin{tabular}
          {|p{1.2cm}||p{0.6cm}|p{0.6cm}|p{1cm}||
          p{0.4cm}|p{0.6cm}|p{0.6cm}|p{0.6cm}|p{0.7cm}||
          p{0.4cm}|p{0.4cm}|p{0.7cm}|}
          \hline
          &\multicolumn{3}{c||}{sizes}
          &\multicolumn{5}{c||}{\msolve}
          &\multicolumn{3}{c|}{\Maple \texttt{Groebner}}
          \\
          &$\Card{\Sigma}$
          &$\Card{\Sigma'}$
          &$D'$
          &\Fquatre
          &Sat.\ order
          &\!\!Mat.
          &\!\!\FGLM
          &Total
          &\!\texttt{Basis}
          &\texttt{FGLM}
          &Total
          \\
          \hline
          \nameSigSigpDeg
          {d2-n8-p5}
          {\phantom{0}5746}
          {\phantom{0}2636}
          {\phantom{0}1516}
          \FquatreSatMatrixSPFGLMTotalMapleBMapleFMapleT
          {
          60\%}
          {
          20\%}
          {
          \phantom{1}7\%}
          {
          13\%}
          {\phantom{000}21}
          {
          96\%}
          {
          \phantom{0}4\%}
          {\phantom{000}56}
          \\
          
          \nameSigSigpDeg
          {d2-n8-p6}
          {\phantom{0}7901}
          {\phantom{0}5100}
          {\phantom{0}3756}
          \FquatreSatMatrixSPFGLMTotalMapleBMapleFMapleT
          {
          35\%}
          {
          \phantom{0}9\%}
          {
          \phantom{1}6\%}
          {
          50\%}
          {\phantom{00}140}
          {
          88\%}
          {
          12\%}
          {\phantom{00}350}
          \\
          
          \nameSigSigpDeg
          {d2-n8-p7}
          {\phantom{0}8841}
          {\phantom{0}7340}
          {\phantom{0}6444}
          \FquatreSatMatrixSPFGLMTotalMapleBMapleFMapleT
          {
          33\%}
          {
          \phantom{0}9\%}
          {
          \phantom{1}6\%}
          {
          52\%}
          {\phantom{00}320}
          {
          79\%}
          {
          21\%}
          {\phantom{00}890}
          \\
          
          \nameSigSigpDeg
          {d2-n9-p5}
          {11748}
          {\phantom{0}4548}
          {\phantom{0}2308}
          \FquatreSatMatrixSPFGLMTotalMapleBMapleFMapleT
          {
          56\%}
          {
          14\%}
          {
          \phantom{1}8\%}
          {
          22\%}
          {\phantom{00}150}
          {
          68\%}
          {
          32\%}
          {\phantom{00}410}
          \\
          
          \nameSigSigpDeg
          {d2-n9-p6}
          {18829}
          {10372}
          {\phantom{0}6788}
          \FquatreSatMatrixSPFGLMTotalMapleBMapleFMapleT
          {
          40\%}
          {
          \phantom{0}7\%}
          {
          \phantom{1}8\%}
          {
          45\%}
          {\phantom{0}1200}
          {
          91\%}
          {
          \phantom{0}9\%}
          {\phantom{0}3200}
          \\
          
          \nameSigSigpDeg
          {d2-n9-p7}
          {24332}
          {17540}
          {13956}
          \FquatreSatMatrixSPFGLMTotalMapleBMapleFMapleT
          {
          33\%}
          {
          \phantom{0}5\%}
          {
          \phantom{1}7\%}
          {
          55\%}
          {\phantom{0}4400}
          {
          83\%}
          {
          17\%}
          {12000}
          \\
          
          \nameSigSigpDeg
          {d2-n10-p4}
          {\phantom{0}9724}
          {\phantom{0}1996}
          {\phantom{00}652}
          \FquatreSatMatrixSPFGLMTotalMapleBMapleFMapleT
          {
          67\%}
          {
          27\%}
          {
          \phantom{1}5\%}
          {
          \phantom{0}1\%}
          {\phantom{000}42}
          {
          99\%}
          {
          \phantom{0}1\%}
          {\phantom{000}68}
          \\
          
          \nameSigSigpDeg
          {d2-n10-p5}
          {22408}
          {\phantom{0}7372}
          {\phantom{0}3340}
          \FquatreSatMatrixSPFGLMTotalMapleBMapleFMapleT
          {
          52\%}
          {
          11\%}
          {
          \phantom{1}9\%}
          {
          28\%}
          {\phantom{00}900}
          {
          97\%}
          {
          \phantom{0}3\%}
          {\phantom{0}1200}
          \\
          
          \nameSigSigpDeg
          {d2-n10-p6}
          {40946}
          {19468}
          {11404}
          \FquatreSatMatrixSPFGLMTotalMapleBMapleFMapleT
          {
          42\%}
          {
          \phantom{0}7\%}
          {
          \phantom{1}9\%}
          {
          42\%}
          {\phantom{0}9900}
          {
          92\%}
          {
          \phantom{0}8\%}
          {17000}
          \\
          
          \nameSigSigpDeg
          {d3-n5-p3}
          {\phantom{0}3034}
          {\phantom{0}1320}
          {\phantom{00}672}
          \FquatreSatMatrixSPFGLMTotalMapleBMapleFMapleT
          {
          39\%}
          {
          33\%}
          {
          \phantom{1}9\%}
          {
          19\%}
          {\phantom{0000}1.1}
          {
          97\%}
          {
          \phantom{0}3\%}
          {\phantom{0000}4}
          \\
          
          \nameSigSigpDeg
          {d3-n5-p4}
          {\phantom{0}3750}
          {\phantom{0}2616}
          {\phantom{0}1968}
          \FquatreSatMatrixSPFGLMTotalMapleBMapleFMapleT
          {
          27\%}
          {
          27\%}
          {
          11\%}
          {
          35\%}
          {\phantom{0000}4.3}
          {
          95\%}
          {
          \phantom{0}5\%}
          {\phantom{000}43}
          \\
          
          \nameSigSigpDeg
          {d3-n6-p3}
          {10773}
          {\phantom{0}3792}
          {\phantom{0}1632}
          \FquatreSatMatrixSPFGLMTotalMapleBMapleFMapleT
          {
          60\%}
          {
          11\%}
          {
          \phantom{1}8\%}
          {
          21\%}
          {\phantom{000}50}
          {
          96\%}
          {
          \phantom{0}4\%}
          {\phantom{000}77}
          \\
          
          \nameSigSigpDeg
          {d3-n6-p4}
          {16271}
          {\phantom{0}9192}
          {\phantom{0}5952}
          \FquatreSatMatrixSPFGLMTotalMapleBMapleFMapleT
          {
          37\%}
          {
          \phantom{0}5\%}
          {
          \phantom{1}6\%}
          {
          52\%}
          {\phantom{00}500}
          {
          89\%}
          {
          11\%}
          {\phantom{0}1300}
          \\
          
          \nameSigSigpDeg
          {d3-n6-p5}
          {18897}
          {14862}
          {12432}
          \FquatreSatMatrixSPFGLMTotalMapleBMapleFMapleT
          {
          12\%}
          {
          \phantom{0}5\%}
          {
          \phantom{1}6\%}
          {
          77\%}
          {\phantom{0}1200}
          {
          82\%}
          {
          18\%}
          {\phantom{0}5600}
          \\
          
          \nameSigSigpDeg
          {d3-n7-p3}
          {35117}
          {10320}
          {\phantom{0}3840}
          \FquatreSatMatrixSPFGLMTotalMapleBMapleFMapleT
          {
          52\%}
          {
          \phantom{0}9\%}
          {
          \phantom{1}8\%}
          {
          31\%}
          {\phantom{0}1300}
          {
          95\%}
          {
          \phantom{0}5\%}
          {\phantom{0}1100}
          \\
          
          \nameSigSigpDeg
          {d3-n7-p4}
          {62104}
          {29760}
          {16800}
          \FquatreSatMatrixSPFGLMTotalMapleBMapleFMapleT
          {
          19\%}
          {
          \phantom{0}4\%}
          {
          11\%}
          {
          66\%}
          {12000}
          {
          91\%}
          {
          \phantom{0}9\%}
          {31000}
          \\
          
          \nameSigSigpDeg
          {d4-n5-p3}
          {15881}
          {\phantom{0}7560}
          {\phantom{0}4104}
          \FquatreSatMatrixSPFGLMTotalMapleBMapleFMapleT
          {
          41\%}
          {
          \phantom{0}6\%}
          {
          \phantom{1}5\%}
          {
          48\%}
          {\phantom{00}200}
          {
          94\%}
          {
          \phantom{0}6\%}
          {\phantom{00}620}
          \\
          
          \nameSigSigpDeg
          {d4-n5-p4}
          {19274}
          {14088}
          {11016}
          \FquatreSatMatrixSPFGLMTotalMapleBMapleFMapleT
          {
          32\%}
          {
          \phantom{0}4\%}
          {
          \phantom{1}4\%}
          {
          60\%}
          {\phantom{0}1000}
          {
          86\%}
          {
          14\%}
          {\phantom{0}4700}
          \\
          
          \nameSigSigpDeg
          {d4-n6-p2}
          {41189}
          {\phantom{0}8424}
          {\phantom{0}1944}
          \FquatreSatMatrixSPFGLMTotalMapleBMapleFMapleT
          {
          74\%}
          {
          \phantom{0}5\%}
          {
          \phantom{1}9\%}
          {
          12\%}
          {\phantom{00}590}
          {
          97\%}
          {
          \phantom{0}3\%}
          {\phantom{00}224}
          \\
          
          \nameSigSigpDeg
          {d4-n6-p3}
          {81068}
          {32184}
          {14904}
          \FquatreSatMatrixSPFGLMTotalMapleBMapleFMapleT
          {
          16\%}
          {
          \phantom{0}3\%}
          {
          12\%}
          {
          69\%}
          {11000}
          {
          92\%}
          {
          \phantom{0}8\%}
          {27000}
          \\
          
          \nameSigSigpDeg
          {d5-n4-p3}
          {\phantom{0}7235}
          {\phantom{0}4540}
          {\phantom{0}3040}
          \FquatreSatMatrixSPFGLMTotalMapleBMapleFMapleT
          {
          13\%}
          {
          24\%}
          {
          11\%}
          {
          52\%}
          {\phantom{0000}9}
          {
          93\%}
          {
          \phantom{0}7\%}
          {\phantom{00}180}
          \\
          
          \nameSigSigpDeg
          {d5-n5-p3}
          {54787}
          {27360}
          {15360}
          \FquatreSatMatrixSPFGLMTotalMapleBMapleFMapleT
          {
          \phantom{0}8\%}
          {
          \phantom{0}4\%}
          {
          \phantom{1}7\%}
          {
          81\%}
          {\phantom{0}5100}
          {
          92\%}
          {
          \phantom{0}8\%}
          {22000}
          \\
          \hline
        \end{tabular}
        \caption{Timings in seconds, \gb for $\llex$,
          positive-to-zero-dimensional case.\label{tab:colon_spfglm}}
      \end{minipage}
      }
    \end{center}
  \end{table}
\end{center}}


\bibliographystyle{abbrv}
\bibliography{biblio}

\end{document}